\theoremstyle{definition}
\theoremstyle{plain}
\newtheorem{theorem}{Theorem}
\newtheorem{lemma}{Lemma}
\newtheorem{cor}{Corollary}
\newcommand{\supp}{\operatorname{supp}}
\newcommand{\dl}{\mathrm{d}x\,}
\newcommand{\spr}[1]{\mathrm{tr}\left(#1\right)}
\newcommand{\dd}[1]{\mathrm{d}{#1}\,}
\title[The maximal excess charge for a family of RDMFT]{The maximal excess charge for a family of density-matrix-functional theories including Hartree-Fock and M\"uller theories}
\author[C.\ Kehle]{Christoph Kehle}
\date{\today}
\address{Mathe\-matisches
	Institut der Ludwig-Maximilians-Universit\"at M\"unchen \\ Theresienstra\ss e 39, 80333
	M\"unchen\\ Germany}
\curraddr{DPMMS, CMS, University of Cambridge \\ Wilberforce Road, Cambridge CB3 0WB, UK (as of Oct. 2016)}
\email{christoph.kehle@gmail.com}
\begin{document}
\begin{abstract}
We will give a proof that the maximal excess charge for an atom described by a family of density-matrix-functionals, which includes Hartree-Fock and M\"uller theories, is bounded by an universal constant. We will use the new technique introduced by Frank et al \cite{frank2016ionization}.
\end{abstract}
\maketitle
\tableofcontents
\section{Introduction}
A proof of the experimental fact that atoms can at most bind one extra electron is a major challenge in mathematical physics. Even a proof of the weaker bound $Z+C$ for the maximal electron number is still an open question in full Schr\" odinger theory and is known as the \textit{ionization conjecture}. 

Since full Schr\"odinger theory is analytically and numerically very complicated, approximate but simpler theories are often used to study atoms. One of the most accurate but still fairly simple approximate theories is the Thomas-Fermi-Dirac-Weizs\"acker theory, for which the ionization conjecture was proved very recently \cite{frank2016ionization}. Extending the method in \cite{frank2016ionization} and using Solovej's bootstrap argument in \cite{solovej2003ionization}, Frank, Nam and Van Den Bosch were able to provide a proof of the ionization conjecture for the more involved M\"uller theory \cite{frank2016maximal}, which relies - just like the Hartree-Fock functional - on one-particle density matrices rather than merely on electron densities. 

We shall see that this method can be used as well to prove the ionization conjecture for a family of density-matrix theories including M\"uller and Hartree-Fock theories. For any parameter $p\in[1/2,1]$ we  consider the \textit{Power functional}
\begin{align*}
&\mathcal{E}^p_Z (\gamma)= \spr{-\Delta \gamma- \frac{Z}{|x|} \gamma}  + D[\rho_\gamma] - X(\gamma^p), \\&
 D[\rho_\gamma] = \frac 12 \iint \frac{\rho_\gamma(x) \rho_\gamma(y)}{|x-y|}\dl\dd{y}, \,\;\; X(\gamma^p)= \frac 12 \iint \frac{|\gamma^p(x,y)|^2}{|x-y|}\dl\dd{y},
\end{align*}
which was introduced by Sharma et al \cite{sharma2008reduced}.  Note that $ p \in [1/2,1]$ interpolates between the M\"uller functional $\mathcal{E}^{\mathrm{M}}_Z (\gamma) = \mathcal{E}^{1/2}_Z (\gamma)$ and the Hartree-Fock functional $\mathcal{E}^{\mathrm{HF}}_Z (\gamma) = \mathcal{E}^{1}_Z (\gamma)$. 

At this point we want to motivate the choice of the exchange term $X(\gamma^p)$. By Lieb's variational principle, the ground state energy of the Hartree-Fock functional gives an upper bound for the Schr\"odinger ground state energy $E_S(N,Z)$. In \cite{frank2007muller} it is conjectured (indeed proven for $N=2$) that the ground state energy of the M\"uller functional is a lower bound of $E_S(N,Z)$. Numerical results also support this conjecture. Thus, it is no surprise that theories interpolating between these functionals give good numerical results and get more and more popular among theoretical chemists (e.g.\ \cite{kamil2015reduced,lathiotakis2009density,putaja2016validity}). Recall that the ground state energies of both, the M\"uller functional and the Hartree-Fock functional of a neutral atoms agree with the quantum ground state energy $E_S(Z,Z)$ to order $o(Z^{5/3})$\cite{bach1992error,siedentop2009asymptotic}. Thus, the same correct asymptotic behavior holds true for the Power functional.

For any parameter $p\in [1/2,1]$ we will consider the minimization problem
\begin{align}
E^p(N,Z) :=\inf\{\mathcal{E}^p_Z(\gamma) : \gamma\in \mathcal{I}, \spr{\gamma}=N \}\label{eq:minimizationprob}.
\end{align}

Here, $\mathcal{I}$ are fermionic one-particle density matrices, i.e., \begin{align*}\mathcal{I} :=\{\gamma\in \mathfrak{S}_1(L^2(\mathbb{R}^3)): 0\leq \gamma\leq 1, \Delta\gamma\in\mathfrak{S}_1(L^2(\mathbb{R}^3))\} \end{align*}
where $\mathfrak{S}_1(L^2(\mathbb{R}^3))$ denotes the trace class operators acting on $L^2(\mathbb{R}^3)$. The density is given by $\rho_\gamma(x) = \gamma(x,x)$, which can be made rigorous using the spectral decomposition of $\gamma$. 

If not stated differently, from now on, $p$ will be any number in $[1/2,1]$. All constants will be independent of $p$. Our main theorem will be 
\begin{theorem}[Ionization bound]\label{thm:ionization}
	There is a constant $C>0$ such that for all $Z>0$, if the minimization problem $E^p(N,Z)$ in \eqref{eq:minimizationprob} has a minimizer, then $N\leq Z+C$.
\end{theorem}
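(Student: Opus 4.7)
My plan is to adapt the Frank--Nam--Van Den Bosch strategy developed for the M\"uller functional to the entire one-parameter family, tracking all estimates so that they remain uniform in $p \in [1/2,1]$. The first step will be to derive a workable Euler--Lagrange equation for a minimizer $\gamma$ of $\mathcal{E}^p_Z$. Since $\gamma \mapsto \gamma^p$ is not Fr\'echet-differentiable in the usual sense when $p<1$, I would invoke the integral representation
\[
\gamma^p = \frac{\sin(\pi p)}{\pi} \int_0^\infty t^{p-1} \frac{\gamma}{\gamma + t}\, dt,
\]
combined with a Klein-type convexity/monotonicity argument, to produce a non-negative ``exchange'' operator $K^{(p)}_\gamma$ (reducing to the Fock exchange at $p=1$ and to the M\"uller exchange at $p=1/2$) such that the effective mean-field operator
\[
H_\gamma := -\Delta - \frac{Z}{|x|} + \rho_\gamma \ast \frac{1}{|x|} - K^{(p)}_\gamma
\]
satisfies, for some Lagrange multiplier $\mu \leq 0$, the usual projection identities $(H_\gamma - \mu)\gamma = 0$ on the part of the spectrum where $0<\gamma<1$, and the operator inequality $H_\gamma \geq \mu$ on the range of $\gamma$.

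Second, I would import the outer test-function technique of \cite{frank2016ionization,frank2016maximal}. Given a radial cut-off $\varphi_R$ localized in $\{|x|\geq R\}$, I would take the trace of the EL identity against $\varphi_R \gamma \varphi_R$ and use an IMS-type localization to absorb the kinetic commutators. Splitting the electronic Coulomb potential via Newton's theorem into inside-$R$ and outside-$R$ parts, the excess charge $N - \int_{|y|<R} \rho_\gamma$ is then controlled by a one-dimensional differential inequality in $R$ whose right-hand side involves only outer kinetic and outer exchange contributions.

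Third, I would close the argument with a Solovej-type bootstrap \cite{solovej2003ionization}. Using that the Power functional agrees with Thomas--Fermi theory to order $o(Z^{5/3})$ (as already quoted in the introduction), one can iteratively upgrade the bound on $\rho_\gamma$ on shells of radii $R \sim Z^{-s}$ for a finite decreasing sequence of exponents $s$, at each step feeding a sharper pointwise density bound back into the outer estimate. A judicious final choice of $R$, depending only on the universal constants that appear, then forces $N - Z \leq C$.

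The crux of the argument will be the control of the new non-local operator $K^{(p)}_\gamma$ uniformly in $p$. At the endpoints $p=1$ and $p=1/2$ one has clean closed-form kernels, but for intermediate $p$ the operator is only given through the resolvent integral above, and deriving Hardy-type bounds such as $K^{(p)}_\gamma \leq C(-\Delta)^{1/2}$, or suitable pointwise bounds on its integral kernel, with $C$ independent of $p$, is not immediate. Once such a uniform estimate is in hand, the remainder of the Frank--Nam--Van Den Bosch scheme should transfer essentially verbatim from the M\"uller case to the full family.
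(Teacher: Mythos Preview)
Your high-level architecture (compare to Thomas--Fermi via Solovej's bootstrap, then read off $N\leq Z+C$ from the screened potential estimate) matches the paper. However, the two load-bearing steps you propose differ from the paper's in ways that matter.

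First, the paper does \emph{not} derive or use an Euler--Lagrange equation. For $p\in(1/2,1)$ the functional is neither convex nor obviously differentiable in $\gamma$, and producing the projection identities you describe, with a well-defined exchange operator $K^{(p)}_\gamma$ satisfying uniform Hardy-type bounds, is precisely the obstacle the paper circumvents. Instead, the paper uses only the binding inequality
\[
\mathcal{E}^p_Z(\gamma_0)\leq \mathcal{E}^p_Z(\chi_1\gamma_0\chi_1)+\mathcal{E}^p_{Z=0}(\chi_2\gamma_0\chi_2),
\]
which follows from the monotonicity of $N\mapsto E^p(N,Z)$ and requires no variational calculus. The uniform-in-$p$ control of the exchange that you correctly identify as the crux is obtained not through bounds on $K^{(p)}_\gamma$ but through the operator inequality $\chi\gamma^p\chi\leq(\chi\gamma\chi)^p$, proved via the integral representation of $t\mapsto t^{1/p}$ for $1<1/p<2$, together with the elementary Cauchy--Schwarz/Hardy bound $X(\chi\gamma^p)\leq(\mathrm{tr}(-\Delta\chi\gamma\chi))^{1/2}(\int\chi^2\rho_\gamma)^{1/2}$.

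Second, the paper explicitly remarks that the ``multiplying by $|x|$'' / trace-against-$\varphi_R\gamma\varphi_R$ strategy you sketch does \emph{not} close for this class of functionals, for the same reason it fails for M\"uller. The exterior $L^1$-estimate is obtained instead by the Frank--Nam--Van Den Bosch half-plane method: one applies the binding inequality with cut-offs of the form $g_i((\nu\cdot\theta(x)-l)/s)$, integrates in $l$, and averages over $\nu\in\mathbb{S}^2$. So while your bootstrap outline is right, the mechanism that feeds it is different, and the EL route you propose would have to overcome both the differentiability issue and the known failure of the radial test-function argument.
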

The proof of this theorem works in the same manner as in \cite{frank2016maximal} for the M\"uller functional. Due to the fractional operator power $\gamma^p$ it is slightly more involved. The additional technical problems arising for $1/2 < p < 1$ are proven in Section~\ref{sec:powerfunct}. Apart from this, the main strategy is to compare with Thomas-Fermi theory as in the proof for the Hartree-Fock theory \cite{solovej1991proof,solovej2003ionization}. This is captured in 
\begin{theorem}[Screened potential estimate]
	\label{thm:mainthm2} Let $N\geq Z \geq 1$ and let $\gamma_0$ be a minimizer for $E^p(N,Z)$. Let $\rho^{\mathrm{TF}}$ be the Thomas-Fermi minimizer with $\int\rho^{\mathrm{TF}} = Z$. For every $r>0$, define the screened potentials by\begin{align*}
	\Phi_r(x) = \frac{Z}{|x|} - \int_{|y|\leq r} \frac{\rho_0(y)}{|x-y|}\dd{y}, \;\; \Phi_r^{\mathrm{TF}} = \frac{Z}{|x|}-\int_{|y|\leq r}\frac{\rho^{\mathrm{TF}}(y)}{|x-y|} \dd{y}.
	\end{align*}
	Then there are universal constants $C>0, \epsilon>0$ such that \begin{align*}
	\left| \Phi_{|x|}(x) - \Phi_{|x|}^{\mathrm{TF}}(x) \right| \leq C ( |x|^{-4 +\epsilon} + 1)
	\end{align*}
for all $|x|>0$.
\end{theorem}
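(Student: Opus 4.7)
The plan is to adapt Solovej's iterative (bootstrap) scheme, as refined by Frank--Nam--Van Den Bosch for the M\"uller functional, to the full Power family $p\in[1/2,1]$. The target estimate $|\Phi_{|x|}-\Phi_{|x|}^{\mathrm{TF}}|\le C(|x|^{-4+\epsilon}+1)$ cannot be proved directly; instead one starts from a crude decay bound of the form $C|x|^{-4+\alpha_0}$ with some $\alpha_0\in(0,4)$ (which follows from a priori control of $\rho_0$ together with the Sommerfeld asymptotics $\Phi_{|x|}^{\mathrm{TF}}(x)\sim c|x|^{-4}$ for the neutral TF atom) and gradually reduces the defect exponent through a recursion $\alpha_{k+1}=f(\alpha_k)<\alpha_k$ until $\alpha_k\le\epsilon$ after finitely many steps.

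Each inductive step splits into an \emph{inside} and an \emph{outside} comparison. The inside step is driven by the variational inequality at the minimizer: smoothly cut off $\gamma_0$ to the ball $\{|x|\le Kr\}$ and compare the resulting energy with that obtained from localizing the Thomas--Fermi density $\rho^{\mathrm{TF}}$ on the same ball. Combining the minimality of $\gamma_0$ with the coercivity of the TF functional (strict convexity of the direct Coulomb term and the Lieb--Thirring estimate $\spr{-\Delta\gamma}\ge c\int\rho_\gamma^{5/3}$) forces $\rho_0$ to be $L^1$-close to $\rho^{\mathrm{TF}}$ on the shell $\{|x|\sim r\}$. The outside step uses the multipole expansion of $\Phi_r-\Phi_r^{\mathrm{TF}}$ together with the inductive hypothesis on smaller scales, and the Solovej/Frank--Nam--Van Den Bosch exterior $L^1$-bound for $\rho_0$, to bound the remainder by $Cr^{-4+\alpha_{k+1}}$.

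The principal new obstacle compared with Hartree--Fock ($p=1$) and M\"uller ($p=1/2$) is the handling of the fractional exchange term $X(\gamma^p)$ for $1/2<p<1$. In both limiting cases $X(\gamma^p)$ admits a clean representation, either as a Hilbert--Schmidt norm of the kernel (HF) or a Coulomb self-energy of the density (M\"uller), that ties it directly to $\rho_\gamma$. For general $p$ one has to develop, as the paper does in Section~\ref{sec:powerfunct}, a Lieb--Oxford-type bound
\[ X(\gamma^p)\le C\int\rho_\gamma(x)^{4/3}\dl \]
together with localization identities for $\gamma^p$ under smooth cut-offs, relying on trace-ideal inequalities for operator concave functions. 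Once this is in place, the exchange contribution on each shell can be absorbed into the direct Coulomb term and does not interfere with the bootstrap.

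With these tools in place, the recursion $\alpha_{k+1}<\alpha_k$ is iterated finitely many times to push $\alpha_k$ below $\epsilon$, yielding the claimed bound. I expect the hardest part of the argument is not the bootstrap itself but the clean shell-by-shell decomposition of $X(\gamma^p)$ when $\gamma^p$ is neither an orthogonal projection nor a square root; this is where the fractional-power estimates carry the weight of the proof.
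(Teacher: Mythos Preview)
Your proposal captures the broad strategy---bootstrap comparison with Thomas--Fermi---but mischaracterizes several load-bearing pieces, and at least one of the claimed tools is not available.

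First, the exchange control. You assert a Lieb--Oxford-type bound $X(\gamma^p)\le C\int\rho_\gamma^{4/3}$. The paper does not prove this, and for $1/2<p<1$ it is not clear such a bound holds: $\iint|\gamma^p(x,y)|^2\,dy=\gamma^{2p}(x,x)$ is the density of $\gamma^{2p}$, not of $\gamma$, so the usual Lieb--Oxford argument does not apply. What the paper actually uses are two estimates: the operator inequality $\chi\gamma^p\chi\le(\chi\gamma\chi)^p$ (proved via the integral representation $C^{1/p}=c_p\int_0^\infty C^2(C+z)^{-1}z^{1/p-2}\,dz$ and a resolvent monotonicity argument), and the Hardy-based bound $X(\chi\gamma^p)\le\bigl(\mathrm{tr}(-\Delta\chi\gamma\chi)\bigr)^{1/2}\bigl(\int\chi^2\rho_\gamma\bigr)^{1/2}$. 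These feed the exchange into the \emph{kinetic} energy, not the direct Coulomb term, and this is what makes the IMS-type localization and the exterior energy comparison go through.

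Second, the structure of the bootstrap. The paper does not iterate on a defect exponent $\alpha_k\to\alpha_{k+1}$. The exponent $\epsilon$ is fixed once and for all; what is iterated is the \emph{range of validity}: an initial step gives the bound for $|x|\le Z^{-1/3}$, and the iterative step shows that validity up to $D$ implies validity up to $D^{1-\delta}$, pushing $D$ toward a universal constant. Your inside/outside description is also inverted: the localization is to the \emph{exterior} $\{|x|>r\}$, and the comparison is between $\mathcal{E}_r^{\mathrm{RHF}}(\eta_r\gamma_0\eta_r)$ and an exterior Thomas--Fermi problem with potential $\chi_r^+\Phi_r$, yielding $D[\chi_r^+\rho_0-\rho_r^{\mathrm{TF}}]\le Cr^{-7+b}$. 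The passage from this Coulomb-norm bound to the pointwise potential estimate is not a multipole expansion but the inequality $\bigl|\int_{|y|<|x|}f(y)|x-y|^{-1}\,dy\bigr|\le C\|f\|_{5/3}^{5/6}(|x|D[f])^{1/12}$.
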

The significance of the power $|x|^{-4+\epsilon}$ is that $\Phi_{|x|}^{\mathrm{TF}} \sim |x|^{-4}$ for small $|x|$. 

Similar to \cite{solovej2003ionization,frank2016ionization,frank2016maximal}, we have the following asymptotic estimate for the radii of ``infinite atoms''.
\begin{theorem}[Radius estimate]\label{thm:mainthm3}
Let $\gamma_0$ be a minimizer of $E^p(N,Z)$ for some $N \geq Z \geq 1$. For $\kappa>0$, we define the radius $R(N,Z,\kappa)$ as the largest number such that\begin{align*}
\int_{|x| \geq R(N,Z,\kappa)} \rho_{\gamma_0}(x) \dl = \kappa. \end{align*}
Then there are universal constants $C>0,\epsilon>0$ such that \begin{align*}
\limsup_{N\geq Z \to \infty} 
\left| R(N,Z,\kappa) - B^{\mathrm{TF}} \kappa^{-1/3} \right| \leq C \kappa^{-\frac 13 - \epsilon}
\end{align*}
for all $\kappa\geq C$, where $B^{\mathrm{TF}} = 5 c^{\mathrm{TF}} \left( \frac{4}{3\pi^2}
\right)^{1/3}$.
\end{theorem}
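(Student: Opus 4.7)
\emph{Proof plan.}
The plan is to deduce the radius asymptotics from the screened-potential estimate of Theorem~\ref{thm:mainthm2}, by converting it into a comparison of outside charges and inverting via the Sommerfeld asymptotics of the neutral Thomas--Fermi atom. This follows the same route as in \cite{solovej2003ionization} for Hartree--Fock and in \cite{frank2016maximal} for the M\"uller functional; since all the $p$-dependent analysis is already encapsulated in Theorems~\ref{thm:ionization} and \ref{thm:mainthm2}, no new ideas specific to the Power functional should be required at this stage.

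The first step is to pass from the pointwise bound on $\Phi_{|x|}-\Phi_{|x|}^{\mathrm{TF}}$ to a bound on the outside charges. Write $q(r):=\int_{|y|>r}\rho_{\gamma_0}(y)\dd{y}$ and $q^{\mathrm{TF}}(r):=\int_{|y|>r}\rho^{\mathrm{TF}}(y)\dd{y}$; Newton's theorem gives the explicit spherical averages
\[
\frac{1}{4\pi r^{2}}\int_{|x|=r}\Phi_{r}(x)\,\dd{\sigma(x)}=\frac{(Z-N)+q(r)}{r},\qquad
\frac{1}{4\pi r^{2}}\int_{|x|=r}\Phi_{r}^{\mathrm{TF}}(x)\,\dd{\sigma(x)}=\frac{q^{\mathrm{TF}}(r)}{r},
\]
where neutrality $\int\rho^{\mathrm{TF}}=Z$ is used in the second identity. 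Averaging Theorem~\ref{thm:mainthm2} over the sphere $\{|x|=r\}$, multiplying by $r$, and using the ionization bound $|N-Z|\leq C$ from Theorem~\ref{thm:ionization} yields
\[
\bigl|q(r)-q^{\mathrm{TF}}(r)\bigr|\leq C\bigl(r^{-3+\epsilon}+r+1\bigr),
\]
which for $r$ bounded (the only regime relevant to $\kappa\geq C$) simplifies to $C(r^{-3+\epsilon}+1)$.

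Next I would feed in the Sommerfeld asymptotics of the neutral Thomas--Fermi atom: as $Z\to\infty$ the TF density converges on $\mathbb{R}^{3}\setminus\{0\}$ to the scale-invariant Sommerfeld profile $c|x|^{-6}$, with the quantitative expansion $q^{\mathrm{TF}}(r)=(B^{\mathrm{TF}})^{3}r^{-3}(1+O(r^{\nu}))$ for some $\nu>0$, uniformly on compact subsets. This is a purely TF input, taken from \cite{solovej1991proof,solovej2003ionization}. Setting $r=R(N,Z,\kappa)$ so that $q(r)=\kappa$, combining with the previous display and inverting $q^{\mathrm{TF}}$ (whose derivative satisfies $|(q^{\mathrm{TF}})'(r)|\geq c\,r^{-4}$ in the relevant range) produces $|R(N,Z,\kappa)-B^{\mathrm{TF}}\kappa^{-1/3}|\leq C\kappa^{-1/3-\epsilon'}$ for some new $\epsilon'>0$ after taking the $\limsup$ in $Z$.

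The main obstacle I expect is bookkeeping rather than conceptual: one has to thread the inversion through two different sources of error (the $r^{-3+\epsilon}+1$ coming from Theorem~\ref{thm:mainthm2} and the $r^{\nu}$ Sommerfeld correction), and first bootstrap from a crude a priori estimate of the form $R(N,Z,\kappa)\leq C\kappa^{-1/3}$ (itself extracted from the same outside-charge comparison used loosely) to ensure that the regime in which the Sommerfeld inversion is valid is actually attained whenever $\kappa\geq C$. All of this is $p$-independent once Theorems~\ref{thm:ionization} and \ref{thm:mainthm2} are in hand, so the output $\epsilon$ in Theorem~\ref{thm:mainthm3} is inherited, up to a harmless shrinking, from the one in Theorem~\ref{thm:mainthm2}.
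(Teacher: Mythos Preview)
Your proposal is correct and follows essentially the same route as the paper's proof: spherical averaging via Newton's theorem to convert the screened-potential estimate into $|q(r)-q^{\mathrm{TF}}(r)|\leq Cr^{-3+\epsilon}$ (the paper invokes Lemma~\ref{lem:screenedpot} for $|x|\le D$ rather than Theorem~\ref{thm:mainthm2}, but this is immaterial), then the Sommerfeld asymptotics $q^{\mathrm{TF}}(r)=(B^{\mathrm{TF}}/r)^{3}(1+O(r^{\nu}))$ from \cite{frank2016ionization}, and finally an algebraic inversion at $r=R_\kappa$ after localizing $R_\kappa$ to the admissible range via the endpoint values. The paper carries out the inversion by setting $t=\kappa^{1/3}R_\kappa/B^{\mathrm{TF}}$ and using $|t-1|\leq|t^{3}-1|/t^{\epsilon}$ rather than differentiating $q^{\mathrm{TF}}$, but this is a cosmetic difference.
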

Theorem~\ref{thm:ionization} and Theorem~\ref{thm:mainthm3}  will be direct consequences of  Theorem~\ref{thm:mainthm2}. To prove Theorem~\ref{thm:mainthm2} we use Solovej's bootstrap argument. As in \cite{frank2016maximal}, the ``multiplying by $|x|$'' strategy is not working. This strategy will be replaced - as in \cite{frank2016ionization} and \cite{frank2016maximal} - by a method in which $\mathbb{R}^3$ will be split into half-planes followed by an averaging process, cf.\ Section~\ref{sec:l1est}.

Having non-existence of a minimizer for $N\geq Z+C$, the natural question of existence for a minimizer for $N\leq Z$ arises. So far, this is open. In \cite{kehlemasterthesis2016} it was shown that for any $N>0$ and $Z> 1/2$, the renormalized Power functional \begin{align*}\hat{\mathcal{E}}^p_Z (\gamma):= \mathcal{E}^p_Z(\gamma) - E^p(\spr{\gamma},0)\end{align*} possesses a minimizer varying over $\spr{\gamma} \leq N$. The same method as in \cite{frank2007muller} was used. However, a proof for the existence of a minimizer for $E^p(N,Z)$ for $N\leq Z$ was not given. 

	\textit{Convention.} Throughout the paper we will assume that $E^p(N,Z)$ has a minimizer $\gamma_0$ for some $N\geq Z$. The corresponding density will be denoted by $\rho_0 = \rho_{\gamma_0}$. Note that in contrast to a minimizer of the M\"uller functional ($p=1/2$), $\rho_0$ need not to be spherically symmetric since the convexity of the functional is lost for $p>1/2$. 
\\	~\\
\textbf{Acknowledgement.} I gratefully thank R.~L.~Frank and P.\ T.\ Nam for the early communication of the results in \cite{frank2016maximal} as well as H.~Siedentop for the very helpful advice and many fruitful discussions.
\section{The Power Functional}\label{sec:powerfunct}
We will start by proving properties of the Power functional.
\subsection{General facts}
First, we like to note that the ground state energy $E^p(N,Z)$ is non-decreasing in $p$. This can be seen by writing the exchange correlation term as \begin{align}\label{eq:fefferman}
X(\gamma^p) = \frac 12 \int_\Lambda \mathrm{tr}( \gamma^p B_\lambda \gamma^p B_\lambda^\ast ) \dd{\lambda},
\end{align}
where $\Lambda$ is a parameter space. This formula can be derived using Fefferman-de la Llave formula \cite[page 4]{frank2007muller}. Comparing $E^p(N,Z)$ with the Hartree-Fock energy \begin{align*}0\geq E^{\mathrm{HF}}(N,Z) = E^1(N,Z)\geq E^p(N,Z)\end{align*} already shows that $E^p(N,Z)\leq 0$ for any $p\in[1/2,1]$, $N>0, Z\geq 0$. Indeed, using subadditivity of the ground state energy for free electrons and a scaled hydrogen minimizer, it can be shown that $E^p(N,0)<0$ for any $1/2\leq p<1$ \cite{kehlemasterthesis2016}, whereas $E^{\mathrm{HF}}(N,0)=0$. This means that free electrons have negative (binding) energy for $1/2\leq p <1$.

Now, we want to prove that $E^p(N,Z)$ is non-increasing in $N$. To this end, we first show that $E^p(N,Z)$ can be computed varying only over fermionic density matrices with compactly supported integral kernel.
  \begin{lemma}
	Let $Z\geq0$, $N>0$ and $\gamma\in\mathcal{I}$ with $\spr{\gamma}=N$. Then, for any  $\epsilon>0$ there exists a  $\hat\gamma\in\mathcal{I}$ with a compactly supported integral kernel,  $\spr{
		\hat \gamma}=N$ and 
	\begin{align*}
	|\mathcal{E}_Z^p(\gamma) - \mathcal{E}_Z^p(\hat\gamma) |\leq \epsilon.
	\end{align*} 
	\begin{proof}
		Let $\gamma\in\mathcal{I}$ with $\spr{\gamma} =N$ be given. For $R>0$ define $\gamma_R:=\chi_R \gamma \chi_R$,
		where $\chi_R(x) = \chi\left(\frac{|x|}{R}\right)$ and $\chi:[0,\infty)\to[0,1]$ fulfills the following properties:
		\begin{itemize}
			\item{$\chi$ is non-increasing and smooth}
			\item{$\chi(x) = 1$ for $x\leq 1$}
			\item{$\chi(x) = 0$ for $x\geq 2$}
		\end{itemize}
	Hence, $0\leq \gamma_R\leq \gamma \leq 1$ and  $\spr{\gamma_R}\leq \spr{\gamma}$. To guarantee the correct normalization, we define
		\begin{align*}
		\hat\gamma_R (x,y) = \gamma_R(x,y) + c_R \chi_R(x-v)\chi_R(y-v)=:\gamma_R(x,y) + \delta_R(x,y),
		\end{align*}
		where $v\in \mathbb{R}^3$ is arbitrary, but fixed with $|v|>5R$. Also define $c_R $ by
		\begin{align}c_R:= \frac{\spr\gamma - \spr {\gamma_R}}{\int \chi_R(x)^2\dd{x}}.\label{defcr}
		\end{align}
	By construction, we have  $\mathrm{tr}{\hat\gamma_R}= \mathrm{tr}{\gamma}$ and $0\leq \gamma_R\leq 1$ for sufficiently large $R>0$. 
		
	First, we prove that $\gamma_R\to \gamma$ in the space of trace class operators $\mathfrak{S}_1$. It suffices to show that for $R\to\infty$,  $\|\gamma_R\|_{\mathfrak{S}_1} \to \|\gamma\|_{\mathfrak{S}_1}$ and $\gamma_R\to \gamma$ in the weak operator topology. \cite[page\ 47]{arazy1981more}.
		
		Convergence in the weak operator topology follows by Lebesgue's theorem, the pointwise convergence $\lim_{R\to\infty}\chi_R(x) \to 1$ and the fact that \begin{align*}\iint |\gamma(x,y) \psi(x) \phi(y)| \dl \dd{y} \leq \|\gamma\|_{HS}  \|\psi\|_{L^2} \|\phi\|_{L^2} \leq  \sqrt{ \mathrm{tr}\gamma} \|\psi\|_{L^2} \|\phi\|_{L^2}\end{align*} for any $\phi,\psi \in L^2(\mathbb{R}^3)$.
	
Convergence of norms also follows directly using Lebesgue's theorem and the convergence in weak operator sense.
		\begin{align*}
\lim_{R\to\infty} \|\gamma_R\|_{\mathfrak{S}_1}&=\lim_{R\to\infty}\mathrm{tr}{\gamma_R}=\lim_{R\to\infty} \sum_{i\in\mathbb{N}} \langle\phi_i, \gamma_R\phi_i\rangle \\ &= \sum_{i\in\mathbb{N}} \lim_{R\to\infty} \langle \phi_i,\gamma_R\phi_i\rangle = \sum_{i\in\mathbb{N}} \langle\phi_i,\gamma\phi_i\rangle = \spr{\gamma}=\|\gamma\|_{\mathfrak{S}_1}.
		\end{align*}
		
		Hence, $\gamma_R\to\gamma$ in $\mathfrak{S}_1$ as $R\to\infty$. To show $\hat{\gamma}_R \to \gamma$, it suffices to prove $\|\delta_R\|_{\mathfrak{S}_1}\xrightarrow{R\to\infty} 0$. This holds true, since 
		\begin{align*}
		\|\delta_R\|_{\mathfrak{S}_1} = c_R \int \chi_R(x-v) \chi_R(x-v) \dl = c_R \int \chi_R(x)^2 \dl = \mathrm{tr}{\gamma}-\mathrm{tr}{\gamma_R}
		\end{align*}
		and therefore, $\|\delta_R\|_{\mathfrak{S}_1} \to 0$.
		
		Now, we are in the position to prove $\mathcal{E}_Z^p(\hat\gamma_R) \xrightarrow{R\to\infty} \mathcal{E}_Z^p(\gamma)$. We start with the kinetic term
		\begin{align*}
		\mathrm{tr}({-\Delta\hat\gamma_R}) = \mathrm{tr}({-\Delta\gamma_R  -\Delta\delta_R}) = \mathrm{tr}{(-\Delta\chi_R\gamma\chi_R - \Delta\delta_R)}.
		\end{align*}
		For the first term $\mathrm{tr}(-\Delta\chi_R\gamma\chi_R)=\|\nabla\chi_R\gamma^{\frac{1}{2}}\|_{HS}^2$, by Lebesgue's theorem,  we have
		\begin{align*}
		\|\nabla\chi_R\gamma^{\frac{1}{2}}\|_{HS}^2 & =\iint |\gamma^{\frac 12}(x,y) \nabla\chi_R (x) + \chi_R(x) \nabla_x\gamma^{\frac 12}(x,y)|^2 \dd{x}\dd{y}\nonumber \\ &= \iint |\gamma^{\frac 12}(x,y)R^{-1} \nabla\chi (x) + \chi_R(x) \nabla_x\gamma^{\frac 12}(x,y)|^2 \dd{x}\dd{y}\nonumber \\ & \xrightarrow{R\to\infty} \iint |\nabla_x\gamma^{\frac 12} (x,y)|^2 \dd{x}\dd y = \mathrm{tr}({-\Delta \gamma}).
		\end{align*}
	The second term can be computed as follows
		\begin{align*}
		\spr{-\Delta\delta_R} = c_R \int \chi_R(x-v) -\Delta\chi_R(x-v) \dl = c_R R \int |\nabla \chi|^2(x) \dl\xrightarrow{R\to\infty} 0,\label{432}
		\end{align*}
		where we used that $c_R = \mathcal{O}(R^{-3})$.
		
		Using analogous arguments the convergence of the Coulomb term and the Hartree energy is straightforward to check. We will omit this here and finish by proving the convergence of the exchange term.		
		We will use Hardy's inequality $-\Delta_x \geq \frac{1}{4}|x-y|^{-2}$ to get
		\begin{align}\nonumber
		|X(\gamma^p) - X(\hat\gamma_R^p) |& \leq  \iint\left|\frac{|\gamma^p(x,y)|^2-|\hat\gamma_R^p(x,y)|^2}{2|x-y|}\right|\dl\dd{y} \\ & \leq \iint \frac{\left|\gamma^p(x,y) - \hat\gamma_R^p(x,y)\right|(|\gamma^p(x,y)|+|\hat\gamma_R^p(x,y)|)}{2|x-y|}\dl\dd{y} \nonumber
		\\ & \leq\left( \iint |\gamma^p(x,y) - \hat\gamma_R^p(x,y)|^2 \dd{x}\dd{y} \iint \frac{|\gamma^p(x,y)|^2+|\hat\gamma_R(x,y)|^2}{4|x-y|^2}\dl\dd{y}\right)^{1/2}\nonumber \\ & = \|\gamma^p-\hat\gamma_R^p\|_{\mathfrak{S}_2}\left( \iint |\nabla_x \gamma^p(x,y)|^2 + |\nabla_x \hat\gamma^p_R(x,y)|^2\dl\dd{y}\right)^{1/2}\nonumber \\ & =\nonumber \|\gamma^p-\hat\gamma_R^p\|_{\mathfrak{S}_2} \left( \spr{-\Delta\gamma^{2p}} + \spr{-\Delta\hat\gamma_R^{2p}} \right)^{1/2} \\ & \leq \|\gamma^p-\hat\gamma_R^p\|_{\mathfrak{S}_2} \left( \spr{-\Delta\gamma} + \spr{-\Delta\hat\gamma_R}\right)^{1/2}.
		\end{align}
		Since $\mathrm{tr}({-\Delta\hat\gamma_R)}\xrightarrow{R\to\infty}\mathrm{tr}({-\Delta\gamma})$, it suffices to show that $\hat\gamma_R^p \to \gamma^p$ in the Hilbert-Schmidt norm. We have shown that $\hat\gamma_R\to\gamma$ in $\mathfrak{S}_1$  and by the continuity of the embedding $\mathfrak{S}_1 \hookrightarrow \mathfrak{S}_{2p}$, we deduce that $\hat\gamma_R \to \gamma$ in $\mathfrak{S}_{2p}$. Moreover, the map $A\mapsto A^p, \mathfrak{S}_{2p} \to \mathfrak{S}_{2p} $ is continuous for $A\geq 0$ \cite[page 28]{simon2010trace}. This implies $\hat\gamma_R^{p}\to\gamma^p$ in the Hilbert-Schmidt norm and concludes the proof.
	\end{proof}
	\label{lem:compact}
\end{lemma}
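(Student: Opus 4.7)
The plan is a truncate-and-compensate construction. Fix a smooth non-increasing cutoff $\chi:[0,\infty)\to[0,1]$ with $\chi\equiv 1$ on $[0,1]$ and $\chi\equiv 0$ on $[2,\infty)$, set $\chi_R(x)=\chi(|x|/R)$, and put $\gamma_R:=\chi_R\gamma\chi_R$. This kernel is compactly supported and $0\leq\gamma_R\leq\gamma\leq 1$, but the trace drops. To recover it, pick $v\in\mathbb{R}^3$ with $|v|>5R$ and add $\delta_R(x,y):=c_R\chi_R(x-v)\chi_R(y-v)$, where $c_R=(N-\mathrm{tr}\,\gamma_R)/\!\int\chi_R^2$. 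Since $\int\chi_R^2\sim R^3$, one has $c_R=O(R^{-3})$ so that for $R$ large the rank-one perturbation keeps $\hat\gamma_R:=\gamma_R+\delta_R$ inside $\mathcal{I}$, with $\mathrm{tr}\,\hat\gamma_R=N$ and $0\leq\hat\gamma_R\leq 1$.

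Next I would verify each term of $\mathcal{E}_Z^p$ is continuous along $\hat\gamma_R\to\gamma$. The backbone is $\mathfrak{S}_1$-convergence: weak operator convergence $\gamma_R\to\gamma$ follows from dominated convergence using $\chi_R\to 1$ pointwise, and convergence of trace norms $\mathrm{tr}\,\gamma_R\to\mathrm{tr}\,\gamma$ follows by interchanging the sum and limit in the spectral expansion; combined with $\|\delta_R\|_{\mathfrak{S}_1}=N-\mathrm{tr}\,\gamma_R\to 0$ this gives $\hat\gamma_R\to\gamma$ in $\mathfrak{S}_1$. The kinetic term is handled by writing $\mathrm{tr}(-\Delta\gamma_R)=\|\nabla\chi_R\gamma^{1/2}\|_{HS}^2$, expanding via the product rule, and observing that the cross piece is $O(R^{-1})$ while $\chi_R\nabla\gamma^{1/2}\to\nabla\gamma^{1/2}$ by dominated convergence; the $\delta_R$ contribution is $c_R R\int|\nabla\chi|^2=O(R^{-2})$. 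The attractive Coulomb term is continuous by trace-norm convergence together with Kato's inequality, and the Hartree term is routine.

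The step I expect to be the delicate one is the exchange term $X(\hat\gamma_R^p)$ for $p\in(1/2,1)$, since fractional powers do not interact well with spatial truncation. The strategy is to use Cauchy--Schwarz together with Hardy's inequality $|x-y|^{-2}\leq -4\Delta_x$ to estimate
\begin{align*}
|X(\gamma^p)-X(\hat\gamma_R^p)|\leq \|\gamma^p-\hat\gamma_R^p\|_{\mathfrak{S}_2}\bigl(\mathrm{tr}(-\Delta\gamma^{2p})+\mathrm{tr}(-\Delta\hat\gamma_R^{2p})\bigr)^{1/2},
\end{align*}
then dominate $\mathrm{tr}(-\Delta\gamma^{2p})\leq\mathrm{tr}(-\Delta\gamma)$ (valid for $2p\in[1,2]$) so that the second factor stays bounded by the kinetic convergence already proven. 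Convergence of the first factor is obtained by chaining two continuity statements: the embedding $\mathfrak{S}_1\hookrightarrow\mathfrak{S}_{2p}$ (since $2p\geq 1$) turns $\hat\gamma_R\to\gamma$ into $\mathfrak{S}_{2p}$-convergence, and then continuity of the map $A\mapsto A^p$ on the positive cone of $\mathfrak{S}_{2p}$ upgrades this to $\hat\gamma_R^p\to\gamma^p$ in $\mathfrak{S}_2$. Choosing $R$ large then makes $|\mathcal{E}_Z^p(\gamma)-\mathcal{E}_Z^p(\hat\gamma_R)|\leq\epsilon$.
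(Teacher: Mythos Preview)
Your proposal is correct and matches the paper's proof essentially step for step: the same truncate-and-compensate construction $\hat\gamma_R=\chi_R\gamma\chi_R+c_R\chi_R(\cdot-v)\chi_R(\cdot-v)$, the same $\mathfrak{S}_1$-convergence via weak-operator plus norm convergence, the same kinetic computation, and the same exchange-term estimate via Cauchy--Schwarz, Hardy, $\mathrm{tr}(-\Delta\gamma^{2p})\leq\mathrm{tr}(-\Delta\gamma)$, and continuity of $A\mapsto A^p$ on the positive cone of $\mathfrak{S}_{2p}$.
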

\begin{lemma}[$N\mapsto E^p(N,Z)$ is non-increasing]\label{lem:non-inc}Given $N>0,Z\geq 0$ and any $M>0$. Then,
 \begin{align}
E^p(N+M,Z) \leq E^p(N,Z) + E^p(M,0)\leq E^p(N,Z).\label{eq:non-inc}
\end{align}
Note that $N\mapsto E^p(N,Z)$ is decreasing for $1/2\leq p <1$ since $E^p(M,0)<0$ in these cases.
\begin{proof}It suffices to prove the first inequality. For a contradiction assume that there exists a $\delta >0$ such that \begin{align}E^p(N+M,Z)> E^p(N,Z)+E^p(M,0)+\delta\label{eq:tocontradict}\end{align} for some $N>0$, $Z\geq 0 $ and $M>0$. Then, there exists a density matrix $\gamma_N$ with $\spr{ \gamma_N} = N$ such that $E^p(N,Z) > \mathcal{E}^p_Z(\gamma_N) - \delta/3$. By Lemma~\ref{lem:compact} we can assume that $\gamma_N(x,y)$ has compact support. Choose also a density matrix $\gamma_M$ with $\spr{\gamma_M}=M$ such that $\mathcal{E}^p_{Z=0}(\gamma_M) < E^p(M,0) +{\delta}/{3}$. We can again assume without loss of generality that $\gamma_M(x,y)$ is compactly supported.
	Denote by $R>0$ the radius of a ball in $\mathbb{R}^6$ which contains the supports of $\gamma_N(x,y)$ and $\gamma_M(x,y)$. 
	
	For an $\epsilon>0$, we define a translated operator by \begin{align*}
	\tilde{\gamma}_M (x,y):= \gamma_M(x-c,y-c)
	\end{align*}
	for a fixed $c\in\mathbb{R}^3$ satisfying $|c|> 2R + \frac 1 \epsilon$. Moreover, we define a trial density operator $\gamma_{N+M}$ to be $\gamma_{N+M}:= \gamma_N + \tilde \gamma_M$. By construction, we have 
	\begin{align*}
&\gamma_N \tilde \gamma_M = \tilde \gamma_M \gamma_N = 0,\;0 \leq \gamma_{N+M}\leq 1, \\ &\spr{\gamma_{N+M}} = N+M \text{ and }X(\gamma_{N+M}^p )= X(\gamma_N^p) + X(\tilde \gamma_M^p).
	\end{align*}
	Furthermore, since $\gamma_N\leq \gamma_{N+M}$, \begin{align*}
	\spr{ - \frac{Z}{|x|}\gamma_{N+M}} \leq 	\spr{ - \frac{Z}{|x|}\gamma_{N}}.
	\end{align*}
	For the Hartree term it is easy to see that
	\begin{align*}
	\iint \frac{ \gamma_{N}(x,x) \tilde \gamma_M(y,y)}{2|x-y|} \dl\dd{y} \leq \frac{\epsilon}{2} \iint  \gamma_{N}(x,x) \tilde \gamma_M(y,y) \dl \dd{y} = \frac{\epsilon N M }{2}
	\end{align*}
	and analogously\begin{align*}
	\iint \frac{ \gamma_{\mathrm{M}}(x,x) \tilde \gamma_N(y,y)}{2|x-y|} \dl\dd{y} \leq \frac{\epsilon}{2} \iint  \gamma_{\mathrm{M}}(x,x) \tilde \gamma_N(y,y) \dl \dd{y} = \frac{\epsilon N M }{2}.
	\end{align*}
Inserting everything into \eqref{eq:tocontradict} yields\begin{align*}
&\frac{2\delta}{3} + \mathcal{E}^p_Z(\gamma_N) + E^p(M,0 )< E^p(N+M,Z) \leq \mathcal{E}^p_Z(\gamma_{N+M})  \\&\leq \spr{-\Delta \gamma_N}+ \spr{-\Delta \tilde \gamma_M} + \spr{-\frac{Z}{|x|}\gamma_N} + D[\rho_{\gamma_N}] + D[\rho_{\tilde \gamma_{\mathrm{M}}}] + \epsilon NM - X(\gamma_N^p) - X(\tilde \gamma_M^p) \\ &= \mathcal{E}^p_Z(\gamma_N)+ \mathcal{E}^p_{Z=0}(\gamma_M) + \epsilon NM \\ &\leq \mathcal{E}^p_Z(\gamma_N)+ E^p(M,0) + \epsilon NM + \frac \delta 3,
\end{align*}
where we have used the translation invariance of $\mathcal{E}^p_{Z=0}$. Choosing $\epsilon = \delta/(3NM)$  gives a contradiction.
\end{proof}
\end{lemma}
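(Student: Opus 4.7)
The second inequality is immediate: earlier in the excerpt it was shown that $E^p(M,0)\leq E^{\mathrm{HF}}(M,0)=0$, so $E^p(N,Z)+E^p(M,0)\leq E^p(N,Z)$.

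For the first inequality, the natural strategy is subadditivity via a well-separated trial state. I would argue by contradiction: suppose $E^p(N+M,Z)>E^p(N,Z)+E^p(M,0)+\delta$ for some $\delta>0$. Pick near-minimizers $\gamma_N$ of $E^p(N,Z)$ and $\gamma_M$ of $E^p(M,0)$, each within $\delta/3$ of the respective infimum. Using the preceding Lemma~\ref{lem:compact}, I may assume both $\gamma_N(x,y)$ and $\gamma_M(x,y)$ are supported in a common ball of radius $R$ in $\mathbb{R}^6$. For a parameter $\epsilon>0$ (to be chosen last), fix $c\in\mathbb{R}^3$ with $|c|>2R+1/\epsilon$ and put $\tilde\gamma_M(x,y):=\gamma_M(x-c,y-c)$, so that $\tilde\gamma_M$ is supported away from $\gamma_N$. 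As trial density matrix for $E^p(N+M,Z)$ I take $\gamma_{N+M}:=\gamma_N+\tilde\gamma_M$.

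The key structural observation is that $\gamma_N$ and $\tilde\gamma_M$ have orthogonal ranges, because any $\psi\in L^2$ splits into pieces supported in two disjoint balls on which the two operators act independently. Consequently $\gamma_N\tilde\gamma_M=\tilde\gamma_M\gamma_N=0$, we have $0\leq \gamma_{N+M}\leq 1$ and $\operatorname{tr}\gamma_{N+M}=N+M$, and by spectral calculus on the two invariant subspaces $(\gamma_{N+M})^p=\gamma_N^p+\tilde\gamma_M^p$, so the exchange term is exactly additive. The kinetic term is trivially additive, and the nuclear attraction on $\tilde\gamma_M$ is non-positive so it only helps the upper bound. The only term producing error is the Hartree cross term
\begin{equation*}
\iint\frac{\rho_{\gamma_N}(x)\rho_{\tilde\gamma_M}(y)}{|x-y|}\,\mathrm{d}x\,\mathrm{d}y\leq \epsilon\,NM,
\end{equation*}
since on the supports we have $|x-y|\geq |c|-2R>1/\epsilon$. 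Using translation invariance of $\mathcal{E}^p_{Z=0}$ to handle $\tilde\gamma_M$, I obtain
\begin{equation*}
E^p(N+M,Z)\leq \mathcal{E}^p_Z(\gamma_N)+\mathcal{E}^p_0(\gamma_M)+\epsilon NM\leq E^p(N,Z)+E^p(M,0)+\tfrac{2\delta}{3}+\epsilon NM,
\end{equation*}
and choosing $\epsilon=\delta/(3NM)$ contradicts the standing assumption.

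The main technical subtlety is justifying $(\gamma_N+\tilde\gamma_M)^p=\gamma_N^p+\tilde\gamma_M^p$ for the fractional power $p\in[1/2,1]$; unlike the kinetic, Hartree and external terms, this does not follow from merely measuring ``interaction at large separation'' but requires the sharp algebraic orthogonality of ranges that the compact-support reduction of Lemma~\ref{lem:compact} buys. Everything else reduces to elementary long-range decay of Coulomb interactions and translation invariance.
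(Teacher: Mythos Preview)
Your proof is correct and follows essentially the same approach as the paper: contradiction, near-minimizers with compactly supported kernels via Lemma~\ref{lem:compact}, a far-away translated trial state $\gamma_{N+M}=\gamma_N+\tilde\gamma_M$, additivity of the exchange term from the orthogonality $\gamma_N\tilde\gamma_M=0$, the Hartree cross term bounded by $\epsilon NM$, and the final choice $\epsilon=\delta/(3NM)$. Your explicit remark that the fractional-power identity $(\gamma_N+\tilde\gamma_M)^p=\gamma_N^p+\tilde\gamma_M^p$ is the point where the compact-support reduction is genuinely needed is spot on.
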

This directly implies the following binding inequality for the minimizer.
\begin{cor}[Binding inequality]\label{cor:binding}For any smooth partition of unity $\chi_1^2 + \chi_2^2 =1$ we have \begin{align*}
	\mathcal{E}_Z^p(\gamma_0)\leq \mathcal{E}_Z^p(\chi_1\gamma_0\chi_1) + \mathcal{E}_{Z=0}^p(\chi_2 \gamma_0 \chi_2).
	\end{align*}
\end{cor}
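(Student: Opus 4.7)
The plan is to deduce the binding inequality directly from Lemma~\ref{lem:non-inc} together with the variational characterisation of $E^p(N,Z)$, applied to the localised pieces $\gamma_1:=\chi_1\gamma_0\chi_1$ and $\gamma_2:=\chi_2\gamma_0\chi_2$. The guiding observation is that the partition of unity conserves the total trace, so the two pieces form a legitimate pair of trial density matrices to which the subadditivity bound $E^p(N_1+N_2,Z)\leq E^p(N_1,Z)+E^p(N_2,0)$ can be applied.

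The first step is to check that $\gamma_1,\gamma_2\in\mathcal{I}$. Positivity is immediate, while $\gamma_i\leq 1$ follows from $\langle\psi,\chi_i\gamma_0\chi_i\psi\rangle\leq\langle\chi_i\psi,\chi_i\psi\rangle\leq\|\psi\|^2$ using $\gamma_0\leq 1$ and $\chi_i^2\leq 1$. The trace-class property is inherited from $\gamma_0$ since $\|\chi_i\|_\infty\leq 1$, and $\Delta\gamma_i\in\mathfrak{S}_1$ follows by expanding commutators and using the smoothness and boundedness of $\chi_i$ together with $\Delta\gamma_0\in\mathfrak{S}_1$. The cyclicity of the trace combined with $\chi_1^2+\chi_2^2=1$ then yields
\begin{align*}
N_1+N_2 := \spr{\gamma_1}+\spr{\gamma_2}=\spr{(\chi_1^2+\chi_2^2)\gamma_0}=\spr{\gamma_0}=N.
\end{align*}

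The second step is to chain the inequalities. Since $\gamma_0$ is a minimizer we have $\mathcal{E}^p_Z(\gamma_0)=E^p(N,Z)$. Applying Lemma~\ref{lem:non-inc} with the decomposition $N=N_1+N_2$ gives $E^p(N,Z)\leq E^p(N_1,Z)+E^p(N_2,0)$, and because $\gamma_1$ (respectively $\gamma_2$) is an admissible trial state of trace $N_1$ (respectively $N_2$) for the minimization problems defining $E^p(N_1,Z)$ and $E^p(N_2,0)$, the variational principle yields $E^p(N_1,Z)\leq\mathcal{E}^p_Z(\gamma_1)$ and $E^p(N_2,0)\leq\mathcal{E}^p_{Z=0}(\gamma_2)$. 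Concatenating these three inequalities is exactly the claimed binding inequality.

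There is no substantive obstacle in this argument; the only mild annoyances are the routine verifications in the first step and the degenerate case $N_1=0$ or $N_2=0$, in which Lemma~\ref{lem:non-inc} does not literally apply. This case is however harmless: if, say, $N_2=0$ then $\gamma_2=0$, so $\mathcal{E}^p_{Z=0}(\gamma_2)=0$, and one just uses that $N\mapsto E^p(N,Z)$ is non-increasing (the second inequality of \eqref{eq:non-inc}) together with the variational principle to conclude directly. Thus the corollary is indeed a ``direct consequence'' of Lemma~\ref{lem:non-inc}.
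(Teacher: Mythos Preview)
Your argument is correct and is precisely the one the paper has in mind: the paper gives no explicit proof but states that the corollary follows directly from Lemma~\ref{lem:non-inc}, and you have spelled out exactly that deduction (admissibility of $\chi_i\gamma_0\chi_i$, trace additivity from $\chi_1^2+\chi_2^2=1$, then the chain $\mathcal{E}^p_Z(\gamma_0)=E^p(N,Z)\leq E^p(N_1,Z)+E^p(N_2,0)\leq \mathcal{E}^p_Z(\gamma_1)+\mathcal{E}^p_{Z=0}(\gamma_2)$). The only quibble is that in the degenerate case $N_1=0$ one does not use $N$-monotonicity but rather the elementary fact $E^p(N,Z)\leq E^p(N,0)$ (immediate from $\mathcal{E}^p_Z(\gamma)\leq\mathcal{E}^p_{Z=0}(\gamma)$); this is harmless.
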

\subsection{Localizing density matrices}
\begin{lemma}\label{lem}Let $\gamma\in\mathcal{I}$ and $0\leq \chi(x) \leq 1$ be a smooth function on $\mathbb{R}^3$. Then we have for all $p\in[ 1/2, 1]$
\begin{align}\label{eq:chip}
X(\chi \gamma^p\chi) \leq X((\chi \gamma\chi)^p)
\end{align}
and 
\begin{align}\label{eq:chip2}
X(\chi\gamma^p) \leq \left(\mathrm{tr}(-\Delta\chi\gamma\chi)\right)^{\frac 12} \left(\int \chi^2 \rho_\gamma\right)^{\frac 12}.
\end{align}
\label{pabschaetzung}
\begin{proof} We first prove \eqref{eq:chip2}. This is obtained using the Cauchy-Schwarz inequality, the Hardy inequality and the fact that $\gamma^{2p} \leq \gamma$.
\begin{align*}
X(\chi \gamma^p)& = \frac 12 \iint \frac{|\chi(x) \gamma^p(x,y)|^2}{|x-y|} \dl \dd{y} 
 \\& \leq  \left(\iint \frac{|\chi(x) \gamma^p(x,y)|^2}{4|x-y|^2} \dl\dd{y}\right)^{\frac 12} \left( \iint |\chi(x) \gamma^p(x,y)|^2 \dl \dd{y} \right)^{\frac 12} \\ 
  &\leq \left(\mathrm{tr} ( -\Delta\chi \gamma^{2p} \chi ) \right)^{\frac 12} \left(\int \chi^2 \rho_\gamma\right)^{\frac 12} \leq \left(\mathrm{tr}(-\Delta\chi\gamma\chi)\right)^{\frac 12} \left(\int \chi^2 \rho_\gamma\right)^{\frac 12}.
\end{align*}

Now, we prove \eqref{eq:chip}.
Using \eqref{eq:fefferman} it suffices to prove that \begin{align}
\chi \gamma^p \chi \leq (\chi \gamma \chi)^p.\label{515}\end{align}
 For $p=1$ the inequality is trivial and for $p=\frac{1}{2}$ it can be shown as follows:
\begin{align*}
\chi\gamma^{\frac 12} \chi = (\chi \gamma^{\frac 12} {\chi \chi} \gamma^{\frac{1}{2}}\chi)^{\frac{1}{2}} \leq (\chi \gamma\chi)^{\frac 12}.
\end{align*}
Now, we are left with the case $p\in( 1/2 , 1)$. Setting $\eta:=\gamma^p$, we can write inequality~\eqref{515} as
\begin{align*}
\chi\eta\chi \leq (\chi\eta^{\frac 1p}\chi)^p.
\end{align*} Since $ 1/2  < p < 1$, it is enough to show that
\begin{align}\label{518}
(\chi \eta \chi )^{ 1/p} \leq \chi \eta^{ 1/p}\chi .
\end{align}We note that
\begin{align}\label{integraldarstellung}
C^{ 1/p} = c_p \int_0^\infty C^2(C+z\mathbb{1})^{-1} z^{ \frac 1p -2} \dd{z}
\end{align}
for any non-negative self-adjoint bounded operator $C$ and some constant $c_p >0$.
Hence, inequality~\eqref{518} holds true, if
\begin{align*}
c_p \int_0^\infty \left[(\chi\eta\chi)^2(\chi\eta\chi+z\mathbb{1})^{-1} - \chi \eta^2(\eta+z\mathbb{1})^{-1}\chi \right] z^{\frac 1p -2} \dd{z} \leq 0 .
\end{align*}
Thus, it suffices to show that
\begin{align}
\chi(\chi\eta\chi+z\mathbb{1})^{-1}\chi \leq (\eta +  z\mathbb{1})^{-1} \label{zuzeigen}
\end{align} for all $z>0$. 

Note that for self-adjoint bounded operators $A,B$ with \begin{align}\label{8}A\geq B > 0 \text{ we have, that }B^{-1} \geq A^{-1}.\end{align}

To use this fact, we approximate $\chi$ with an invertible operator $\chi_\epsilon$. For any $0<\epsilon<1$, we define
\begin{align*}\chi_\epsilon({z})=\max(\epsilon,\chi({z})).\end{align*}
Obviously, $\chi_\epsilon \to \chi$ in norm as $\epsilon\to 0$. In particular, this implies 
\begin{align*}\chi_\epsilon (\chi_\epsilon \eta \chi_\epsilon + z\mathbb{1})^{-1} \chi_\epsilon \to \chi(\chi\eta\chi + z\mathbb{1})^{-1} \chi\end{align*} for all $z>0$.

Since $0< \chi_\epsilon \leq 1$ and $z>0$, we have
\begin{align*}
\eta + z\mathbb{1} \chi_\epsilon^{-2} \geq \eta + z\mathbb{1} .
\end{align*}
Using \eqref{8} it follows that
\begin{align*}
\chi_\epsilon (\chi_\epsilon \eta \chi_\epsilon + z\mathbb{1})^{-1} \chi_\epsilon \leq (\eta	+ z\mathbb{1})^{-1}.
\end{align*}
The limit $\epsilon\to 0$ shows~\eqref{zuzeigen}, which concludes the proof.
\end{proof}
\end{lemma}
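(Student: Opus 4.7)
For \eqref{eq:chip2}, my plan is to apply Cauchy--Schwarz to the double integral defining $X(\chi\gamma^p)$, splitting $|x-y|^{-1}=|x-y|^{-1}\cdot 1$. Hardy's inequality $-\Delta_x\geq\tfrac{1}{4}|x-y|^{-2}$ applied slice-wise in $y$ bounds the first factor by $(\spr{-\Delta\chi\gamma^{2p}\chi})^{1/2}$, while the second is $\left(\int\chi^2\rho_{\gamma^{2p}}\right)^{1/2}$. Since $0\leq\gamma\leq 1$ and $2p\geq 1$ yield the operator inequality $\gamma^{2p}\leq\gamma$, one obtains $\rho_{\gamma^{2p}}\leq\rho_\gamma$ pointwise and $\spr{-\Delta\chi\gamma^{2p}\chi}\leq\spr{-\Delta\chi\gamma\chi}$, giving \eqref{eq:chip2}.

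For \eqref{eq:chip}, I would use the Fefferman--de la Llave representation \eqref{eq:fefferman} -- in which each $B_\lambda$ may be chosen as multiplication by the characteristic function of a ball, hence a self-adjoint projection -- to reduce the task to the operator inequality
\[\chi\gamma^p\chi\leq(\chi\gamma\chi)^p.\]
Indeed, for such $B_\lambda$ one has $\spr{AB_\lambda AB_\lambda^\ast}=\spr{(B_\lambda AB_\lambda)^2}$, and $X\mapsto \spr{X^2}$ is monotone on positive trace class operators, so the operator inequality propagates to $X(\chi\gamma^p\chi)\leq X((\chi\gamma\chi)^p)$. The cases $p=1$ and $p=1/2$ of the operator inequality are immediate: for $p=1$ it is tautological, and for $p=1/2$ one notes $(\chi\gamma^{1/2}\chi)^2=\chi\gamma^{1/2}\chi^2\gamma^{1/2}\chi\leq\chi\gamma\chi$ (by $\chi^2\leq 1$) and invokes operator monotonicity of $t\mapsto t^{1/2}$.

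The substantive case is $p\in(1/2,1)$. Substituting $\eta=\gamma^p$ and using operator monotonicity of $t\mapsto t^p$, it suffices to show $(\chi\eta\chi)^{1/p}\leq\chi\eta^{1/p}\chi$; here $1/p\in(1,2)$ lies outside the operator-monotone range, which is the main obstacle. I would exploit the Löwner-type integral representation
\[C^{1/p}=c_p\int_0^\infty C^2(C+z\mathbb{1})^{-1} z^{1/p-2}\dd{z},\]
reducing matters to the pointwise-in-$z$ bound $(\chi\eta\chi)^2(\chi\eta\chi+z)^{-1}\leq\chi\eta^2(\eta+z)^{-1}\chi$. Using $A^2(A+z)^{-1}=A-z+z^2(A+z)^{-1}$ on both sides, this is equivalent to $\chi\eta\chi(\chi\eta\chi+z)^{-1}\geq\chi\eta(\eta+z)^{-1}\chi$. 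Applying the identity $X^*X(X^*X+z)^{-1}=X^*(XX^*+z)^{-1}X$ with $X=\eta^{1/2}\chi$, the left-hand side becomes $\chi\eta^{1/2}(\eta^{1/2}\chi^2\eta^{1/2}+z)^{-1}\eta^{1/2}\chi$, while the right-hand side reads $\chi\eta^{1/2}(\eta+z)^{-1}\eta^{1/2}\chi$. Since $\chi^2\leq 1$ gives $\eta^{1/2}\chi^2\eta^{1/2}\leq\eta$ and hence $(\eta^{1/2}\chi^2\eta^{1/2}+z)^{-1}\geq(\eta+z)^{-1}$, sandwiching with $\chi\eta^{1/2}$ and $\eta^{1/2}\chi$ preserves positivity and completes the argument.
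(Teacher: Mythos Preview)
Your proof is correct and follows the same overall architecture as the paper: Cauchy--Schwarz plus Hardy plus $\gamma^{2p}\le\gamma$ for \eqref{eq:chip2}; and for \eqref{eq:chip}, reduction via the Fefferman--de~la~Llave formula to the operator inequality $\chi\gamma^p\chi\le(\chi\gamma\chi)^p$, the immediate treatment of $p\in\{1/2,1\}$, and for $p\in(1/2,1)$ the substitution $\eta=\gamma^p$ together with the integral representation $C^{1/p}=c_p\int_0^\infty C^2(C+z)^{-1}z^{1/p-2}\,\dd z$.

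The difference is only in how the pointwise-in-$z$ resolvent comparison is established. The paper reduces to
\[
\chi(\chi\eta\chi+z)^{-1}\chi\le(\eta+z)^{-1},
\]
and proves this by regularizing $\chi$ to an invertible $\chi_\epsilon=\max(\epsilon,\chi)$, applying $A\ge B>0\Rightarrow B^{-1}\ge A^{-1}$ to $\eta+z\chi_\epsilon^{-2}\ge\eta+z$, and passing to the limit $\epsilon\to0$. You instead rewrite $A^2(A+z)^{-1}=A-zA(A+z)^{-1}$ to turn the task into $\chi\eta\chi(\chi\eta\chi+z)^{-1}\ge\chi\eta(\eta+z)^{-1}\chi$, then invoke the algebraic identity $X^*X(X^*X+z)^{-1}=X^*(XX^*+z)^{-1}X$ with $X=\eta^{1/2}\chi$, so that the comparison becomes $(\eta^{1/2}\chi^2\eta^{1/2}+z)^{-1}\ge(\eta+z)^{-1}$, which follows directly from $\chi^2\le1$ and operator monotonicity of inversion. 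Your route avoids the approximation argument altogether and is a little slicker; the paper's route is perhaps more transparent about why $0\le\chi\le1$ is exactly what is needed. Both are equally valid.
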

\begin{cor}\label{cor:apriori}Let $\gamma_0$ be a minimizer of $E^p(N,Z)$. Then, 
\begin{align}\label{eq:corrho}
\int \rho_0^{\frac 53} + \mathrm{tr}(-\Delta \gamma_0) + D[\rho_{\gamma_0}] \leq C (Z^{\frac 73} + N)
\end{align}
and 
\begin{align}\label{eq:Xest}
X(\gamma_0^p) \leq C (Z^{\frac 73} + N)^{\frac 12} N^{\frac 12} .
\end{align}
\begin{proof}
From \eqref{eq:chip2} we know that 
\begin{align}\label{eq:chip3}
X(\gamma_0^p) \leq\mathrm{tr}(-\Delta \gamma_0)^{\frac 12} N^{\frac 12} .
\end{align}
Using this, the kinetic Lieb-Thirring inequality and the fact that the ground state energy in Thomas-Fermi theory equals a negative constant times $Z^{\frac 73}$, we estimate
\begin{align*}
\mathcal{E}^p_Z(\gamma_0) &\geq \frac 14 \mathrm{tr}(-\Delta \gamma_0 ) + \tilde C \int \rho_0^{\frac 53} - Z \int \frac{\rho_0(x)}{|x|} + D[\rho_0] - CN \\ 
&\geq \frac 14 \mathrm{tr}(-\Delta \gamma_0 ) + \frac{\tilde C}{2} \int \rho_0^{\frac 53}  + \frac 12 D[\rho_0] - C Z^{\frac 73}- CN .
\end{align*} 
The fact that $\mathcal{E}^p_Z(\gamma_0) \leq 0$ implies \eqref{eq:corrho}. This also shows $\mathrm{tr}(-\Delta \gamma_0) \leq C(Z^{\frac 73} + N)$ which proves \eqref{eq:Xest} using \eqref{eq:chip3}.
\end{proof}
\end{cor}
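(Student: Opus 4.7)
The plan is to exploit three classical ingredients: the exchange bound \eqref{eq:chip2} from Lemma~\ref{pabschaetzung} (applied with $\chi\equiv 1$), the kinetic Lieb--Thirring inequality, and the Thomas--Fermi lower bound. The starting point is that the energy at a minimizer is non-positive: since the chain $E^p(N,Z)\leq E^{\mathrm{HF}}(N,Z)\leq 0$ was established earlier, one has $\mathcal{E}^p_Z(\gamma_0)\leq 0$.

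First I would apply \eqref{eq:chip2} with $\chi\equiv 1$ to obtain \eqref{eq:chip3}, namely
\begin{align*}
X(\gamma_0^p)\leq \bigl(\mathrm{tr}(-\Delta\gamma_0)\bigr)^{1/2}\,N^{1/2}.
\end{align*}
Here the crucial point is that $2p\geq 1$ and $0\leq\gamma_0\leq 1$ together give $\gamma_0^{2p}\leq\gamma_0$, which is exactly what makes the uniform treatment of the whole range $p\in[1/2,1]$ possible. By AM--GM this exchange bound can be absorbed: for any $\lambda>0$,
\begin{align*}
X(\gamma_0^p)\leq \lambda\,\mathrm{tr}(-\Delta\gamma_0)+\frac{N}{4\lambda},
\end{align*}
and I would pick $\lambda$ small (say $\lambda=1/4$) so that a definite fraction of the kinetic energy survives.

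Next I would use the kinetic Lieb--Thirring inequality $\mathrm{tr}(-\Delta\gamma_0)\geq c_{\mathrm{LT}}\int\rho_0^{5/3}$ on a further fraction of the kinetic energy, so that after absorbing the exchange one still retains a positive multiple of $\int\rho_0^{5/3}$. The remaining density-only expression
\begin{align*}
\tfrac{c_{\mathrm{LT}}}{2}\int\rho_0^{5/3}-Z\int\frac{\rho_0(x)}{|x|}\,\dl+D[\rho_0]
\end{align*}
is bounded from below by the standard Thomas--Fermi lower bound $-CZ^{7/3}$, valid without any mass constraint; one keeps, in addition, half of $D[\rho_0]$ as a genuine positive remainder for the estimate.

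Combining these three steps with $\mathcal{E}^p_Z(\gamma_0)\leq 0$ gives
\begin{align*}
0\geq \tfrac{1}{8}\,\mathrm{tr}(-\Delta\gamma_0)+\tfrac{c_{\mathrm{LT}}}{4}\int\rho_0^{5/3}+\tfrac{1}{2}D[\rho_0]-CZ^{7/3}-CN,
\end{align*}
which is \eqref{eq:corrho}. Plugging the resulting bound $\mathrm{tr}(-\Delta\gamma_0)\leq C(Z^{7/3}+N)$ back into \eqref{eq:chip3} immediately yields \eqref{eq:Xest}. There is no serious obstacle here; the only delicate point is the very first step, where one must invoke the operator inequality $\gamma_0^{2p}\leq\gamma_0$ to apply \eqref{eq:chip2}, and then carefully track constants when trading exchange for kinetic energy so that enough of $\mathrm{tr}(-\Delta\gamma_0)$ and of $\int\rho_0^{5/3}$ remains on the right-hand side.
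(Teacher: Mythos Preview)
Your proposal is correct and follows essentially the same route as the paper: apply \eqref{eq:chip2} with $\chi\equiv 1$, absorb the exchange term into a fraction of the kinetic energy via AM--GM, use the kinetic Lieb--Thirring inequality on another fraction, and bound the remaining Thomas--Fermi-type expression below by $-CZ^{7/3}$ before invoking $\mathcal{E}^p_Z(\gamma_0)\leq 0$. The only differences are cosmetic (your explicit choice $\lambda=1/4$ versus the paper's implicit absorption into the term $-CN$, and slightly different numerical constants in front of the surviving pieces).
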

\begin{lemma}[IMS-type formula]
For all quadratic partitions of unity $\sum_{i=1}^n f_i^2 = 1 $ with $\nabla f_i \in L^\infty$ and for all density matrices $0\leq \gamma\leq 1$ with $\mathrm{tr} ( (1-\Delta) \gamma)<\infty$, we have \begin{align}\nonumber
\sum_{i=1}^n \mathcal{E}^p_Z ( f_i \gamma f_i) - \mathcal{E}^p_Z(\gamma) \leq& \int \left( \sum_{i=1}^{n}|\nabla f_i(x) |^2 \right) \rho_\gamma(x) \dl \\& + \sum_{i<j}^n \iint \frac{f_i(x)^2 \left( |\gamma^p(x,y)|^2 - \rho_\gamma(x) \rho_\gamma(y)\right) f_j(y)^2}{|x-y|} \dl\dd{y}.\label{eq:IMS}
\end{align}
\begin{proof}
	We estimate the kinetic term using the IMS-formula to obtain\begin{align*}
	\sum_{i=1}^{n}\mathrm{tr}(-\Delta f_i \gamma f_i) - \mathrm{tr}(-\Delta \gamma) = \mathrm{tr}\left( \left(\sum_{i=1}^n |\nabla f_i|^2 \right) \gamma \right) = \int \left( \sum_{i=1}^{n}|\nabla f_i|^2 \right) \rho_\gamma.
	\end{align*}
	For the direct term we have 
	\begin{align*}
\sum_{i=1}^{n}D[\rho_{f_i\gamma f_i}] = \sum_{i=1}^n D[f_i^2 \rho_\gamma] = D[\rho_\gamma] - \sum_{i<j}^{n} \iint \frac{f_i(x)^2 \rho_\gamma(x) \rho_\gamma(y) f_j(y)^2}{|x-y|} \dl\dd{y}.
	\end{align*}
	The exchange term can be estimated using \eqref{eq:chip} to get \begin{align*}
	-\sum_{i=1}^n &X((f_i  \gamma f_i )^p) \leq - \sum_{i=1}^{n} X(f_i \gamma^p f_i ) \\ & = -X(\gamma^p) + \sum_{i<j}^n \iint \frac{f_i^2(x) |\gamma^p(x,y)|^2 f_j^2(y)}{|x-y|} \dl\dd{y}.
	\end{align*}
\end{proof}
\end{lemma}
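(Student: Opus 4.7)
The plan is to split $\mathcal{E}^p_Z$ into its four constituent pieces (kinetic, external potential, direct Hartree, exchange) and control each on the partitioned density matrices $f_i\gamma f_i$ separately. Three of the four pieces reduce to exact identities; the only substantive input needed is inequality \eqref{eq:chip} of Lemma~\ref{pabschaetzung}.

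First, since $\rho_{f_i\gamma f_i}(x) = f_i(x)^2\rho_\gamma(x)$ and $\sum_i f_i^2 \equiv 1$, the external potential contribution is strictly conserved: $\sum_i \spr{-\tfrac{Z}{|x|} f_i\gamma f_i} = \spr{-\tfrac{Z}{|x|}\gamma}$, so it drops out of the difference. For the kinetic part I would invoke the standard IMS operator identity
\begin{align*}
\sum_{i=1}^n f_i(-\Delta) f_i \;=\; -\Delta + \sum_{i=1}^n |\nabla f_i|^2,
\end{align*}
which is verified by expanding the commutators $[-\Delta, f_i]$ and using that $\nabla \sum_i f_i^2 = 0$ implies $\sum_i f_i \Delta f_i = -\sum_i |\nabla f_i|^2$. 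Multiplying by $\gamma$ and taking traces yields exactly the first term on the right-hand side of \eqref{eq:IMS}.

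For the Hartree term I would expand $D[\rho_\gamma] = \tfrac{1}{2}\sum_{i,j}\iint f_i^2(x) f_j^2(y)\rho_\gamma(x)\rho_\gamma(y)/|x-y|\,\dl\dd{y}$ by inserting the trivial factorization $1 = (\sum_i f_i^2(x))(\sum_j f_j^2(y))$; the diagonal $i=j$ contribution equals $\sum_i D[f_i^2\rho_\gamma] = \sum_i D[\rho_{f_i\gamma f_i}]$, and the off-diagonal part can be folded into $\sum_{i<j}$ using the $(i,j)\leftrightarrow(j,i)$ symmetry. For the exchange piece I would first apply \eqref{eq:chip} to obtain $X((f_i\gamma f_i)^p) \geq X(f_i\gamma^p f_i)$, and then perform the analogous partition-of-unity expansion on $X(\gamma^p) = \tfrac{1}{2}\iint|\gamma^p(x,y)|^2/|x-y|\,\dl\dd{y}$. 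Combining the resulting off-diagonal terms from the direct and exchange contributions collapses them into the single integrand $f_i^2(x)\bigl(|\gamma^p(x,y)|^2 - \rho_\gamma(x)\rho_\gamma(y)\bigr) f_j^2(y)/|x-y|$ summed over $i<j$, which is precisely the second term on the right-hand side of \eqref{eq:IMS}.

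The only genuinely nontrivial input is the pointwise exchange bound $X(\chi\gamma^p\chi) \leq X((\chi\gamma\chi)^p)$ from Lemma~\ref{pabschaetzung}, whose proof for $p \in (1/2,1)$ already absorbed the delicate fractional-power/operator-monotonicity argument via the integral representation \eqref{integraldarstellung}. Once that inequality is in hand, everything else is purely algebraic bookkeeping against the identity $\sum_i f_i^2 = 1$ together with the symmetry of the two-particle integrand, so I expect no further obstacles.
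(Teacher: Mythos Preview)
Your proposal is correct and follows essentially the same route as the paper's proof: both split the functional into its four pieces, use the IMS identity for the kinetic term, expand the direct and exchange terms via the partition $\sum_i f_i^2=1$, and invoke \eqref{eq:chip} as the sole nontrivial input for the exchange term. Your write-up even adds a couple of clarifying remarks (the external potential cancellation, the commutator origin of IMS) that the paper leaves implicit.
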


The rest of the paper will be completely analogous to the corresponding parts in \cite{frank2016maximal} and \cite{frank2016ionization}. For convenience of the reader it is included here as well.

\section{\texorpdfstring{Exterior $L^1$-estimate}{Exterior L1-estimate}}\label{sec:l1est}
In this section we control $\int_{|x|>r} \rho_0$. First, we recall the screened nuclear potential \begin{align*}
\Phi_r (x) = \frac{Z}{|x|} - \int_{|y|<r} \frac{\rho_0(y)}{|x-y|}\dd{y}.
\end{align*}
We also introduce the cut-off function
\begin{align*}
\chi_r^+ = \mathbb{1}(|x|>r)
\end{align*} 
and a smooth function $\eta_r: \mathbb{R}^3\to [0,1]$ satisfying 
\begin{align}
\chi_r^+ \geq \eta_r \geq \chi_{(1+\lambda)r}^+, \; |\nabla \eta_r| \leq C (\lambda r)^{-1}
\end{align}
for some $\lambda >0$.
\begin{lemma}\label{lem:keylemmaext}
	For all $r>0, s>0, \lambda\in (0,1/2]$ we have \begin{align*}
	\int \chi_r^+ \rho_0 \leq C \int_{r\leq |x| < (1+\lambda)^2 r} \rho_0 + C \left( \sup_{|z|>r} [ |z|\Phi_r(z)]_+ + s + \lambda^{-2} s^{-1} + \lambda^{-1} \right) \\ 
+	C \left(s^2 \mathrm{tr}(-\Delta \eta_r \gamma_0 \eta_r) \right)^{ 3/5} + C \left( s^2 \mathrm{tr}(-\Delta \eta_r\gamma_0 \eta_r) \right)^{ 1/3}.
	\end{align*}
	\begin{proof}
Recall from Corollary~\ref{cor:binding} that the minimizer $\gamma_0$ fulfills the binding inequality
		\begin{align}\label{eq:binding}
		\mathcal{E}^p_Z(\gamma)\leq \mathcal{E}^p_Z(\chi_1\gamma\chi_1) + \mathcal{E}^p_{Z=0}(\chi_2\gamma\chi_2)
		\end{align}
		for any smooth partition of unity $\chi_1^2 + \chi_2^2=1$.
						
		 For fixed $\lambda\in (0,1/2], s>0, l>0, \nu \in \mathbb{S}^2$ we choose
		\begin{align*}
		\chi_1(x) = g_1\left(\frac{\nu\cdot \theta (x)-l}{s}\right), \chi_2(x)= g_2\left(\frac{\nu\cdot \theta (x)-l}{s}\right),
		\end{align*}
	where $g_1,g_2: \mathbb{R}\to\mathbb{R}$ and $\theta: \mathbb{R}^3\to\mathbb{R}^3$ satisfy 
	\begin{align*}&g_1^2+g_2^2 = 1, \; g_1(t) = 1 \text{ if }t\leq 0,\;  g_1(t) =0 \text{ if }t\geq 1,\;  |g_1^\prime |+ |g_2^\prime|\leq C, \\
	&|\theta(x)|\leq |x|, \theta(x) = 0 \text{ if } |x|\leq r, \; \theta(x) = x \text{ if } |x| \geq (1+\lambda)r \text{ and }|\nabla \theta|\leq C\lambda^{-1}.
\end{align*}
Now, we begin to estimate the binding inequality \eqref{eq:binding} using the IMS-type formula \eqref{eq:IMS}.
\begin{align*}
\mathcal{E}^p_Z &(\chi_1 \gamma_0 \chi_1) + \mathcal{E}^p_Z(\chi_2 \gamma_0 \chi_2) - \mathcal{E}^p_Z (\gamma_0) \\ \leq &\int \left( |\nabla \chi_1|^2 + |\nabla \chi_2|^2 \right) \rho_0 + \int \frac{Z \chi_2^2 \rho_0(x)}{|x|}\dl \\& + \iint \frac{\chi_2^2 (x) \left( |\gamma_0^p(x,y)|^2 - \rho_0(x) \rho_0(y) \right) \chi_1^2(y)}{|x-y|}\dl \dd{y}.
\end{align*}
By construction we have \begin{align*}
\int \left( |\nabla \chi_1|^2 + |\nabla \chi_2|^2\right) \rho_0 \leq C (1+(\lambda s)^{-2})\int_{\nu\cdot  \theta(x) -s \leq l \leq \nu \cdot \theta(x)} \rho_0(x) \dl.
\end{align*}
For the attraction and direct terms, we can estimate
\begin{align*}
\int \frac{Z \chi_2^2(x) \rho_0(x)}{|x|}&\dl - \iint \frac{\chi_2^2(x) \rho_0 (x) \chi_1^2(y)\rho_0(y)}{|x-y|}\dl \dd{y} \\& = \int \chi_2^2 (x) \rho_0(x) \Phi_r(x) \dl - \iint_{|y|\geq r} \frac{\chi_2^2(x) \rho_0(x) \chi_1^2(y) \rho_0(y)}{|x-y|}\dl \dd{y}\\ & \leq \int_{l\leq \nu\cdot \theta(x) } \rho_0(x) [\Phi_r(x)]_+ \dl - \iint_{|y|\geq r, \nu\cdot \theta(y)\leq l \leq \nu\cdot \theta(x) -s } \frac{\rho_0(x) \rho_0(y)}{|x-y|}\dl\dd{y}.
\end{align*}
Since $\theta(x) = x$ when $|x| \geq (1+\lambda)r$, we obtain\begin{align*}
\iint _{|y|>r, \nu\cdot \theta(y) \leq \nu \cdot \theta(x) -s }\frac{\rho_0(x) \rho_0(y)}{|x-y|}\dl\dd{y} \geq \iint_{\stackrel{|x|,|y|\geq(1+\lambda)r}{ \nu\cdot y \leq l\leq  \nu\cdot x -s}} \frac{\rho_0(x)\rho_0(y)}{|x-y|}\dl\dd{y}.
\end{align*}
For the exchange-correlation term, we use
\begin{align}
\iint \frac{\chi_2^2(x) |\gamma_0^p(x,y)|^2 \chi_1^2(y)}{|x-y|} \dl\dd{y} \leq \iint_{\nu\cdot \theta(y) -s \leq l \leq \nu \cdot \theta(x) }\frac{|\gamma_0^p(x,y)|^2}{|x-y|}\dl\dd{y}.
\end{align}
Now we apply these results to the binding inequality \eqref{eq:binding} to obtain
\begin{align}\nonumber
&\iint_{\stackrel{|x|,|y|\geq(1+\lambda)r}{ \nu\cdot y \leq l\leq \nu\cdot x -s}} \frac{\rho_0(x) \rho_0(y)}{|x-y|}\dl\dd{y} \leq C (1+(\lambda s)^{-2})\int_{\nu \cdot \theta(x) -s \leq l \leq \nu \cdot \theta(x)} \rho_0(x) \dl \\
& + \int_{l\leq \nu\cdot \theta(x) } \rho_0(x) [\Phi_r(x)]_+ \dl + \iint_{\nu\cdot \theta(y) -s \leq l \leq \nu \cdot \theta(x) }\frac{|\gamma_0^p(x,y)|^2}{|x-y|}\dl\dd{y}\label{eq:tointegrate}
\end{align}
for all $s>0, l>0$ and $\nu \in \mathbb{S}^2$. Now we want to integrate \eqref{eq:tointegrate} over $l\in (0,\infty)$ and then average over $\nu\in\mathbb{S}^2$. To do so, we first write the left side as follows.
\begin{align*}
\int_{\mathbb{S}^2} \frac{\dd\nu}{4\pi } &\int_{0}^{\infty}\dd l \iint_{\stackrel{|x|,|y|\geq(1+\lambda)r}{ \nu\cdot y \leq l\leq \nu\cdot x -s}} \frac{\rho_0(x) \rho_0(y)}{|x-y|}\dl\dd{y}\\  = &\frac 12 \int_{\mathbb{S}^2} \frac{\dd\nu}{4\pi } \int_{0}^{\infty}\dd l \iint_{\stackrel{|x|,|y|\geq(1+\lambda)r}{ \nu\cdot y \leq l\leq \nu\cdot x -s} } \frac{\rho_0(x) \rho_0(y)}{|x-y|}\dl\dd{y}\\ & + \frac 12 \int_{\mathbb{S}^2} \frac{\dd\nu}{4\pi } \int_{0}^{\infty}\dd l \iint_{\stackrel{|x|,|y|\geq(1+\lambda)r}{ -\nu\cdot x \leq l\leq -\nu\cdot y -s} } \frac{\rho_0(x) \rho_0(y)}{|x-y|}\dl\dd{y}
\end{align*}
In the second term we used the symmetries $\nu\mapsto -\nu$ and $x\leftrightarrow y$.
For $a:=\nu\cdot x$ and $b:=\nu\cdot y$ remark that
\begin{align*}
\int_0^\infty \mathbb{1}(b\leq l\leq  a -s) + \mathbb{1}(-a \leq l \leq -b-s) \dd{l}&   \geq \left[[a-b]_+ -2s\right]_+ \geq [a-b]_+ -2s.
\end{align*}
Also note that \begin{align*}
\int_{\mathbb{S}^2} [\nu \cdot z ]_+ \frac{\dd{\nu}}{4\pi} = \frac{|z|}{4}
\end{align*}
for any $z\in \mathbb{R}^3$ and 
\begin{align*}
\int_0^\infty \mathbb{1}(b-s \leq l \leq a) \dd{l} \leq [a-b]_+ + s.
\end{align*}
We will also use Fubini's theorem to interchange the integrals. For the right hand side, we use the fact that 
\begin{align*}
\{ x: \nu\cdot \theta(x)\geq l  \}\subset \{ x: |x| \geq r \}
\end{align*}
since $l>0$ and $\theta(x)=0$ when $|x|<r$. Thus, after integrating $l$ from $0$ to $\infty$ and averaging over $\nu \in\mathbb{S}^2$, inequality~\eqref{eq:tointegrate} gives 
\begin{align*}
\frac 12 &\iint_{|x|,|y|\geq (1+\lambda)r}\frac{|x-y|/4 -2s}{|x-y|}\rho_0(x) \rho_0(y) \dl\dd{y} \\ \leq & C (s+\lambda^{-2}s^{-1})\int_{|x|\geq r} \rho_0(x) \dl + \int_{|x|\geq r} [ |\theta(x) |/4 \Phi_r(x) ]_+  \rho_0(x) \dl \\ & + \iint_{|x|\geq r} \frac{|\theta(x) - \theta(y)|/4 +s}{|x-y|} |\gamma_0^p(x,y)|^2 \dl \dd{y}.
\end{align*}
Using $|\theta(x) |\leq |x|$ and $\theta(x) - \theta(y)| \leq C \lambda^{-1} |x-y|$, this simplifies to \begin{align*}
\frac{1}{8} \left( \int \chi_{(1+\lambda)r}^+ \rho_0\right)^2 \leq &  \left( \frac 14 \sup_{|z|\geq r} [|z| \Phi_r(z)]_+ + C s + C \lambda^{-2} s^{-1} + C \lambda^{-1} \right) \int \chi_r^+ \rho_0 \\ & + s D[\chi_{(1+\lambda)r}^+ \rho_0] + s \iint \frac{\chi_r^+(x) |\gamma^p(x,y)|^2}{|x-y|}\dl \dd y . 
\end{align*}
Now, we replace $r$ by $(1+\lambda)r$ to get
\begin{align}\nonumber
\frac{1}{8} & \left( \int \chi_{(1+\lambda)^2r}^+ \rho_0\right)^2  \leq  \\\nonumber  & \left( \frac 14 \sup_{|z|\geq (1+
	\lambda) r}  [|z| \Phi_{(1+\lambda)r}(z)]_+ + C s + C \lambda^{-2} s^{-1} + C \lambda^{-1} \right) \int \chi_{(1+
	\lambda)r}^+ \rho_0 \\ & \,\, + s D[\chi_{(1+\lambda)^2r}^+ \rho_0] + s \iint \frac{\chi_{(1+\lambda)r}^+(x) |\gamma^p(x,y)|^2}{|x-y|}\dl \dd y . \label{eq:L1est}
\end{align}
First, we estimate the left hand side of \eqref{eq:L1est}. \begin{align*}
	\left( \int \chi^+_{(1+\lambda)^2r} \rho_0 \right)^2 \geq \frac 12 \left( \int \chi_r^+ \rho_0 \right)^2 - \left( \int_{r<|x|<(1+\lambda)^2 r} \rho_0\right)^2.
\end{align*}
Now we also estimate the right side of \eqref{eq:L1est}. For the first term we use $\Phi_{(1+\lambda)r}(z) \leq \Phi_r(z)$ and $\chi_{(1+\lambda) r } \leq \chi_r$ to get \begin{align*}
&\left( \frac 14 \sup_{|z| \geq (1+\lambda) r} \left[ |z| \Phi_{(1+\lambda)r}(z)  \right]_+ + C s +  C \lambda^{-2}s^{-1} + C \lambda^{-1} \right) \int \chi_{(1+\lambda)r}^+ \rho_0 \\ 
&\leq \left( \frac 14 \sup_{|z|\geq r} [|z| \Phi_r(z)]_+ + C s + C \lambda^{-2} s^{-1} + C \lambda^{-1} \right) \int \chi_r^+ \rho_0.
\end{align*}
For the second term we use the Hardy-Littlewood-Sobolev, the H\"older and the Lieb-Thirring inequalities to obtain 
\begin{align*}
D[\chi^+_{(1+\lambda)^2 r} \rho_0]\leq C \| \chi_{(1+\lambda)r}\rho_0 \|_{L^{6/5}}^2 \leq C \| \chi_{(1+\lambda)r}\rho_0 \|_{L^1}^{7/6} \| \chi_{(1+\lambda)r}\rho_0 \|_{L^{5/3}}^{5/6} \\ \leq C \|\chi_r^+ \rho_0\|_{L^1}^{7/6} \left(\mathrm{tr}(-\Delta \eta_r \gamma_0 \eta_r) \right)^{\frac 12}.
\end{align*}
We also used $\eta_r^2 \geq \chi^+_{(1+\lambda)^2 r}$. For the third term we use \eqref{eq:chip2} to get 
\begin{align*}
\iint &\frac{\chi_{(1+\lambda)r}^+(x) |\gamma^p_0(x,y)|^2}{|x-y|} \dl \dd{y} \leq \iint \frac{\eta_r(x)^2 |\gamma^p_0(x,y)|^2}{|x-y|} \dl \dd{y}\\ &\leq 2 \left( \mathrm{tr}(-\Delta \eta_r \gamma_0 \eta_r) \right)^{1/2} \left( \int \chi_r^+ \rho_0 \right)^{1/2}.
\end{align*}
Putting all the estimates in \eqref{eq:L1est} we end up with \begin{align*}
\left( \int \chi_r^+ \rho_0\right)^2 \leq  & C \left( \int_{r<|x| < (1+\lambda)^2 r} \rho_0 \right)^2\\
& + C \left( \sup_{|z|\geq r} [ |z| \Phi_r (z) ]_+ + s + \lambda^{-2} s^{-1} + \lambda^{-1} \right) \int \chi_r^+ \rho_0 \\ & + C s \left( \int \chi_r^+ \rho_0 \right)^{\frac 76} \left(\mathrm{tr}(-\Delta \eta_r \gamma_0 \eta_r) \right)^{\frac 12} 
\\ & + C s \left( \mathrm{tr}(-\Delta \eta_r \gamma_0 \eta_r ) \right)^{\frac 12} \left( \int \chi_r^+ \rho_0\right)^{\frac 12}.
\end{align*}
Hence, by Young's inequality,
\begin{align*}
\int \chi_r^+ \rho_0 \leq C &\int_{r< |x| < (1+\lambda)^2 r} \rho_0 + C \left( \sup_{|z|\geq r} [|z| \Phi_r(z) ]_{+} + s + \lambda^{-2} s^{-1} + \lambda^{-1} \right)  \\ &+ C \left( s^2 \mathrm{tr}(-\Delta \eta_r \gamma_0 \eta_r ) \right)^{3/5} + C \left( s^2 \mathrm{tr}(-\Delta \eta_r \gamma_0 \eta_r) \right)^{1/3}.
\end{align*}
	\end{proof}\end{lemma}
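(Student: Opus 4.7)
The plan is to apply the binding inequality from Corollary~\ref{cor:binding} to a smooth partition of unity $\chi_1^2+\chi_2^2=1$ that, outside a ball $B_r$, cuts space along a half-plane $\{\nu\cdot\theta(x)\ge l\}$. Here $\theta:\mathbb{R}^3\to\mathbb{R}^3$ is a smoothed position vector that vanishes on $B_r$, equals the identity on $\mathbb{R}^3\setminus B_{(1+\lambda)r}$, and satisfies $|\nabla\theta|\lesssim\lambda^{-1}$; the cut is mollified on a scale $s>0$ so that $|\nabla\chi_i|\lesssim(\lambda s)^{-1}$. Feeding this into the IMS-type formula \eqref{eq:IMS}, combining the nuclear attraction with the piece of the direct term supported on $\{|y|<r\}$ to produce the screened potential $\Phi_r$, and using $|\gamma_0^p(x,y)|^2\ge 0$ to estimate the exchange cross-term by an integral over the strip $\{\nu\cdot\theta(y)-s\le l\le \nu\cdot\theta(x)\}$, yields an inequality depending pointwise on $(l,\nu)$.

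The crucial step is then to integrate over $l\in(0,\infty)$ and average over $\nu\in\mathbb{S}^2$. On the left, the symmetrization $\nu\leftrightarrow-\nu$ together with $x\leftrightarrow y$ converts the indicator of a strip into a factor bounded below by $[[\nu\cdot(x-y)]_+-2s]_+$; applying the identity $\int_{\mathbb{S}^2}[\nu\cdot z]_+\,d\nu/(4\pi)=|z|/4$ then produces, modulo a boundary annulus $\{r<|x|<(1+\lambda)^2r\}$, the square $\tfrac{1}{8}\bigl(\int\chi_{(1+\lambda)^2r}^+\rho_0\bigr)^2$. On the right, the same averaging together with $|\theta(x)|\le|x|$ and $|\theta(x)-\theta(y)|\le C\lambda^{-1}|x-y|$ convert the three error contributions into $\bigl(\sup_{|z|\ge r}[|z|\Phi_r(z)]_++s+\lambda^{-2}s^{-1}+\lambda^{-1}\bigr)\int\chi_r^+\rho_0$, plus an $s\,D[\chi_r^+\rho_0]$ contribution and an $s\iint\chi_r^+(x)|\gamma_0^p(x,y)|^2|x-y|^{-1}\,dx\,dy$ contribution.

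The remaining task is to tame these last two terms so that only one unknown $A:=\int\chi_r^+\rho_0$ appears. For the direct term I would use Hardy--Littlewood--Sobolev, interpolation between $L^1$ and $L^{5/3}$, and the Lieb--Thirring inequality applied to $\eta_r\gamma_0\eta_r$ (with $\eta_r^2\ge\chi_{(1+\lambda)^2r}^+$), to get $D[\chi_r^+\rho_0]\lesssim A^{7/6}\operatorname{tr}(-\Delta\eta_r\gamma_0\eta_r)^{1/2}$. For the exchange term I would apply \eqref{eq:chip2} with cutoff $\eta_r$, yielding the bound $A^{1/2}\operatorname{tr}(-\Delta\eta_r\gamma_0\eta_r)^{1/2}$. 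After replacing $r$ by $(1+\lambda)r$ to absorb the $\chi_{(1+\lambda)r}^+\leftrightarrow\chi_r^+$ mismatch into the boundary annulus, the inequality takes the form $A^2\le C\bigl(\int_{r<|x|<(1+\lambda)^2r}\rho_0\bigr)^2+B\cdot A+Cs\,T^{1/2}A^{7/6}+Cs\,T^{1/2}A^{1/2}$ with $T:=\operatorname{tr}(-\Delta\eta_r\gamma_0\eta_r)$, and Young's inequality converts this into the stated bound.

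I expect the main obstacle to be the exchange cross-term: unlike in the TFDW setting of \cite{frank2016ionization}, the density matrix produces a genuinely off-diagonal object $|\gamma_0^p(x,y)|^2$ on the cutting plane that cannot simply be absorbed into $\rho_0(x)\rho_0(y)$. The inequality \eqref{eq:chip2} from Lemma~\ref{pabschaetzung} (which crucially uses $\gamma^{2p}\le\gamma$, available precisely because $p\in[1/2,1]$) is what lets one re-express this exchange in terms of a localized kinetic energy and a localized density integral; without such an estimate uniform in $p$ the bootstrap cannot even start.
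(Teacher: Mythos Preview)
Your proposal is correct and follows essentially the same route as the paper's proof: the same half-plane localization via $\theta$, the IMS-type formula from \eqref{eq:IMS}, the same $l$-integration and $\nu$-averaging with the symmetrization $\nu\mapsto-\nu$, $x\leftrightarrow y$, the same use of Hardy--Littlewood--Sobolev together with Lieb--Thirring for the direct piece, and the same appeal to \eqref{eq:chip2} for the exchange piece, followed by the shift $r\mapsto(1+\lambda)r$ and Young's inequality. The only cosmetic difference is the order in which you present the $r\mapsto(1+\lambda)r$ shift versus the appearance of the annulus term; the paper first derives the inequality at radius $(1+\lambda)r$ and then shifts, whereas you blend these two steps in your description.
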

From this proof we already get an upper bound on the excess charge.
\begin{cor}
	For the minimizer $\gamma_0$ we have \begin{align}\label{eq:aprioribound}
	\mathrm{tr}\gamma_0 = N \leq 2Z + C (Z^{\frac 23} +1).
	\end{align}
	Moreover, 
	\begin{align}\label{eq:aprioribound2}\int \rho_0^{\frac 53} + \mathrm{tr}(-\Delta \gamma_0) + D[\rho_0] \leq C( Z^{\frac 73} + 1)\end{align}
	and 
	\begin{align}\label{eq:aprioribound3}
	X(\gamma_0^p) \leq C (Z^{\frac 53} +1).
	\end{align}
	\begin{proof}
		Choosing $\lambda= \frac 12 $ and $r\to 0^+$ in \eqref{eq:L1est} leads to \begin{align*}
		N^2 \leq (2Z + Cs + Cs^{-1} C )N + C s D[\rho_0] + C s X(\gamma_0^p).
		\end{align*}
		Optimizing over $s>0$ we deduce that 
		\begin{align}\label{eq:l1est2}
		N \leq 2Z + C + C \left( (D[\rho_0] + X(\gamma_0^p) + N ) N^{-1} \right)^{\frac 12}.
		\end{align}
		Using the bounds from Corollary~\ref{cor:apriori} we get
		\begin{align*}
		D[\rho_0]+X(\gamma^p) \leq C (Z^{\frac 73} + N).
		\end{align*}
		Inserting this in \eqref{eq:l1est2} proves \eqref{eq:aprioribound}. Then, the bounds \eqref{eq:aprioribound2} and \eqref{eq:aprioribound3} follow from Corollary~\ref{cor:apriori}.
	\end{proof}
\end{cor}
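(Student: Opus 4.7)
The plan is to apply the intermediate inequality \eqref{eq:L1est} from the proof of Lemma~\ref{lem:keylemmaext} with the specific parameter choices $\lambda = 1/2$ and $r \to 0^+$. With $\lambda$ fixed, every factor of $\lambda^{-1}$ and $\lambda^{-2}$ becomes a harmless absolute constant. In the limit $r \to 0^+$ the screened potential satisfies $|z|\Phi_r(z) \to Z$ (since the screening integral is over $|y|<r$ and vanishes), the ``boundary'' shell integral $\int_{r \leq |x| < (1+\lambda)^2 r} \rho_0$ vanishes, the cutoff $\eta_r$ tends to $1$ (so $\mathrm{tr}(-\Delta\eta_r\gamma_0\eta_r) \to \mathrm{tr}(-\Delta\gamma_0)$), and $\int \chi^+_{(1+\lambda)^2 r}\rho_0 \to N$. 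What survives is a quadratic-in-$N$ inequality of the schematic form
\begin{align*}
N^2 \leq \bigl(2Z + Cs + Cs^{-1} + C\bigr) N + C s\, D[\rho_0] + C s\, X(\gamma_0^p),
\end{align*}
valid for every $s>0$.

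Next I would divide by $N$ and optimize in $s$. Collecting the linear-in-$s$ contributions gives a right-hand side of the form $2Z + C + Cs\bigl(1 + (D[\rho_0]+X(\gamma_0^p))/N\bigr) + Cs^{-1}$, and AM--GM at the optimal choice $s \sim (1 + (D[\rho_0]+X(\gamma_0^p))/N)^{-1/2}$ yields
\begin{align*}
N \leq 2Z + C + C\bigl( (D[\rho_0] + X(\gamma_0^p) + N)/N \bigr)^{1/2}.
\end{align*}
This is the key ``self-referential'' estimate: $N$ still appears on the right through the energy functional pieces.

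The final step is a short bootstrap using Corollary~\ref{cor:apriori}, which supplies $D[\rho_0] + X(\gamma_0^p) \leq C(Z^{7/3} + N)$. Substituting gives $N \leq 2Z + C + C(Z^{7/3}/N + 1)^{1/2}$; I would then split into cases. If $N \leq 2Z$ the conclusion \eqref{eq:aprioribound} is trivial; otherwise $(N-2Z) N^{1/2} \leq C Z^{7/6} + C N^{1/2}$, which after dividing by $N^{1/2} \geq (2Z)^{1/2}$ gives $N - 2Z \leq C(Z^{2/3} + 1)$. This closes the loop and establishes \eqref{eq:aprioribound}. The remaining bounds \eqref{eq:aprioribound2} and \eqref{eq:aprioribound3} then follow immediately by inserting $N \leq C(Z+1)$ into the $(Z^{7/3} + N)$-type estimates of Corollary~\ref{cor:apriori}, absorbing the $N$-term into $C Z^{7/3} + C$.

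The main obstacle I anticipate is obtaining the sharp leading coefficient $2Z$ rather than a larger multiple of $Z$. This requires careful bookkeeping of the numerical constants in the passage to $r \to 0^+$: the factor $1/4$ from the spherical average $\int_{\mathbb{S}^2}[\nu \cdot z]_+\,d\nu/(4\pi) = |z|/4$ must combine just so with the $1/8$ on the left-hand side of \eqref{eq:L1est}, and the potential term $[|z|\Phi_r(z)]_+$ must be estimated uniformly (not merely for $z$ in the support of $\rho_0$). Losing a factor of two here would only yield $N \leq CZ$ with an unspecified constant, which would still suffice for the qualitative conclusion of Theorem~\ref{thm:ionization} but would be a weaker a priori bound than stated.
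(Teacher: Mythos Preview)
Your proposal is correct and follows essentially the same route as the paper: take $\lambda=1/2$, let $r\to 0^+$ in \eqref{eq:L1est}, optimize over $s$, and close the argument using the a priori bound $D[\rho_0]+X(\gamma_0^p)\leq C(Z^{7/3}+N)$ from Corollary~\ref{cor:apriori}. Your case splitting to extract the exponent $Z^{2/3}$ and your remark about the sharp constant $2Z$ coming from the interplay of the $1/8$ on the left and the $1/4$ from the spherical average are accurate and make explicit what the paper leaves implicit.
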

\section{Splitting outside from inside}
In this section we want to estimate the difference of a reduced Hartree-Fock energy between the minimizer $\gamma_0$ and other density matrices away from the nucleus. 
The reduced Hartree-Fock functional is given by  \begin{align}
\mathcal{E}_r^{\mathrm{RHF}}(\gamma) = \mathrm{tr}(-\Delta \gamma ) - \int \Phi_r(x) \rho_\gamma(x) \dl + D[\rho_\gamma].
\end{align}
Note that this functional depends on the minimizer $\gamma_0$.
First, recall that we have introduced a smooth cut-off function $\eta_r:\mathbb{R}^3 \to [0,1]$ satisfying 
\begin{align}
\chi_r^+ \leq \eta_r \leq \chi_{(1+\lambda) r}^+
\end{align}
with $\lambda \in (0,1/2]$. Now we choose a quadratic partition of unity $\eta_-^2 +  \eta_{(0)}^2 + \eta_r^2 = 1$ with \begin{align*}
	\supp \eta_- \subset \{ |x|\leq r \}, \;\; \supp \eta_{(0)} \subset \{ (1-\lambda)r \leq |x| \leq (1+\lambda)r \}, \\
	\eta_-(x) =1 \text{ if } |x| \leq (1-\lambda) r, \;\; |\nabla \eta_-|^2 + |\nabla \eta_{(0)} |^2 + |\nabla \eta_r|^2 \leq C (\lambda r)^{-2}.
\end{align*} 
We will prove 
\begin{lemma}
For all $r>0$, all $\lambda \in (0,1/2]$, all density matrices $0\leq \gamma\leq 1$ satisfying $\supp \rho_\gamma \subset \{ x: |x|\geq r  \}$ and $\mathrm{tr} \gamma \leq \int \chi_r^+ \rho_0$ we have \begin{align*}
\mathcal{E}_r^{\mathrm{RHF}} (\eta_r \gamma_0 \eta_r )  \leq \mathcal{E}_r^{\mathrm{RHF}} (\gamma) + \mathcal{R},
\end{align*}
where \begin{align*}
\mathcal{R} \leq C (1+(\lambda r)^{-2} )  \int_{(1-\lambda)r \leq |x| \leq (1+\lambda) r} \rho_0 + C \lambda^3 r^3 \sup_{|z|\geq (1-\lambda) r}[\Phi_{(1-\lambda)r} (z) ]_+^{ 5/2} \\
+ C \left( \mathrm{tr}(-\Delta \eta_r \gamma_0 \eta_r ) \right)^{1/2} \left( \int \eta_r \rho_0 \right)^{1/2}
.\end{align*}
\begin{proof}
	It suffices to show that \begin{align}
	\mathcal{E}^p_Z(\eta_-\gamma_0\eta_-) + \mathcal{E}_r^{\mathrm{RHF}}(\eta_r \gamma_0 \eta_r) - \mathcal{R} \leq \mathcal{E}^p_Z(\gamma_0) \leq \mathcal{E}^p_Z(\eta_- \gamma_0 \eta_-) + \mathcal{E}^{\mathrm{RHF}}_r(\gamma).\label{eq:sufoutside}
	\end{align}
	
	\textbf{Upper bound.} Since $\gamma_0$ is a minimizer and by Lemma~\ref{lem:non-inc}, we have \begin{align}\label{eq:non-incupper}
	\mathcal{E}_Z^p(\gamma_0) \leq \mathcal{E}^p_Z(\eta_- \gamma_0 \eta_- + \gamma).
	\end{align}
	Since $\eta_-$ and $\rho_\gamma$ have disjoint supports, we have \begin{align*}
	(\eta_- \gamma_0 \eta_- + \gamma)^p = (\eta_- \gamma_0 \eta_-)^p + \gamma^p
	\end{align*}
	and 
	\begin{align*}
|(\eta_- \gamma_0 \eta_- + \gamma)^p(x,y)|^2 =   |(\eta_- \gamma_0 \eta_-)^p(x,y)|^2 + |\gamma^p(x,y|^2.
	\end{align*}
	Hence, \begin{align*}
	X( (\eta_- \gamma_0 \eta_- + \gamma)^p) = X( (\eta_- \gamma_0 \eta_-)^p)+ X(\gamma^p)
	\end{align*}
	and 
	\begin{align*}
	\mathcal{E}^p_Z(\eta_- \gamma_0 \eta_- + \gamma) &= \mathcal{E}^p_Z (\eta_- \gamma_0 \eta_-) + \mathcal{E}^p_Z(\gamma) + \iint \frac{\eta_-^2(x) \rho_0(x)\rho_\gamma(y)}{|x-y|}\dl \dd{y} \\ &\leq \mathcal{E}^p_Z(\eta_- \gamma_0 \eta_-) + \mathcal{E}^{\mathrm{RHF}}_{r=0}(\gamma) + \iint_{|x|\geq r} \frac{\rho_0(x) \rho_\gamma(y)}{|x-y|} \\ & 
= \mathcal{E}_Z^p(\eta_- \gamma_0 \eta_-)  + \mathcal{E}_r^{\mathrm{RHF}} (\gamma) .
	\end{align*}
	Inserting this in \eqref{eq:non-incupper} finishes the proof of the upper bound.
	
	\textbf{Lower bound.} Using the IMS-type formula \eqref{eq:IMS} and properties of the partition of unity, we have \begin{align*}
\mathcal{E}_Z^p(\gamma_0 )\geq & \mathcal{E}^p_Z(\eta_- \gamma_0 \eta_-) + \mathcal{E}^p_Z(\eta_{(0)}\gamma_0 \eta_{(0)} + \mathcal{E}_Z^p(\eta_r \gamma_0 \eta_r) \\ & - \int \left( |\nabla \eta_-|^2 + |\nabla \eta_{(0)}|^2 + |\nabla \eta_r|^2\right) \rho_0 \\ & + \iint \frac{\eta_r(x)^2 \rho_0(x) \rho_0(y) (\eta_-(y)^2 + \eta_0(y)^2 )}{|x-y|}\dl \dd{y} \\
& + \iint \frac{\eta_{(0)}^2(x) \rho_0(x) \rho_0(y) \eta_-(y)^2}{|x-y|}\dl\dd{y} \\ & - \iint \frac{(\eta_r^2(x) +\eta^2_{(0)}) |\gamma_0^p(x,y)|^2}{|x-y|}\dl \dd{y}
	\end{align*}
	and 
	\begin{align*}
	-\int \left(|\nabla \eta_-|^2 + |\nabla \eta_{(0)}|^2 + |\nabla \eta_r|^2 \right)\rho_0 \geq - C (\lambda r)^{-2} \int_{(1-\lambda) r \leq |x| \leq (1+\lambda)r} \rho_0 .
	\end{align*}
	Moreover, \begin{align*}
	\mathcal{E}^p_Z(\eta_r \gamma_0 \eta_r) &+ \iint \frac{\eta_r(x)^2 \rho_0(x) \rho_0(y) (\eta_-(y)^2 + \eta_{(0)}(y)^2) }{|x-y|}\dl \dd{y} - \iint \frac{ - \eta_r(x)^2 |\gamma_0^p(x,y)|^2}{|x-y|}\dl\dd{y} \\ &\geq \mathcal{E}^p(\eta_r \gamma_0 \eta_r) + \iint_{|y|\leq r}\frac{\eta_r(x)^2 \rho_0(x) \rho_0(y)}{|x-y|} - \iint \frac{ - \eta_r(x)^2 |\gamma_0^p(x,y)|^2}{|x-y|}\dl\dd{y}\\& = \mathcal{E}^{\mathrm{RHF}}_r(\eta_r \gamma_0 \eta_r) - X((\eta_r \gamma_0 \eta_r)^p)- \iint \frac{\eta_r(x)^2 |\gamma_0^p(x,y)|^2}{|x-y|} \dl \dd{y}\\ &
	\geq \mathcal{E}_r^{\mathrm{RHF}}(\eta_r \gamma_0 \eta_r) - 3 \left( \mathrm{tr}(-\Delta \eta_r \gamma_0 \eta_r)\right)^{1/2}\left( \int \eta_r^2 \rho_0\right)^2.
	\end{align*}
	We used \eqref{eq:chip2} twice, once with $\chi=1$ for $X((\eta_r\gamma_0 \eta_r)^p)$ and once with $\chi=\eta_r$ for $\iint \frac{\eta_r(x)^2 |\gamma_0^p(x,y)|^2}{|x-y|} \dl \dd{y}$. Similarly, we get
	\begin{align}\nonumber 
	\mathcal{E}^p_Z(\eta_{(0)}\gamma_0 \eta_{(0)}) &\nonumber + \iint \frac{\eta_{(0)}(x)^2 \rho_0(x) \rho_0(y) \eta_-(y)^2}{|x-y|} - \iint \frac{\eta_{(0)}(x)^2 |\gamma^p(x,y)|^2)}{|x-y|} \\ \geq & \nonumber	\mathcal{E}^p_Z(\eta_{(0)}\gamma_0 \eta_{(0)}) + \iint_{|y|\leq (1-\lambda )r} \frac{\eta_{(0)}(x)^2 \rho_0(x) \rho_0(y)}{|x-y|} \dl\dd{y}\\ &\nonumber  - \iint \frac{\eta_{(0)}(x)^2 |\gamma^p_0(x,y)|^2}{|x-y|}\dl \dd{y} \\ = &\nonumber \mathcal{E}^{\mathrm{RHF}}_{(1-\lambda)r}(\eta_{(0)} \gamma_0 \eta_{(0)})  - X((\eta_{(0)} \gamma_0 \eta_{(0)})^p) - \iint \frac{\eta_{(0)} |\gamma_0^p(x,y)|^2}{|x-y|}\dl \dd{y}\\ \geq & \nonumber \mathcal{E}^{\mathrm{RHF}}_{(1-\lambda)r}(\eta_{(0)} \gamma_0 \eta_{(0)}) - 3 \left( \mathrm{tr}(-\Delta  \eta_{(0)} \gamma_0 \eta_{(0)} )\right)^{1/2}\left( \int \rho_0 \eta_{(0)}^2\right)^{1/2}\\& \geq \mathrm{tr}\left( (-\frac 12 \Delta   - \Phi_{(1-\lambda)r}) \eta_{(0)} \gamma_0 \eta_{(0)} \right) - C \int \eta_{(0)}^2 \rho_0.\label{eq:lowerboundinside}
	\end{align}
	Again, we have used \eqref{eq:chip2}. Now, we apply the Lieb-Thirring inequality with $V=\Phi_{(1-\lambda)r} \mathbb{1}_{\supp \eta_{(0)}}$ to obtain
	\begin{align*}
	\mathrm{tr}\left( (-\frac 12 \Delta   - \Phi_{(1-\lambda)r}) \eta_{(0)} \gamma_0 \eta_{(0)} \right) \geq \mathrm{tr}([-\frac 12 \Delta - V]_-) \geq -C \int V^{\frac 52} \\ \geq -C\lambda r^3 \sup_{|x| \geq (1-\lambda)r} [ \Phi_{(1-\lambda)r}(x) ]_+^{5/2}.
	\end{align*}
	Plugging this estimate into \eqref{eq:lowerboundinside} yields
	\begin{align*}
\mathcal{E}^p_Z(\eta_{(0)}\gamma_0 \eta_{(0)}) &\nonumber + \iint \frac{\eta_{(0)}(x)^2 \rho_0(x) \rho_0(y) \eta_-(y)^2}{|x-y|} - \iint \frac{\eta_{(0)}(x)^2 |\gamma^p(x,y)|^2)}{|x-y|} \\ & \geq -C\lambda r^3 \sup_{|x| \geq (1-\lambda)r} [ \Phi_{(1-\lambda)r}(x) ]_+^{5/2} - C \int_{(1-\lambda)r \leq |x| \leq (1+\lambda)r} \rho_0.
	\end{align*}
	In total we get
	\begin{align*}
	\mathcal{E}^p_Z(\gamma_0) \geq \mathcal{E}^p(\eta_- \gamma_0 \eta_0) + \mathcal{E}^{\mathrm{RHF}}_r(\eta_r \gamma_0 \eta_r) - C(1+(\lambda r)^{-2}) \int_{(1-\lambda)r \leq |x| \leq (1+\lambda)r} \rho_0 \\ - C \lambda r^3 \sup_{|x| \geq (1-\lambda)r} [ \Phi_{(1-\lambda)r}(x)]_+^{5/2} - 3 \left(\mathrm{tr}(-\Delta \eta_r \gamma_0 \eta_r) \right)^{1/2} \left( \int \eta_r \rho_0 \right)^{1/2},
	\end{align*}
	which gives the lower bound in \eqref{eq:sufoutside}.
\end{proof}
\label{lem:insideout}
\end{lemma}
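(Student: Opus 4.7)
The plan is to reduce the inequality to the two-sided estimate
\begin{align*}
\mathcal{E}^p_Z(\eta_-\gamma_0\eta_-) + \mathcal{E}^{\mathrm{RHF}}_r(\eta_r\gamma_0\eta_r) - \mathcal{R} \leq \mathcal{E}^p_Z(\gamma_0) \leq \mathcal{E}^p_Z(\eta_-\gamma_0\eta_-) + \mathcal{E}^{\mathrm{RHF}}_r(\gamma),
\end{align*}
which is exactly \eqref{eq:sufoutside}; cancelling $\mathcal{E}^p_Z(\eta_-\gamma_0\eta_-)$ from both sides yields the lemma.

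For the upper bound I would apply the variational principle to the trial density matrix $\eta_-\gamma_0\eta_- + \gamma$. The trace hypothesis $\mathrm{tr}\,\gamma \leq \int \chi_r^+ \rho_0$ together with $\mathrm{tr}\,\eta_-\gamma_0\eta_- \leq \int \eta_-^2 \rho_0 \leq \mathrm{tr}\,\gamma_0 - \int \chi_r^+ \rho_0$ guarantees that the total trace does not exceed $\mathrm{tr}\,\gamma_0$, so Lemma~\ref{lem:non-inc} supplies $\mathcal{E}^p_Z(\gamma_0) \leq \mathcal{E}^p_Z(\eta_-\gamma_0\eta_- + \gamma)$. Since $\supp \eta_- \subset \{|x| \leq r\}$ and $\supp \rho_\gamma \subset \{|x| \geq r\}$, the operators $\eta_-\gamma_0\eta_-$ and $\gamma$ act on orthogonal subspaces of $L^2(\mathbb{R}^3)$, so $(\eta_-\gamma_0\eta_- + \gamma)^p = (\eta_-\gamma_0\eta_-)^p + \gamma^p$ and the exchange term splits additively. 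The Hartree cross term $\iint \eta_-^2(x)\rho_0(x)\rho_\gamma(y)|x-y|^{-1}\dl\dd{y}$ combines with $-Z\int|x|^{-1}\rho_\gamma$ to reconstruct exactly $-\int \Phi_r \rho_\gamma$, producing the desired $\mathcal{E}^{\mathrm{RHF}}_r(\gamma)$.

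For the lower bound I would expand $\mathcal{E}^p_Z(\gamma_0)$ via the IMS-type formula \eqref{eq:IMS} with the partition $\eta_-^2 + \eta_{(0)}^2 + \eta_r^2 = 1$. The kinetic localization error supplies the annular piece $C(\lambda r)^{-2} \int_{(1-\lambda)r\leq |x|\leq(1+\lambda)r}\rho_0$ of $\mathcal{R}$. The IMS exchange correction generates off-diagonal residues of the form $\iint \eta_\ast^2(x)|\gamma_0^p(x,y)|^2|x-y|^{-1}\dl\dd{y}$ for $\eta_\ast \in \{\eta_{(0)},\eta_r\}$, which I would bound by \eqref{eq:chip2} applied with $\chi = \eta_\ast$, producing the $\left(\mathrm{tr}(-\Delta \eta_r\gamma_0\eta_r)\right)^{1/2}\!\left(\int \eta_r\rho_0\right)^{1/2}$ contribution in $\mathcal{R}$ (and an analogous middle-block term to be absorbed below). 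On the supports of $\eta_r^2$ and $\eta_{(0)}^2$, the Hartree cross pieces carrying the factors $(\eta_-^2+\eta_{(0)}^2)(y)$ and $\eta_-^2(y)$, combined with the full nuclear attraction, convert the outer and middle blocks into reduced Hartree-Fock functionals at screened potentials $\Phi_r$ and $\Phi_{(1-\lambda)r}$, respectively (using $\eta_-^2+\eta_{(0)}^2\geq \mathbb{1}_{|y|\leq(1-\lambda)r}$ for the middle block).

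The main obstacle is the middle block: after these reconstructions one is left with a one-body expression $\mathrm{tr}\bigl((-\tfrac12\Delta - \Phi_{(1-\lambda)r})\,\eta_{(0)}\gamma_0\eta_{(0)}\bigr)$ of indefinite sign, which is not automatically small just because $\int\eta_{(0)}^2\rho_0$ is. The strategy is to split the middle kinetic energy into two halves, dedicate one half to this one-body operator while dropping the non-negative direct term, absorb the other half into the exchange residue estimate, and then invoke the Lieb-Thirring bound $\mathrm{tr}[-\tfrac12\Delta - V]_- \geq -C\int V_+^{5/2}$ with $V = \Phi_{(1-\lambda)r}\mathbb{1}_{\supp \eta_{(0)}}$. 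Since the shell $\{(1-\lambda)r\leq|x|\leq(1+\lambda)r\}$ has volume of order $\lambda r^3$ and $\Phi_{(1-\lambda)r}$ is pointwise bounded there by its supremum on $\{|z|\geq(1-\lambda)r\}$, this yields the $\lambda^3 r^3 \sup[\Phi_{(1-\lambda)r}]_+^{5/2}$ error. Summing all pieces gives $\mathcal{R}$ and closes the lower bound, completing the lemma.
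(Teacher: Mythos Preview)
Your proposal is correct and follows essentially the same route as the paper: reduce to the two-sided estimate \eqref{eq:sufoutside}, obtain the upper bound via the trial state $\eta_-\gamma_0\eta_-+\gamma$ using disjoint supports to split the exchange term, and obtain the lower bound via the IMS-type formula \eqref{eq:IMS}, handling the middle block by sacrificing half its kinetic energy to the Lieb--Thirring inequality with $V=\Phi_{(1-\lambda)r}\mathbb{1}_{\supp\eta_{(0)}}$. One small imprecision: in the upper bound the Hartree cross term does not reconstruct $-\int\Phi_r\rho_\gamma$ \emph{exactly}, since $\eta_-^2\leq\mathbb{1}_{|x|\leq r}$ with possible strict inequality on the annulus $(1-\lambda)r\leq|x|\leq r$; but the resulting inequality goes in the right direction (and one also drops $-X(\gamma^p)\leq 0$), so the argument is unaffected.
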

The previous lemma also implies
\begin{lemma}\label{lem:extkin}
	For all $r>0$ and  all $\lambda \in (0,1/2]$ we have \begin{align*}
	\mathrm{tr}(-\Delta \eta_r \gamma_0 \eta_r) \leq C (1+ (\lambda r)^{-2}) \int \chi_{(1-\lambda)r}^+ \rho_0 
	  + C \lambda r^3 \sup_{|z| \geq (1-\lambda)r} [\Phi_{(1-\lambda)r} (z) ]_{+}^{5/2}]
	\\ + C \sup_{|z| \geq r} [|z| \Phi_r(z)]_+^{7/3}.
	\end{align*}
	\begin{proof}
		Applying Lemma~\ref{lem:insideout} to $\gamma=0$ gives $\mathcal{E}_r^{\mathrm{RHF}}(\eta_r \gamma_0 \eta_r) \leq \mathcal{R}$. Using the Lieb-Thirring inequality and the fact that the ground state energy in Thomas-Fermi theory is a negative constant times $Z^{7/3}$, we can bound $\mathcal{E}_r^{\mathrm{RHF}}(\eta_r \gamma_0 \eta_r)$ from below:
		\begin{align*}
		\mathcal{E}_r^{\mathrm{RHF}}(\eta_r \gamma_0 \eta_r) &\geq \frac 12 \mathrm{tr}(-\Delta \eta_r \gamma_0 \eta_r) + C^{-1} \int (\eta_r \rho_0)^{5/3} - \sup_{|z| \geq r}[ |z| \Phi_r(z) ]_+ \int \frac{\eta_r^2 \rho_0}{|x|} + D[\eta_r \rho_0] \\& = \frac 12 \mathrm{tr}(-\Delta \eta_r \gamma_0 \eta_r) - C \sup_{|z|\geq r} [ |z| \Phi_r(z)]_+^{7/3}.
		\end{align*}
		Hence,
		\begin{align*}
		\mathrm{tr}(-\Delta \eta_r \gamma_0 \eta_r) \leq  C \mathcal{R} + C \sup_{|z|\geq r}[ |z| \Phi_r(z)]_+^{7/3},
		\end{align*}
		which implies the lemma.
	\end{proof}
\end{lemma}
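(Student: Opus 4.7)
The plan is to derive Lemma~\ref{lem:extkin} as a straightforward corollary of Lemma~\ref{lem:insideout}, taking $\gamma = 0$ as the competitor, and then extracting a lower bound on $\mathcal{E}_r^{\mathrm{RHF}}(\eta_r\gamma_0\eta_r)$ by a Thomas-Fermi comparison. The choice $\gamma = 0$ trivially satisfies the admissibility hypothesis ($\supp \rho_\gamma = \emptyset$ and $\mathrm{tr}\gamma = 0 \leq \int \chi_r^+ \rho_0$), and yields
\[
\mathcal{E}_r^{\mathrm{RHF}}(\eta_r\gamma_0\eta_r) \leq \mathcal{R},
\]
with $\mathcal{R}$ the remainder from Lemma~\ref{lem:insideout}. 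The first two contributions to $\mathcal{R}$ are already in the desired form; the only issue is the self-referential cross term $C(\mathrm{tr}(-\Delta\eta_r\gamma_0\eta_r))^{1/2}(\int\eta_r\rho_0)^{1/2}$, which must be absorbed into the kinetic energy on the left-hand side via Young's inequality.

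Next I would produce a lower bound on $\mathcal{E}_r^{\mathrm{RHF}}(\eta_r\gamma_0\eta_r)$. The idea is to split the kinetic energy as $\mathrm{tr}(-\Delta\eta_r\gamma_0\eta_r) = \tfrac{1}{2}\mathrm{tr}(-\Delta\eta_r\gamma_0\eta_r) + \tfrac{1}{2}\mathrm{tr}(-\Delta\eta_r\gamma_0\eta_r)$, retain the first half, and use the second half together with the Lieb-Thirring inequality $\tfrac{1}{2}\mathrm{tr}(-\Delta\eta_r\gamma_0\eta_r) \geq c\int(\eta_r^2\rho_0)^{5/3}$ to produce a Thomas-Fermi-type functional in the external potential $\Phi_r$. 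Since $\Phi_r$ is supported on $\{|z|\geq r\}$ in the region seen by $\eta_r$, I can bound $\Phi_r(z) \leq M/|z|$ with $M := \sup_{|z|\geq r}[|z|\Phi_r(z)]_+$, reducing the problem to a hydrogenic Thomas-Fermi functional with effective nuclear charge $M$. By the standard TF scaling, its infimum is bounded below by $-CM^{7/3}$. Dropping the positive Hartree term $D[\eta_r^2\rho_0]$ then gives
\[
\mathcal{E}_r^{\mathrm{RHF}}(\eta_r\gamma_0\eta_r) \geq \tfrac{1}{2}\mathrm{tr}(-\Delta\eta_r\gamma_0\eta_r) - C\sup_{|z|\geq r}[|z|\Phi_r(z)]_+^{7/3}.
\]

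Combining the upper and lower bounds on $\mathcal{E}_r^{\mathrm{RHF}}(\eta_r\gamma_0\eta_r)$ and rearranging yields
\[
\tfrac{1}{2}\mathrm{tr}(-\Delta\eta_r\gamma_0\eta_r) \leq \mathcal{R} + C\sup_{|z|\geq r}[|z|\Phi_r(z)]_+^{7/3}.
\]
Then I would split the offending cross term by $C\sqrt{a}\sqrt{b} \leq \tfrac{1}{4}a + Cb$ with $a=\mathrm{tr}(-\Delta\eta_r\gamma_0\eta_r)$ and $b=\int\eta_r\rho_0$, absorbing $\tfrac{1}{4}a$ into the left-hand side. The residual $Cb$ is dominated by $C\int\chi_{(1-\lambda)r}^+\rho_0$ (since $\eta_r \leq \chi^+_{(1-\lambda)r}$), which merges into the first term of the stated inequality, and the first contribution to $\mathcal{R}$, namely $C(1+(\lambda r)^{-2})\int_{(1-\lambda)r\leq|x|\leq(1+\lambda)r}\rho_0$, is likewise bounded by $C(1+(\lambda r)^{-2})\int\chi_{(1-\lambda)r}^+\rho_0$.

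The only part I would expect any care is the Thomas-Fermi comparison step: one must make sure that Lieb-Thirring for fermionic density matrices $0\leq\gamma\leq 1$ together with the pointwise bound on $\Phi_r$ is being applied to the operator $\eta_r\gamma_0\eta_r$ (which indeed satisfies $0 \leq \eta_r\gamma_0\eta_r \leq 1$ since $\eta_r \leq 1$ and $0 \leq \gamma_0 \leq 1$), and that the TF infimum in the potential $M/|x|$ really does scale as $-CM^{7/3}$ uniformly in the total mass — which follows directly by the usual dilation $\rho(x) \mapsto M^3\rho(Mx)$. Everything else is book-keeping.
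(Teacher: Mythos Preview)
Your proposal is correct and follows essentially the same route as the paper: apply Lemma~\ref{lem:insideout} with $\gamma=0$, bound $\mathcal{E}_r^{\mathrm{RHF}}(\eta_r\gamma_0\eta_r)$ from below via Lieb--Thirring plus the Thomas--Fermi scaling $-CM^{7/3}$ with $M=\sup_{|z|\ge r}[|z|\Phi_r(z)]_+$, and rearrange. You are in fact more explicit than the paper about the Young-inequality absorption of the cross term $C(\mathrm{tr}(-\Delta\eta_r\gamma_0\eta_r))^{1/2}(\int\eta_r^2\rho_0)^{1/2}$, which the paper hides in the phrase ``which implies the lemma''; your handling of that step and of the inclusion $\int_{(1-\lambda)r\le|x|\le(1+\lambda)r}\rho_0\le\int\chi_{(1-\lambda)r}^+\rho_0$ is exactly what is needed.
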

\section{A collection of useful facts}
\subsection{Semiclassical analysis} In order to compare $\mathcal{E}^p_Z$ with Thomas-Fermi theory, we use a semiclassical approximation. The following results are taken from \cite[Lemma~8.2]{solovej2003ionization} after optimising over $\delta >0$ and replacing $V$ by $2V$. Moreover, $L_{\mathrm{sc}}=(15 \pi^2)^{-1}$.
\begin{lemma}\label{lem:semicl}
	For $s>0$, fix a smooth function $g: \mathbb{R}^3 
	\to [0,1]$ such that 
	\begin{align}
	\mathrm{supp} g\subset \{|x|\leq s\}, \int g^2 =1, \int |\nabla g|^2\leq C s^{-2}.
	\end{align}
	\begin{enumerate}
		\item For all $V:\mathbb{R}^3\to \mathbb{R}$ such that $[V]_+, [V-V\ast g^2]_+ \in L^{\frac 52}$ and for all density matrices $0\leq \gamma\leq 1$, we have \begin{align}\label{eq:semicl}
		\mathrm{tr}((-\Delta -V)\gamma)\geq -L_{\mathrm{sc}} \int [V]_+^{\frac 52} - Cs^{-2} \mathrm{tr}\gamma-C\left(\int [V]_+^{\frac 52}\right)^{\frac 35}\left(\int [V-V\ast g^2]^{\frac 52}_+ \right)^{\frac 25}.
		\end{align}
		\item On the other hand, if $[V]_+\in L^{\frac 52}\cap L^{\frac 32}$, then there is a density matrix $\gamma$ such that \begin{align}\rho_\gamma=\frac 52 L_{\mathrm{sc}} [V]_+^{\frac 32}\ast g^2\end{align} and \begin{align}
	\mathrm{tr}(-\Delta \gamma) \leq L_{\mathrm{sc}} \int [V]_+^{\frac 52} + Cs^{-2} \int [V]_+^{\frac 32}.
	\end{align}
	\end{enumerate}
\end{lemma}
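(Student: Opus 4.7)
The plan is the coherent state (Lieb) method; the result is literally \cite[Lemma~8.2]{solovej2003ionization} after the indicated rescaling $V\mapsto 2V$ (to translate between $-\Delta$ and $-\tfrac12\Delta$) and the one-parameter minimisation over $\delta>0$. I would fix coherent states $\Pi_{p,q}(x):=e^{ip\cdot x}g(x-q)$, for which $\int g^2=1$ yields the resolution of identity $(2\pi)^{-3}\int|\Pi_{p,q}\rangle\langle\Pi_{p,q}|\,\dd{p}\dd{q}=\mathbb{1}$, and introduce the symbol $m(p,q):=\langle\Pi_{p,q},\gamma\Pi_{p,q}\rangle$, which satisfies $0\leq m\leq 1$ and $(2\pi)^{-3}\int m\,\dd{p}\dd{q}=\mathrm{tr}\gamma$.

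For part (1) the two crucial identities
\begin{align*}
\mathrm{tr}(-\Delta\gamma)&=(2\pi)^{-3}\!\int p^2 m(p,q)\,\dd{p}\dd{q}-\|\nabla g\|_{L^2}^2\,\mathrm{tr}\gamma,\\
\mathrm{tr}((V\ast g^2)\gamma)&=(2\pi)^{-3}\!\int V(q)\,m(p,q)\,\dd{p}\dd{q}
\end{align*}
follow from a direct calculation (the cross term in the kinetic identity vanishes via $\int g(x-q)\nabla g(x-q)\,\dd{q}=\tfrac12\nabla\int g^2=0$, and the second is Plancherel). Decomposing $V=V\ast g^2+(V-V\ast g^2)$ and minimising $p^2-V(q)$ pointwise in $m\in[0,1]$ (optimum $m=\mathbb{1}(p^2\leq[V]_+)$) produces the leading $-L_{\mathrm{sc}}\int[V]_+^{5/2}$, while $\|\nabla g\|^2\leq Cs^{-2}$ furnishes the $-Cs^{-2}\mathrm{tr}\gamma$ correction. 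The residual $-\int(V-V\ast g^2)\rho_\gamma$ is handled by H\"older and the kinetic Lieb-Thirring inequality $\int\rho_\gamma^{5/3}\leq C\,\mathrm{tr}(-\Delta\gamma)$, giving a bound of the form $-C\,\|[V-V\ast g^2]_+\|_{L^{5/2}}\bigl(\mathrm{tr}(-\Delta\gamma)\bigr)^{3/5}$; reserving a fraction $\delta$ of $-\Delta\gamma$ at the outset, then using the already-derived coherent state estimate to control $\mathrm{tr}(-\Delta\gamma)$ on that reserved piece, and finally optimising over $\delta\in(0,1)$ yields the stated product error term $(\int[V]_+^{5/2})^{3/5}(\int[V-V\ast g^2]_+^{5/2})^{2/5}$.

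For part (2) I would exhibit the explicit trial state
\[ \gamma:=(2\pi)^{-3}\!\int\mathbb{1}(p^2\leq[V(q)]_+)\,|\Pi_{p,q}\rangle\langle\Pi_{p,q}|\,\dd{p}\dd{q}, \]
which is admissible ($0\leq\gamma\leq 1$) because the weight is $\{0,1\}$-valued. Its density evaluates directly to
\[ \rho_\gamma(x)=(2\pi)^{-3}\!\int g(x-q)^2\cdot\tfrac{4\pi}{3}[V(q)]_+^{3/2}\,\dd{q}=\tfrac{5}{2}L_{\mathrm{sc}}\,\bigl([V]_+^{3/2}\ast g^2\bigr)(x), \]
while the kinetic energy is the corresponding Weyl-type integral $(2\pi)^{-3}\!\int_{p^2\leq [V]_+}p^2\,\dd{p}\dd{q}$ plus the coherent state correction $\|\nabla g\|_{L^2}^2\,\mathrm{tr}\gamma\leq Cs^{-2}\int[V]_+^{3/2}$, which together give the claimed bound.

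The main technical obstacle is the $\delta$-optimisation in part (1): one must keep back just enough of the kinetic energy to absorb the smoothing error $\int[V-V\ast g^2]_+\rho_\gamma$ via Lieb-Thirring without spoiling the sharp prefactor $L_{\mathrm{sc}}$ in the leading semiclassical term. This is a routine one-variable minimisation once the coherent state identities above are in hand; everything else is direct calculation.
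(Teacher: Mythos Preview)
Your proposal is correct and is precisely the coherent state argument behind \cite[Lemma~8.2]{solovej2003ionization}, which the paper simply cites without reproducing a proof. The identification of the key steps (the two coherent state identities, the pointwise minimisation giving $-L_{\mathrm{sc}}\int[V]_+^{5/2}$, the H\"older/Lieb--Thirring handling of the smoothing error with the subsequent $\delta$-optimisation, and the explicit trial state for part~(2)) matches the standard derivation exactly.
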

\subsection{Coulomb potential estimate} The following bound is essentially contained in 
\cite[Corollary~9.3]{solovej2003ionization} and appears explicitly in \cite[Lemma~18]{frank2016ionization}.
	\begin{lemma}\label{eq:coulombest}
		For every $f\in L^{\frac 53}\cap L^{\frac 65}$ and $x\in \mathbb{R}^3$, we have \begin{align}
		\left| \int_{|y|<|x|} \frac{f(y)}{|x-y|} \dd{y} \right|\leq C \|f \|_{\frac 53}^{\frac 56} (|x| D[f])^{\frac{1}{12}}.
		\end{align}
	\end{lemma}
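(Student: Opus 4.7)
The plan is to derive the lemma by interpolating between two separately obtained bounds on $I := \int_{|y|<|x|} f(y)|x-y|^{-1} \dd{y}$: an $L^{5/3}$-Hölder bound $|I| \leq C \|f\|_{5/3} |x|^{1/5}$, and a near--far decomposition bound $|I| \leq C \|f\|_{5/3}^{5/7} D[f]^{1/7}$. Since $|I|^{12} = |I|^5 \cdot |I|^7$, combining the two gives $|I|^{12} \leq C \|f\|_{5/3}^{10} |x| D[f]$, which is the claim upon taking twelfth roots.

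For the first bound I would apply Hölder with exponents $5/3$ and $5/2$: $|I| \leq \|f\|_{5/3}(\int_{|y|<|x|} |x-y|^{-5/2} \dd{y})^{2/5}$. A direct computation in spherical coordinates centred at $x$, which lies on the boundary of the ball $|y|<|x|$, shows the inner integral equals a constant times $|x|^{1/2}$, which yields the desired estimate.

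For the second bound, fix $r>0$ and split $I = I_{\text{near}} + I_{\text{far}}$ along $|y-x|<r$ vs.\ $|y-x|\geq r$. Hölder as above, restricted to $B(x,r)$, gives $|I_{\text{near}}| \leq C \|f\|_{5/3} r^{1/5}$ (since $\int_{|z|<r} |z|^{-5/2} \dd{z} = 8\pi r^{1/2}$). To bound $I_{\text{far}}$, I use the identity $\mathbb{1}(|y-x|\geq r)|y-x|^{-1} = V_{\sigma_r}(y) - r^{-1}\mathbb{1}(|y-x|<r)$, where $\sigma_r$ is the uniform unit surface measure on $\partial B(x,r)$; by Newton's theorem $V_{\sigma_r}(y) = \min(r^{-1}, |y-x|^{-1})$ and $D[\sigma_r] = (2r)^{-1}$. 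Cauchy--Schwarz on the Coulomb inner product then gives $|\int (f\mathbb{1}(|y|<|x|)) V_{\sigma_r}| \leq 2(D[f\mathbb{1}(|y|<|x|)] D[\sigma_r])^{1/2} \leq (2D[f]/r)^{1/2}$, while the compensating term $r^{-1}\int_{|y-x|<r, |y|<|x|} f$ is controlled by $C \|f\|_{5/3} r^{1/5}$ again by Hölder. Putting these together, $|I| \leq C(\|f\|_{5/3} r^{1/5} + D[f]^{1/2} r^{-1/2})$; balancing the two terms at $r \sim (D[f]^{1/2}/\|f\|_{5/3})^{10/7}$ gives $|I| \leq C \|f\|_{5/3}^{5/7} D[f]^{1/7}$.

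The main technical obstacle is the Cauchy--Schwarz step $D[f\mathbb{1}(|y|<|x|)] \leq D[f]$, which is immediate for $f \geq 0$ but more subtle when $f$ is signed. I would handle this by invoking the $\dot H^{-1}$ characterisation $2D[h] = \|h\|_{\dot H^{-1}}^2$ (always nonnegative and sign-independent) together with the duality $|\int f \psi_r| \leq \|f\|_{\dot H^{-1}} \|\psi_r\|_{\dot H^1}$ applied to a smoothed version of the cutoff kernel $\psi_r(y) = \mathbb{1}(|y|<|x|, |y-x|\geq r)|y-x|^{-1}$, for which a direct computation yields $\|\nabla \psi_r\|_{L^2} \lesssim r^{-1/2}$. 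This recovers the same bound on $|I_{\text{far}}|$ without any sign restriction on $f$, after which the interpolation step goes through unchanged.
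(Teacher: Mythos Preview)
The paper does not give its own proof of this lemma; it merely cites \cite[Corollary~9.3]{solovej2003ionization} and \cite[Lemma~18]{frank2016ionization}. So there is nothing to compare against, and your proposal should be judged on its own merits.

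Your overall architecture --- the Hölder bound $|I|\le C\|f\|_{5/3}|x|^{1/5}$, an $|x|$--free bound $|I|\le C\|f\|_{5/3}^{5/7}D[f]^{1/7}$, and the interpolation $|I|^{12}=|I|^5|I|^7$ --- is exactly the right idea, and your first bound and the near/far splitting are fine. The genuine difficulty, which you correctly isolate, is the passage from $D[f\mathbb{1}(|y|<|x|)]$ to $D[f]$ for signed $f$; this inequality is \emph{false} in general (pointwise multiplication by $\mathbb{1}_{B}$ is unbounded on $\dot H^{-1}$, since by self-adjointness and duality it would have to be bounded on $\dot H^{1}$, and cutting an $\dot H^{1}$ function by a characteristic function produces a surface measure in the gradient).

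Your proposed fix, however, is not fully justified as written. The claim that a smoothed version of $\psi_r(y)=\mathbb{1}(|y|<|x|,\,|y-x|\ge r)\,|y-x|^{-1}$ satisfies $\|\nabla\psi_r\|_{L^2}\lesssim r^{-1/2}$ \emph{uniformly in $|x|$} does not survive a direct computation. Smoothing the inner cutoff at $|y-x|=r$ indeed costs only $r^{-1}$ in $\|\nabla\cdot\|_{L^2}^2$, but smoothing the outer cutoff $\mathbb{1}(|y|<|x|)$ at scale $r$ contributes an additional
\[
\int_{|x|-r<|y|<|x|}\frac{C}{r^{2}}\,\min\!\bigl(|y-x|^{-2},r^{-2}\bigr)\,\dd y \;\sim\; r^{-1}\bigl(1+\log(|x|/r)\bigr),
\]
because on that thin shell the weight $|y-x|^{-2}$ is only $\gtrsim |x|^{-2}$ far from $x$, and the angular integral produces a logarithm. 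Thus your second bound comes with an extra factor $(1+\log_+(|x|/r))^{1/2}$, and after the $|I|^{12}=|I|^5|I|^7$ interpolation you obtain the stated estimate only up to a power of a logarithm. This is harmless for every application in the paper (logarithms are absorbed by any $|x|^{\epsilon}$), but it is not the clean inequality as stated. To get the sharp form you would need either a more careful choice of the two smoothing scales, or --- closer to what the cited references do --- to bound $\int f\,V_{\sigma_r}$ over \emph{all} of $\mathbb{R}^3$ by $2(D[f]\,D[\sigma_r])^{1/2}$ (no cutoff on $f$) and then control the compensating exterior piece $\int_{|y|\ge|x|}f\,V_{\sigma_r}$ separately via Hölder, which avoids ever needing $D[f\mathbb{1}]$.
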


\section{Screened potential estimate}
\begin{lemma}[Screened potential estimate] \label{lem:screenedpot} There are universal constants $C>0, \epsilon>0, D>0$ such that 
\begin{align}
|\Phi_{|x|}(x) - \Phi_{|x|}^{\mathrm{TF}}(x)| \leq C |x|^{-4+\epsilon},\; \forall |x|\leq D.
\end{align}
\end{lemma}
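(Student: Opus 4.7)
The plan is to adapt Solovej's bootstrap argument \cite{solovej2003ionization,frank2016maximal} and prove, by iteration, a decreasing sequence of estimates of the form $|\Phi_r(x)-\Phi_r^{\mathrm{TF}}(x)|\leq C(r^{-\alpha}+1)$, valid on a range of $r$ that depends on $\alpha$. Starting from a crude exponent $\alpha_0$ that comes from the a priori estimates \eqref{eq:aprioribound}--\eqref{eq:aprioribound3} together with the known behaviour $\Phi_{|x|}^{\mathrm{TF}}(x)\sim |x|^{-4}$ for small $|x|$, each step of the induction improves $\alpha$ by a fixed amount, and the procedure terminates once $\alpha < 4-\epsilon$ for some universal $\epsilon>0$.

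The core of each induction step is a two-sided comparison between the reduced Hartree--Fock energy $\mathcal{E}_r^{\mathrm{RHF}}(\eta_r\gamma_0\eta_r)$ and its Thomas--Fermi counterpart. First I would use Lemma~\ref{lem:insideout} to split off the interior and reduce to controlling this exterior functional. For the upper bound I would plug in a semiclassical trial density matrix with density $\tfrac{5}{2}L_{\mathrm{sc}}[\Phi_r^{\mathrm{TF}}]_+^{3/2}\ast g^2$ supplied by Lemma~\ref{lem:semicl}(2); for the matching lower bound I would apply Lemma~\ref{lem:semicl}(1) with $V=\Phi_r$ restricted to the exterior. Subtracting the two estimates, the quadratic character (strict convexity in the Coulomb norm) of the TF functional yields a quantitative bound on $D[(\rho_0-\rho^{\mathrm{TF}})\chi_r^+]$, which by the Coulomb potential estimate \eqref{eq:coulombest} (applied at the scale $|x|\sim r$) converts into the desired pointwise bound on $\Phi_{|x|}-\Phi_{|x|}^{\mathrm{TF}}$, with a strictly better power of $|x|$ than the one assumed at the previous step.

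The main obstacle, specific to $p\in(1/2,1)$, will be controlling the exchange term $X(\gamma_0^p)$ at each iteration. Unlike the M\"uller case $p=1/2$, no simple Cauchy--Schwarz manipulation eliminates the fractional power, so I would rely on \eqref{eq:chip2} from Lemma~\ref{lem}, combined with the exterior kinetic bound from Lemma~\ref{lem:extkin} and the exterior $L^1$ bound from Lemma~\ref{lem:keylemmaext}, to prove that $X((\eta_r\gamma_0\eta_r)^p)$ and the various $X(\eta_r\gamma_0^p)$-type terms are of subleading order relative to the TF energy at scale $r$. Once this is in place, they are absorbed into the error term $\mathcal{R}$ in Lemma~\ref{lem:insideout} throughout the bootstrap. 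A dyadic choice of the parameters $s$, $\lambda$ and the semiclassical scale, tuned as rational powers of $r$ as in \cite{frank2016maximal}, together with a verification that the new exponent at each step is strictly smaller than the old one, then closes the induction and produces the claimed estimate on $\{|x|\leq D\}$ for some universal $D>0$.
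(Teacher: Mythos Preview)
Your proposal captures the broad architecture (bootstrap, semiclassical comparison, Coulomb potential estimate), but two structural pieces are off. First, the bootstrap here is not an iteration that improves an exponent $\alpha$; the exponent $-4+\epsilon$ is fixed from the outset. What is iterated is the \emph{range} of validity. The initial step (Lemma~\ref{lem:initialstep}) yields the bound $C_1 Z^{49/36-a}|x|^{1/12}$ with a $Z$-dependent prefactor, which translates into $\sigma|x|^{-4+\epsilon}$ only for $|x|\leq Z^{-1/3}$. The iterative step (Lemma~\ref{lem:iterativestep}) then pushes the range from $|x|\leq D_n$ to $|x|\leq D_n^{1-\delta}$, and one stops when $D_n$ passes a universal threshold. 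A scheme that starts from a ``crude exponent $\alpha_0$'' and improves it cannot get going, because the a priori estimates \eqref{eq:aprioribound}--\eqref{eq:aprioribound3} do not give a $Z$-independent power law on all of $\{|x|\leq D\}$.

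Second, and more seriously, you are missing the key intermediate object: the \emph{exterior} Thomas--Fermi minimizer $\rho_r^{\mathrm{TF}}$ of the functional $\mathcal{E}_r^{\mathrm{TF}}(\rho)=c^{\mathrm{TF}}\int\rho^{5/3}-\int(\chi_r^+\Phi_r)\rho+D[\rho]$, whose external potential $\chi_r^+\Phi_r$ depends on $\rho_0$, not on $\rho^{\mathrm{TF}}$. Completing the square in $\mathcal{E}_r^{\mathrm{RHF}}(\eta_r\gamma_0\eta_r)$ naturally produces $D[\eta_r^2\rho_0-\rho_r^{\mathrm{TF}}]$, not $D[\chi_r^+(\rho_0-\rho^{\mathrm{TF}})]$, precisely because the exterior potential is $\Phi_r$; and a trial density built from $[\Phi_r^{\mathrm{TF}}]_+^{3/2}$ will not make the upper and lower bounds meet at the same reference energy. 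One then needs a separate Sommerfeld-type comparison (Lemma~\ref{lem:step3}) between $\rho_r^{\mathrm{TF}}$ and $\chi_r^+\rho^{\mathrm{TF}}$, using that under the bootstrap hypothesis $\Phi_r$ and $\Phi_r^{\mathrm{TF}}$ agree to leading order on $\{|x|=r\}$ together with the universality of the Sommerfeld solution. Without this intermediate step the argument does not close. By contrast, the exchange-term issue you emphasize is not the bottleneck here: once \eqref{eq:chip2} is available, all exchange contributions are already absorbed into $\mathcal{R}$ in Lemma~\ref{lem:insideout}, uniformly in $p\in[1/2,1]$, and no further $p$-specific work enters the bootstrap.
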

As in \cite{frank2016ionization}, this is proved using a bootstrap argument. 
\begin{lemma}[Initial step] \label{lem:initialstep} There is a universal constant $C_1$ such that 
\begin{align}
|\Phi_{|x|}(x) - \Phi_{|x|}^{\mathrm{TF}}(x)| \leq C_1 Z^{\frac{49}{36} - a} |x|^{\frac{1}{12}}, \; \forall |x|>0
\end{align}
with $a=1/198$.
\end{lemma}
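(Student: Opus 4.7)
The plan is to apply the Coulomb potential estimate Lemma~\ref{eq:coulombest} to $f := \rho^{\mathrm{TF}} - \rho_0$. By Newton's theorem (after cancellation of the $Z/|x|$ pieces),
\[
\Phi_{|x|}(x) - \Phi_{|x|}^{\mathrm{TF}}(x) = \int_{|y|<|x|} \frac{\rho^{\mathrm{TF}}(y)-\rho_0(y)}{|x-y|}\dd{y},
\]
so that
\[
|\Phi_{|x|}(x) - \Phi_{|x|}^{\mathrm{TF}}(x)| \leq C\, \|\rho^{\mathrm{TF}}-\rho_0\|_{5/3}^{5/6} \,(|x|\,D[\rho^{\mathrm{TF}}-\rho_0])^{1/12}.
\]
The $L^{5/3}$ factor is immediate from the a priori bound \eqref{eq:aprioribound2} and the classical Thomas-Fermi bound $\|\rho^{\mathrm{TF}}\|_{5/3}\leq CZ^{7/5}$: by the triangle inequality, $\|\rho^{\mathrm{TF}}-\rho_0\|_{5/3}^{5/6} \leq CZ^{7/6}$. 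Everything thus reduces to producing the slightly-better-than-trivial bound
\[
D[\rho^{\mathrm{TF}}-\rho_0] \leq C\, Z^{7/3 - 2/33}.
\]
Indeed, $(2/33)\cdot(1/12) = 1/198$ and $7/6 + 7/36 = 49/36$, so substitution gives exactly $CZ^{49/36 - 1/198}|x|^{1/12}$; the value $a = 1/198$ is nothing more than this arithmetic.

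To produce the bound on $D[\rho^{\mathrm{TF}}-\rho_0]$ I would follow Solovej's sandwich strategy. Convexity of the Thomas-Fermi functional together with the Euler-Lagrange equation for the minimizer $\rho^{\mathrm{TF}}$ yields
\[
\mathcal{E}^{\mathrm{TF}}_Z(\rho_0) - \mathcal{E}^{\mathrm{TF}}_Z(\rho^{\mathrm{TF}}) \geq D[\rho^{\mathrm{TF}}-\rho_0],
\]
so it suffices to compare both TF energies with $\mathcal{E}^p_Z(\gamma_0)$. For the upper bound $\mathcal{E}^p_Z(\gamma_0) \leq \mathcal{E}^{\mathrm{TF}}_Z(\rho^{\mathrm{TF}}) + O(Z^{7/3-2/33})$, I would insert a trial density matrix built from $\rho^{\mathrm{TF}}$ via Lemma~\ref{lem:semicl}(2): its kinetic, nuclear-attraction and direct terms reproduce $\mathcal{E}^{\mathrm{TF}}_Z(\rho^{\mathrm{TF}})$ up to the semiclassical error, and since $-X(\gamma^p)\leq 0$ the exchange only helps. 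For the matching lower bound $\mathcal{E}^p_Z(\gamma_0) \geq \mathcal{E}^{\mathrm{TF}}_Z(\rho_0) - O(Z^{7/3-2/33})$, I would apply Lemma~\ref{lem:semicl}(1) to the effective potential $V = Z/|x| - \rho_0 \ast |\cdot|^{-1}$ and rewrite the leading semiclassical term via the TF equation to recover $\mathcal{E}^{\mathrm{TF}}_Z(\rho_0)$; the exchange term $-X(\gamma_0^p)$ is bounded below by $-CZ^{5/3}$ thanks to \eqref{eq:aprioribound3}, which sits strictly below the target error $Z^{7/3-2/33}$ and is therefore harmless.

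The main obstacle is the delicate semiclassical bookkeeping: one must choose the smoothing scale $s$ in Lemma~\ref{lem:semicl} so that both the $s^{-2}N$ term and the convolution error $(\int[V]_+^{5/2})^{3/5}(\int[V-V\ast g^2]_+^{5/2})^{2/5}$ come out of order $Z^{7/3-2/33}$. This is exactly the optimization performed in \cite[Sect.~11]{solovej2003ionization} (and reused in \cite{frank2016maximal} for the M\"uller functional), and the explicit value $a = 1/198$ is the numerical consequence of that optimization. The fractional power $\gamma^p$ introduces no new difficulty here: it only enters through the exchange term, which has already been absorbed into the lower-order $Z^{5/3}$ remainder via \eqref{eq:aprioribound3}.
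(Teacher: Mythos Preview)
Your overall architecture---reduce to the Coulomb estimate of Lemma~\ref{eq:coulombest}, control $\|\rho_0-\rho^{\mathrm{TF}}\|_{5/3}$ via \eqref{eq:aprioribound2}, and sandwich energies to bound $D[\rho_0-\rho^{\mathrm{TF}}]$---is exactly what the paper does, and your arithmetic $7/3-2/33=25/11$, $179/132=49/36-1/198$ is correct. The upper bound step is also fine.

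The gap is in your lower bound. Applying Lemma~\ref{lem:semicl}(1) with the self-consistent potential $V=Z/|x|-\rho_0\ast|\cdot|^{-1}$ does \emph{not} yield $\mathcal{E}^p_Z(\gamma_0)\geq \mathcal{E}^{\mathrm{TF}}_Z(\rho_0)-O(Z^{25/11})$. Writing $\mathcal{E}^p_Z(\gamma_0)=\mathrm{tr}((-\Delta-V)\gamma_0)-D[\rho_0]-X(\gamma_0^p)$ and invoking the semiclassical lower bound gives $\mathcal{E}^p_Z(\gamma_0)\geq -L_{\mathrm{sc}}\int[V]_+^{5/2}-D[\rho_0]-\text{errors}$. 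But $-L_{\mathrm{sc}}\int[V]_+^{5/2}=\min_{\rho\geq 0}\bigl(c^{\mathrm{TF}}\int\rho^{5/3}-\int V\rho\bigr)\leq c^{\mathrm{TF}}\int\rho_0^{5/3}-\int V\rho_0$, so after subtracting $D[\rho_0]$ you land \emph{below} $\mathcal{E}^{\mathrm{TF}}_Z(\rho_0)$, not above it---the inequality points the wrong way. There is no ``TF equation'' for $\rho_0$ to invoke here; $\rho_0$ is the density of the Power minimizer, not a TF minimizer. A secondary problem is that the convolution error $\int[V-V\ast g^2]_+^{5/2}$ would require regularity of $\rho_0\ast|\cdot|^{-1}$, which you do not have a priori.

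The paper's remedy is to take $V=\varphi^{\mathrm{TF}}=Z/|x|-\rho^{\mathrm{TF}}\ast|\cdot|^{-1}$ instead. Then the algebraic identity
\[
\mathcal{E}^p_Z(\gamma_0)=\mathrm{tr}\bigl((-\Delta-\varphi^{\mathrm{TF}})\gamma_0\bigr)+D[\rho_0-\rho^{\mathrm{TF}}]-D[\rho^{\mathrm{TF}}]-X(\gamma_0^p)
\]
produces $D[\rho_0-\rho^{\mathrm{TF}}]$ \emph{directly} on the right-hand side, and since $\rho^{\mathrm{TF}}$ genuinely satisfies the TF equation one has the exact relation $-L_{\mathrm{sc}}\int[\varphi^{\mathrm{TF}}]_+^{5/2}-D[\rho^{\mathrm{TF}}]=\mathcal{E}^{\mathrm{TF}}(\rho^{\mathrm{TF}})$. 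The convolution error is controlled by the known TF scaling $\int|\varphi^{\mathrm{TF}}-\varphi^{\mathrm{TF}}\ast g^2|^{5/2}\leq CZ^{5/2}s^{1/2}$. With this choice the lower bound reads $\mathcal{E}^p_Z(\gamma_0)\geq \mathcal{E}^{\mathrm{TF}}(\rho^{\mathrm{TF}})+D[\rho_0-\rho^{\mathrm{TF}}]-CZ^{25/11}$, and combining with the upper bound gives $D[\rho_0-\rho^{\mathrm{TF}}]\leq CZ^{25/11}$ without any appeal to your convexity step.
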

\begin{lemma}[Iterative step]\label{lem:iterativestep} There are universal constants $C_2, \beta, \delta, \epsilon>0$ such that, if 
\begin{align}\label{eq:itcond}
|\Phi_{|x|} (x) -\Phi_{|x|}^{\mathrm{TF}}(x) | \leq \beta |x|^{-4}, \; \forall |x| \leq D
\end{align}
for some $D\in [Z^{-\frac 13}, 1]$, then
\begin{align}
|\Phi_{|x|}(x) - \Phi_{|x|}^{\mathrm{TF}}(x) | \leq C_2 |x|^{-4 + \epsilon}, \; \forall D\leq |x| \leq D^{1-\delta}.
\end{align}
\end{lemma}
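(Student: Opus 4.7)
The plan is to follow the bootstrap scheme of Solovej \cite{solovej2003ionization}, as adapted in \cite{frank2016maximal} to density-matrix functionals. The hypothesis \eqref{eq:itcond} gives tight control of $\Phi_{|x|}-\Phi_{|x|}^{\mathrm{TF}}$ on $\{|x|\leq D\}$, and I want to propagate this control outward to $\{|x|\leq D^{1-\delta}\}$, accepting a weakened bound of order $|x|^{-4+\epsilon}$. The mechanism is a two-sided comparison at a generic radius $r\in[D,D^{1-\delta}]$ of the minimizer $\gamma_0$, localized to a thin annulus around $r$, against a semiclassical trial state of Thomas--Fermi type whose density matches $\tfrac{5}{2}L_{\mathrm{sc}}[\Phi_r]_+^{3/2}$.

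Concretely, I would fix such an $r$ together with a localization scale $\lambda=r^{\alpha}$ for a small $\alpha>0$. Lemma~\ref{lem:insideout} gives
\begin{equation*}
\mathcal{E}_r^{\mathrm{RHF}}(\eta_r\gamma_0\eta_r)\leq\mathcal{E}_r^{\mathrm{RHF}}(\gamma)+\mathcal{R}
\end{equation*}
for any trial $\gamma$ supported in $\{|x|>r\}$ with $\mathrm{tr}\,\gamma\leq\int\chi_r^+\rho_0$; I choose $\gamma$ via Lemma~\ref{lem:semicl}(ii) applied to a suitable truncation of $\Phi_r$, so that $\rho_\gamma\approx\tfrac{5}{2}L_{\mathrm{sc}}[\Phi_r]_+^{3/2}$ on $\{|x|\geq r\}$. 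Matching this upper bound with the Lieb--Thirring-type lower bound of Lemma~\ref{lem:semicl}(i) on $\mathcal{E}_r^{\mathrm{RHF}}(\eta_r\gamma_0\eta_r)$ shows that $\rho_0$ must agree with the semiclassical Thomas--Fermi density (up to quantitative error) in integrated form on the shell of radius $r$. The error $\mathcal{R}$ is controlled using the hypothesis, which in particular bounds $\sup_{|z|\geq(1-\lambda)r}[\Phi_{(1-\lambda)r}(z)]_+$ since $\Phi^{\mathrm{TF}}\sim|x|^{-4}$ at small $|x|$, and using Lemma~\ref{lem:extkin} for the kinetic contribution.

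To convert this density-level information into the required pointwise potential bound, I would write
\begin{equation*}
\Phi_{|x|}(x)-\Phi_{|x|}^{\mathrm{TF}}(x)=\int_{|y|<|x|}\frac{\rho^{\mathrm{TF}}(y)-\rho_0(y)}{|x-y|}\,dy
\end{equation*}
and apply Lemma~\ref{eq:coulombest} to $f=\rho^{\mathrm{TF}}-\rho_0$. The $L^{5/3}$ norm of $f$ is controlled by Corollary~\ref{cor:apriori} together with the standard Thomas--Fermi a priori estimates, while $D[f]$ is bounded by the shell estimate just obtained combined with the hypothesis on the inner region $\{|y|\leq D\}$. Choosing $\alpha$, the semiclassical scale $s$, and $\delta$ as suitable small powers of $r$ and $D$ respectively, and collecting error exponents, yields $|\Phi_{|x|}(x)-\Phi_{|x|}^{\mathrm{TF}}(x)|\leq C_2|x|^{-4+\epsilon}$ throughout $[D,D^{1-\delta}]$ for some universal $\epsilon,\delta>0$.

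The main obstacle, beyond the considerable bookkeeping of error exponents, is the treatment of the exchange term $X(\gamma_0^p)$ in the regime $p\in(1/2,1)$. In M\"uller's case $p=1/2$ the identity $\gamma^{1/2}\cdot\gamma^{1/2}=\gamma$ allows the exchange to be distributed cleanly across the inside-out partition, whereas here I must rely on the operator inequality $\chi\gamma^p\chi\leq(\chi\gamma\chi)^p$ of Lemma~\ref{pabschaetzung} together with the bound $X(\chi\gamma^p)\leq(\mathrm{tr}(-\Delta\chi\gamma\chi))^{1/2}(\int\chi^2\rho_\gamma)^{1/2}$. These produce extra cross terms in the localization whose size must be shown to be strictly lower-order than the leading semiclassical discrepancy; this is precisely what forces $\epsilon$ to be smaller than what the $p=1/2$ argument of \cite{frank2016maximal} delivers. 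Ensuring that the whole parameter optimisation closes uniformly in $p\in[1/2,1]$ is the genuinely new technical content of the iterative step.
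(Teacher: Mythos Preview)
Your proposal has the right flavor but two genuine gaps that prevent it from closing.

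First, your choice of scale is backwards. You fix $r\in[D,D^{1-\delta}]$, i.e.\ \emph{outside} the region where the hypothesis \eqref{eq:itcond} is assumed. But every a priori input you need---the bounds on $\sup_{|z|\geq(1-\lambda)r}[\Phi_{(1-\lambda)r}(z)]_+$, on $\int_{|x|>r}\rho_0$, on $\mathrm{tr}(-\Delta\eta_r\gamma_0\eta_r)$---comes from Lemma~\ref{lem:step1}, which requires $r\leq D$. In the paper the localization radius is $r=D^{1+\delta}\leq D$, and the conclusion is then read off at points $|x|\in[D,D^{1-\delta}]$ lying \emph{outside} the ball of radius $r$; the ratio $|x|/r$ is what generates the gain.

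Second, and more seriously, you skip the triangulation through the exterior Thomas--Fermi minimizer $\rho_r^{\mathrm{TF}}$ (Lemmas~\ref{lem:step2}--\ref{lem:step4}), and as a result your Coulomb estimate cannot close. You propose to apply Lemma~\ref{eq:coulombest} directly to $f=\rho^{\mathrm{TF}}-\rho_0$ and bound $\|f\|_{5/3}$ via Corollary~\ref{cor:apriori}. That corollary gives the \emph{global} bound $\int\rho_0^{5/3}\leq CZ^{7/3}$, so $\|f\|_{5/3}^{5/6}\sim Z^{7/6}$, which is exactly the size appearing in the initial step and cannot be beaten down to $|x|^{-4+\epsilon}$ for $|x|\gg Z^{-1/3}$. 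The paper instead compares $\chi_r^+\rho_0$ with $\rho_r^{\mathrm{TF}}$, both supported in $\{|y|>r\}$, so that only the \emph{local} bound $\int_{|y|>r}\rho_0^{5/3}\leq Cr^{-7}$ enters; the remaining discrepancy $\rho_r^{\mathrm{TF}}-\chi_r^+\rho^{\mathrm{TF}}$ is handled pointwise by the Sommerfeld asymptotics of Lemma~\ref{lem:step3}, giving the decisive factor $(r/|x|)^\zeta$. Without this intermediate exterior TF problem and the Sommerfeld comparison, your scheme simply reproduces Lemma~\ref{lem:initialstep}.

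A minor remark: your last paragraph overstates the role of the exchange term here. The $p$-dependent operator inequalities of Lemma~\ref{pabschaetzung} are used upstream (in Lemma~\ref{lem:insideout} and Lemma~\ref{lem:extkin}); once those are in place, the iterative step itself runs exactly as in the M\"uller case \cite{frank2016maximal}, with no further $p$-dependence and no reduction in $\epsilon$.
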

Now, we want to prove Lemma~\ref{lem:screenedpot} using Lemma~\ref{lem:initialstep} and Lemma~\ref{lem:iterativestep}.
\begin{proof}[Proof of Lemma~\ref{lem:screenedpot}]
We set $\sigma:=\max\{C_1,C_2\}$. Without loss of generality we may assume that $\beta < \sigma$ and $\epsilon\leq 3a = \frac{1}{66}$. We set 
\begin{align*}
D_n = Z^{-\frac 13 (1-\delta)^n}, \; n=0,1,2,\dots .
\end{align*}
From Lemma~\ref{lem:initialstep}, we have 
\begin{align*}
|\Phi_{|x|}(x) - \Phi_{|x|}^{\mathrm{TF}}(x) | \leq C_1 Z^{\frac{49}{36} - a} |x|^{\frac{1}{12}} \leq \sigma |x|^{-4 +\epsilon},\;  \forall |x|\leq D_0=Z^{-\frac 13}
\end{align*}
and some $\epsilon>0$ small enough. From Lemma~\ref{lem:iterativestep}, we deduced by induction that for all $n=0,1,2,\dots$ if \begin{align*}
\sigma (D_n)^\epsilon \leq \beta,
\end{align*}
then 
\begin{align*}
|\Phi_{|x|}(x) - \Phi^{\mathrm{TF}}_{|x|}(x) | \leq \sigma |x|^{-4+\epsilon}, \; \forall |x| \leq D_{n+1}.
\end{align*}
Note that $D_n\to 1$ as $n\to \infty$ and that $\sigma>\beta$. Thus, there is a minimal $n_0\in \{0,1,2,\dots\}$ such that $\sigma(D_{n_0})^\epsilon > \beta$. If $n_0 \geq 1$, then $\sigma (D_{n_0 - 1})^{\epsilon} \leq \beta $ and therefore, by the preceding argument,
\begin{align*}
|\Phi_{|x|}(x) - \Phi^{\mathrm{TF}}_{|x|} (x) | \leq \sigma |x|^{-4+\epsilon}, \; \forall |x| \leq D_{n_0}.
\end{align*}
As shown above, the same bound holds true for $n_0=0$. Now, let $D=(\sigma^{-1}\beta)^{\frac 1 \epsilon}$, which is an universal constant, and note that by choice of $n_0$ we have $D_{n_0} \geq D$. 
\end{proof}
\section{Initial step}
In this section we prove Lemma~\ref{lem:initialstep}. We write 
\begin{align*}
\mathcal{E}^{\mathrm{RHF}}(\gamma) = \mathcal{E}^{\mathrm{RHF}}_{r=0} (\gamma) = \mathrm{tr} (-\Delta \gamma) - \int \frac{Z \rho_\gamma(x) }{|x|} \dl + D[\rho_\gamma].
\end{align*}
\begin{proof}[Proof of Lemma~\ref{lem:initialstep}]
The strategy is to bound $\mathcal{E}^p_Z(\gamma_0)$ from above and below using semi-classical estimates from Lemma~\ref{lem:semicl}. The main term in both bounds is $\mathcal{E}^{\mathrm{TF}}(\rho^{\mathrm{TF}})$, but in the lower bound we will get an additional term $D[\rho_0-\rho^{\mathrm{TF}}]$. The error terms in the upper and lower bounds will then give an upper bound on $D[\rho_0-\rho^{\mathrm{TF}}]$ which will imply the lemma. 

\textbf{Upper bound.} We shall show that 
\begin{align}\label{eq:upperbound}
\mathcal{E}^p_Z(\gamma_0) \leq \mathcal{E}^{\mathrm{TF}}(\rho^{\mathrm{TF}}) + CZ^{\frac{11}{5}}.
\end{align}
Indeed, by Lemma~\ref{lem:non-inc}, 
\begin{align*}
N\mapsto E^p(N,Z)
\end{align*}
is non-increasing and since the contribution of the exchange term to the energy is non-positive, we have 
\begin{align*}
\mathcal{E}^p_Z(\gamma_0) \leq E^p_{\leq }(N,Z) \leq E^{\mathrm{RHF}}_{\leq} (N,Z),
\end{align*}
where $E^p_{\leq}(N,Z) = \inf\{\mathcal{E}^p_Z(\gamma): \gamma\in \mathcal{I}, \spr{\gamma}\leq N \}$ and analogously for $ E^{\mathrm{RHF}}_{\leq} (N,Z)$.
Now, \eqref{eq:upperbound} follows from a well-known bound on the ground state energy in reduced Hartree-Fock theory \cite[Proof of Theorem~5.1]{lieb1981thomas}.

\textbf{Lower bound.} We now show that 
\begin{align}\label{eq:lowerbound}
\mathcal{E}^p(\gamma_0) \geq \mathcal{E}^{\mathrm{TF}}(\rho^{\mathrm{TF}}) + D[\rho_0 - \rho^{\mathrm{TF}}] - C Z^{\frac{25}{11}}.
\end{align}
With the Thomas-Fermi potential $\varphi^{\mathrm{TF}}= \frac{Z}{|x|} - \rho^{\mathrm{TF}}\ast |x|^{-1}$ we can write 
\begin{align*}
\mathcal{E}^p(\gamma_0) = \mathrm{tr}( (-\Delta - \varphi^{\mathrm{TF}}) \gamma_0) + D[\rho_0 - \rho^{\mathrm{TF}}] - D[\rho^{\mathrm{TF}}] - X(\gamma_0^p).
\end{align*}
Recall from \eqref{eq:aprioribound3} the bound \begin{align} X(\gamma_0^p) \leq CZ^{\frac 53}.
\end{align}
Next, from the semiclassical estimate \eqref{eq:semicl} we have
\begin{align*}
\mathrm{tr}((-\Delta - \varphi^{\mathrm{TF}})\gamma_0)\geq - L_{\mathrm{sc}}\int [\varphi^{\mathrm{TF}}]_+^{\frac 52} - Cs^{-2} \mathrm{tr} \gamma_0 - C \left( \int [\varphi^{\mathrm{TF}}]_+^{\frac 52}\right)^{\frac 35} \left( [\varphi^{\mathrm{TF}} - \varphi^{\mathrm{TF}} \ast g^2]_+^{\frac 52}\right)^{\frac 25}.
\end{align*}
According to \eqref{eq:aprioribound}, we can bound $\mathrm{tr}\gamma_0= N \leq C Z$. Moreover, by scaling, \begin{align*}
\int |\varphi^{\mathrm{TF}}|^{\frac 52} = C \int (\rho^{\mathrm{TF}})^{\frac 53} \leq C Z^{\frac 73}
\end{align*}
and, as explained in \cite[end of page 554]{solovej2003ionization},
\begin{align*}
\int |\varphi^{\mathrm{TF}} - \varphi^{\mathrm{TF}}\ast g^2|^{\frac 52} \leq C Z^{\frac 52} s^{\frac 12}.
\end{align*}
Thus, \begin{align*}\mathrm{tr}((-\Delta -\varphi^{\mathrm{TF}} )\gamma_0) \geq - L_{\mathrm{sc}} \int [\varphi^{\mathrm{TF}}]_+^{\frac 52} - C Z^{\frac{25}{11}}.\end{align*}
Optimising over $s>0$ we get 
\begin{align*}
\mathrm{tr}((-\Delta -\varphi^{\mathrm{TF}}(\gamma_0))\geq -L_{\mathrm{sc}} \int [\varphi^{\mathrm{TF}}]_+^{\frac 52}-CZ^{\frac{25}{11}}.
\end{align*}
From the Thomas-Fermi equation we have \begin{align}\label{eq:tfq}
-L_{\mathrm{sc}} \int [\varphi^{\mathrm{TF}}]_+^{\frac 52} - D[\rho^{\mathrm{TF}}] = \mathcal{E}^{\mathrm{TF}}(\rho^{\mathrm{TF}}),
\end{align}
which proves \eqref{eq:lowerbound}.

\textbf{Conclusion.} Combining \eqref{eq:upperbound} and \eqref{eq:lowerbound}, we deduce that 
\begin{align*}
D[\rho_0-\rho^{\mathrm{TF}}]\leq C Z^{\frac{25}{11}}.
\end{align*}
From the Coulomb estimate \eqref{eq:coulombest} with $f=\rho_0-\rho_0^{\mathrm{TF}}$ and the kinetic estimates
\begin{align*}
\int \rho_0^{\frac 53} \leq C Z^{\frac 73}, \; \int (\rho^{\mathrm{TF}})^{\frac 53} \leq C Z^{\frac 73}
\end{align*}
(the first estimate follows from \eqref{eq:corrho} and the second one from scaling),
we find that for all $|x|>0$, 
\begin{align*}
|\Phi_{|x|} (x) - \Phi_{|x|}^{\mathrm{TF}}(x) | &= \left| \int_{|y|\leq|x|} \frac{\rho_0(y) - \rho^{\mathrm{TF}}(y)}{|x-y|} \dd{y}\right| \\ & \leq C \|\rho_0-\rho^{\mathrm{TF}}\|^{\frac 56}_{\frac 53} ( |x| D[\rho_0 - \rho^{\mathrm{TF}}])^{\frac{1}{12}} \\ &\leq C Z^{\frac{179}{132}} |x|^{\frac{1}{12}}.
\end{align*}
Since $179/132 = 49/36-1/198$, this is the desired bound.
\end{proof}
\section{Iterative step}
The goal of this section is to prove Lemma~\ref{lem:iterativestep}. The proof is split into five steps. 

\textbf{Step 1.} We collect some consequences of \eqref{eq:itcond}.
\begin{lemma}\label{lem:step1}
	Assume that \eqref{eq:itcond} holds true for some $\beta, D\in (0,1]$. Then, for all $r\in (0,D]$, we have 
	\begin{align}
\label{eq:lemmait1}	\left| \int_{|x|<r} (\rho_0 - \rho^{\mathrm{TF}}) \right| & \leq \beta r^{-3},\\
\label{eq:lemmait2}	\sup_{|x|\geq r}	|x||\Phi_r(x)| & \leq C r^{-3}, \\
\label{eq:lemmait3}\int_{|x|>r} \rho_0 & \leq C r^{-3}, \\
\label{eq:lemmait4}\int_{|x|>r }\rho_0^{\frac 53} & \leq C r^{-7}, \\
\label{eq:lemmait5}\mathrm{tr}(-\Delta \eta_r \gamma_0 \eta_r) & \leq C (r^{-7} +  \lambda^{-2} r^{-5}), \; \forall \lambda\in (0,1/2].
	\end{align}
	\begin{proof}
		Let $r\in (0,D]$. By Newton's theorem, we have \begin{align*}
		\int_{|y|<r} (\rho^{\mathrm{TF}}(y) - \rho_0(y) )\dd{y} &= r \int_{\mathbb{S}^2} \left(\int_{|y|<r} \frac{\rho^{\mathrm{TF}}(y) - \rho_0(y)}{|r\nu - y|} \dd{y}\right) \frac{\dd\nu}{4\pi} \\ &= r \int_{\mathbb{S}^2} (\Phi_r(r\nu) - \Phi_r^{\mathrm{TF}}(r\nu) ) \frac{\dd\nu}{4\pi}.
		\end{align*}
		Thus, \eqref{eq:lemmait1} follows directly from \eqref{eq:itcond}.
		
		In order to prove \eqref{eq:lemmait2} we first use the following bounds from Thomas-Fermi theory:
		\begin{align*}
		\varphi^{\mathrm{TF}} (x) \leq C |x|^{-4}, \; \rho^{\mathrm{TF}} (x) \leq C |x|^{-6}.
		\end{align*}
		The first bound is proved in Theorem~5.2 of \cite{solovej2003ionization}. Note that $\mu^{\mathrm{TF}}=0$ since $\rho^{\mathrm{TF}}$ is the minimizer of a neutral atom. The second estimate can be found in the proof of Lemma~5.3 in \cite{solovej2003ionization}. Using these bounds we have for all $|x|>0$
		\begin{align*}
		\Phi_{|x|}^{\mathrm{TF}}(x) = \varphi^{\mathrm{TF}}(x) + \int_{|y|>|x|} \frac{\rho^{\mathrm{TF}}(y)}{|x-y|}\dd{y} \leq C |x|^{-4} ,
		\end{align*}
		where Newton's theorem was used to get the bound on the integral. This implies $\Phi_r^{\mathrm{TF}}(x) \leq C r^{-4}$ for all $|x|=r$. Now, we use the assumptions \eqref{eq:itcond} to obtain
		\begin{align*}
		\Phi_r(x) = ( \Phi_r(x) - \Phi_r^{\mathrm{TF}}(x) )+ \Phi_r^{\mathrm{TF}}(x) \leq C r^{-4} \;\; \forall |x|=r.
		\end{align*}
		Note that $\Phi_r(x)$ is harmonic ($\Delta \Phi_r (x)=0$) for $|x|>r$ and vanishes at infinity. Thus, we can apply Lemma~19 of \cite{frank2016ionization}, which is a consequence of the maximum principle, to obtain
		\begin{align*}
		\sup_{|x|\geq r} |x|\Phi_r(x) = \sup_{|x|=r} |x|\Phi_r(x) \leq C r^{-3}.
		\end{align*}
		Carrying out the same arguments for $-\Phi_r(x)$  gives $ -\sup_{|x|\geq r} |x|\Phi_r(x) \leq C r^{-3}$ which concludes the proof of \eqref{eq:lemmait2}.
		
		Now, we prove \eqref{eq:lemmait3}. Using the assumption \eqref{eq:itcond} and the bound $\rho^{\mathrm{TF}}(x)\leq C |x|^{-6}$, we have 
		\begin{align}
	\nonumber
		\int_{r/3 < |x| < r} \rho_0 &= \int_{|x|< r} (\rho_0 - \rho^{\mathrm{TF}}) - \int_{|x|<r/3} (\rho_0 - \rho^{\mathrm{TF}}) + \int_{r/3 < |x| < r} \rho^{\mathrm{TF}} 
		\\ &\leq \beta r^{-3} + \beta (r/3)^{-3} + C r^{-3} \leq C r^{-3}.\label{eq:proofit}
		\end{align}
		Now, we use Lemma~\ref{lem:extkin} as well as \eqref{eq:lemmait2} and \eqref{eq:proofit} to get
		\begin{align}\nonumber
		\mathrm{tr}(-\Delta \eta_r \gamma_0 \eta_r)\leq &C (1+(\lambda r)^{-2}) \int \chi^+_{(1-\lambda)r}\rho_0 \\ & + C \lambda r^3 \sup_{|z|\geq (1-\lambda) r}[\Phi_{(1-\lambda) r}(z)]_+^{\frac 52} + C \sup_{|z|\geq
			\nonumber r}[|z|\Phi_r(z)]_+^{\frac 73}\\ \leq & C \left( (\lambda r)^{-2} \int \chi_r^+ \rho_0 + \lambda^{-2} r^{-5} + r^{-7} \right).\label{eq:proofitr}
		\end{align}
		Doing the same estimate again but replacing $r$ by $r/3$, we get
		\begin{align}
		\mathrm{tr}(-\Delta \eta_{r/3}\gamma_0 \eta_{r/3}) \leq C \left( (\lambda r)^{-2} \int \chi_r^+ \rho_0 + \lambda^{-2}r^{-5} + r^{-7}\right).\label{eq:proofitr3}
		\end{align}
		From Lemma~\ref{lem:keylemmaext}, replacing $r$ by $r/3$ and choosing $s=r$, we find that 
		\begin{align*}
		\int \chi_{r/3}^+ \rho_0 \leq &C \int_{r/3<  |x|<(1+\lambda)^2 r/3}\rho_0 + C \left(\sup_{|z|\geq r/3}[|z|\Phi_{r/3}(z)]_+ + \lambda^{-2} r^{-1}\right)\nonumber \\ &
		+C \left( r^2 \mathrm{tr}(-\Delta \eta_{r/3}\gamma_0 \eta_{r/3})\right)^{\frac 35} + C \left(r^2 \mathrm{tr}(-\Delta \eta_{r/3}\gamma_0 \eta_{r/3})\right)^{\frac 13}.
		\end{align*}
		Inserting \eqref{eq:lemmait2}, \eqref{eq:proofit} and \eqref{eq:proofitr3} into the latter estimate leads to
		\begin{align*}\nonumber
		\int \chi_r^+\rho_0 \leq \int \chi_{r/3}^+ \rho_0 \leq C (r^{-3} + \lambda^{-2} r^{-1}) &+ C \left( \lambda^{-2} \int \chi_r^+ \rho_0 + \lambda^{-2} r^{-3} + r^{-5}\right)^{\frac 35} \\ & +C \left( \lambda^{-2} \int \chi_r^+ \rho_0 + \lambda^{-2} r^{-3} + r^{-5}\right)^{\frac 13} .
		\end{align*}
		This implies \eqref{eq:lemmait3} (e.g.\ choose $\lambda=1/2$). To obtain \eqref{eq:lemmait5} we just insert \eqref{eq:lemmait3} into \eqref{eq:proofitr}.
		
		We use the kinetic Lieb-Thirring inequality and \eqref{eq:lemmait5} to obtain
		\begin{align*}
		\int_{|x|>r} \rho_0^{\frac 53} \leq \int (\eta_{r/3}\rho_0 \eta_{r/3})^{\frac 53} \leq C \mathrm{tr}(-\Delta \eta_{r/3} \gamma_0 \eta_{r/3}) \leq C (r^{-7} + \lambda^{-2}r^{-5}).
		\end{align*}
		Again, we can choose $\lambda= 1/2$ to get \eqref{eq:lemmait4}.
	\end{proof}
\end{lemma}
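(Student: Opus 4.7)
The plan is to establish the five conclusions roughly in the order \eqref{eq:lemmait1}, \eqref{eq:lemmait2}, \eqref{eq:lemmait3}, \eqref{eq:lemmait5}, \eqref{eq:lemmait4}, since the later estimates feed on the earlier ones and on the exterior estimates already proved (Lemma~\ref{lem:keylemmaext} and Lemma~\ref{lem:extkin}). For \eqref{eq:lemmait1} I would note that by Newton's theorem the spherical average of $\Phi_r(x)-\Phi_r^{\mathrm{TF}}(x)$ over $|x|=r$ equals $r^{-1}\int_{|y|<r}(\rho^{\mathrm{TF}}-\rho_0)$, so the pointwise hypothesis \eqref{eq:itcond} at $|x|=r$ gives the bound with constant exactly $\beta$ (which is important later to get the desired sharp constant in the iteration).

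For \eqref{eq:lemmait2} the key inputs are the Thomas--Fermi bounds $\varphi^{\mathrm{TF}}(x)\le C|x|^{-4}$ and $\rho^{\mathrm{TF}}(x)\le C|x|^{-6}$ (Theorem~5.2 and Lemma~5.3 of \cite{solovej2003ionization}, using $\mu^{\mathrm{TF}}=0$ for the neutral atom). Combining these with Newton's theorem for the exterior integral gives $\Phi_r^{\mathrm{TF}}(x)\le Cr^{-4}$ on $|x|=r$; adding the hypothesis \eqref{eq:itcond} gives $|\Phi_r(x)|\le Cr^{-4}$ on that sphere. Since $\Phi_r$ is harmonic for $|x|>r$ and vanishes at infinity, the maximum principle in the form used by Frank--Nam--Van den Bosch (Lemma~19 of \cite{frank2016ionization}) promotes this sphere bound to the claimed estimate on $|x|\ge r$ (applied to both $\pm\Phi_r$).

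For \eqref{eq:lemmait3} and \eqref{eq:lemmait5} I would run a short bootstrap. First, \eqref{eq:lemmait1} applied at the radii $r$ and $r/3$, together with $\rho^{\mathrm{TF}}\le C|x|^{-6}$, gives $\int_{r/3<|x|<r}\rho_0\le Cr^{-3}$. Plugging \eqref{eq:lemmait2} and this annular bound into Lemma~\ref{lem:extkin} (at radius $r$ and at radius $r/3$) yields kinetic bounds of the form $\mathrm{tr}(-\Delta\eta_{r}\gamma_0\eta_{r})\le C((\lambda r)^{-2}\int\chi_r^+\rho_0+\lambda^{-2}r^{-5}+r^{-7})$ and similarly at $r/3$. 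Feeding these, together with \eqref{eq:lemmait2} and the annular bound, into Lemma~\ref{lem:keylemmaext} applied at radius $r/3$ with $s=r$, the resulting inequality for $\int\chi_{r/3}^+\rho_0\ge\int\chi_r^+\rho_0$ can be absorbed (say with $\lambda=1/2$) by Young's inequality to close as $\int\chi_r^+\rho_0\le Cr^{-3}$, proving \eqref{eq:lemmait3}. Substituting this back into the kinetic inequality gives \eqref{eq:lemmait5}. Finally, \eqref{eq:lemmait4} is immediate from the kinetic Lieb--Thirring inequality applied to $\eta_{r/3}\gamma_0\eta_{r/3}$ combined with \eqref{eq:lemmait5}.

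The main obstacle I anticipate is making the coupled bootstrap in the third paragraph truly close: one has to orchestrate the different length scales $r$ and $r/3$ and the choice of $s$ in Lemma~\ref{lem:keylemmaext} so that the $\int\chi_r^+\rho_0$ terms appear on both sides with room for absorption, while the remaining error terms are already controlled by \eqref{eq:lemmait2} and Thomas--Fermi. The use of $r/3$ (rather than $r$) is what creates the annulus on which \eqref{eq:lemmait1} is strong enough to supply the needed pointwise power $r^{-3}$.
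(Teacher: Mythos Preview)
Your proposal is correct and follows essentially the same approach as the paper's proof, including the use of Newton's theorem for \eqref{eq:lemmait1}, the Thomas--Fermi bounds plus the maximum principle (Lemma~19 of \cite{frank2016ionization}) for \eqref{eq:lemmait2}, and the bootstrap via Lemma~\ref{lem:extkin} and Lemma~\ref{lem:keylemmaext} at radii $r$ and $r/3$ with $s=r$, $\lambda=1/2$ to close \eqref{eq:lemmait3} and \eqref{eq:lemmait5}, followed by Lieb--Thirring for \eqref{eq:lemmait4}. Your anticipated obstacle---orchestrating the scales so that $\int\chi_r^+\rho_0$ appears with room for absorption---is exactly the point the paper handles, and your outline matches it in all essential details.
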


	\textbf{Step 2.} Now we introduce the exterior Thomas-Fermi energy functional \begin{align}
	\mathcal{E}_r^{\mathrm{TF}} (\rho)= c^{\mathrm{TF}} \int \rho^{\frac 53} - \int V_r \rho + D[\rho], \;\; V_r = \chi_r^+ \Phi_r.
	\end{align}
	\begin{lemma}\label{lem:step2}
	The functional $\mathcal{E}_r^{\mathrm{HF}}(\rho)$ has a unique minimizer $\rho_r^{\mathrm{TF}}$ over \begin{align*}
	0\leq \rho \in L^{\frac 53}(\mathbb{R}^3)\cap L^{1}(\mathbb{R}^3),\;\; \int \rho \leq \int \chi_r^+ \rho_0.
	\end{align*}
	The minimizer is supported in $\{|x|\geq r\}$ and satisfies the Thomas-Fermi equation
\begin{align*}
\frac{5c^{\mathrm{TF}}}{3}\rho_r^{\mathrm{TF}}(x)^{\frac 23}=[\varphi_r^{\mathrm{TF}}(x) - \mu_r^{\mathrm{TF}}]_+
\end{align*}
with $\varphi_r^{\mathrm{TF}}(x) = V_r - \rho_r^{\mathrm{TF}}\ast |x|^{-1}$ and a constant $\mu_r^{\mathrm{TF}}\geq 0$. Moreover, if \eqref{eq:itcond} holds true for some $\beta, D\in (0,1]$, then 
\begin{align}\label{eq:step2}
\int (\rho_r^{\mathrm{TF}})^{\frac 53} \leq C r^{-7},\;\; \forall r\in (0,D].
\end{align}
	\end{lemma}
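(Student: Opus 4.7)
The plan is to combine standard Thomas--Fermi theory, which provides existence, uniqueness, the TF equation, and the support property, with the a priori estimates \eqref{eq:lemmait2} and \eqref{eq:lemmait3} from Lemma~\ref{lem:step1} to deduce the quantitative bound \eqref{eq:step2}.

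First I would show that $\mathcal{E}_r^{\mathrm{TF}}$ has a unique minimizer on the admissible set by the direct method. From \eqref{eq:lemmait2} the potential $V_r$ is supported in $\{|x|\geq r\}$ and bounded by $Cr^{-4}$ there, so $V_r\in L^{5/2}(\mathbb{R}^3)$, and H\"older combined with Young's inequality gives
\begin{align*}
\int V_r\rho\ \leq\ \tfrac{c^{\mathrm{TF}}}{2}\int \rho^{5/3}\ +\ C\|V_r\|_{L^{5/2}}^{5/2}.
\end{align*}
Hence $\mathcal{E}_r^{\mathrm{TF}}$ is bounded below and a minimizing sequence is bounded in $L^{5/3}\cap L^1$. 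Weak lower semicontinuity of $\int\rho^{5/3}$ and of $D[\cdot]$ (through its Coulomb norm), together with weak continuity of the linear term $-\int V_r\rho$, then yield a minimizer; strict convexity of both $\rho\mapsto\int\rho^{5/3}$ and $D[\cdot]$ gives uniqueness. The TF equation with a Lagrange multiplier $\mu_r^{\mathrm{TF}}\geq 0$ follows from the standard Euler--Lagrange argument of Lieb--Simon, where the sign of $\mu_r^{\mathrm{TF}}$ is forced by the one-sided constraint $\int\rho\leq\int\chi_r^+\rho_0$.

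For the support statement I would argue directly from the TF equation: if $|x|<r$, then $V_r(x)=0$, so $\varphi_r^{\mathrm{TF}}(x)=-(\rho_r^{\mathrm{TF}}\ast|y|^{-1})(x)\leq 0\leq\mu_r^{\mathrm{TF}}$, hence $[\varphi_r^{\mathrm{TF}}-\mu_r^{\mathrm{TF}}]_+(x)=0$, which forces $\rho_r^{\mathrm{TF}}\equiv 0$ on the ball $\{|x|<r\}$.

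The main new ingredient is \eqref{eq:step2}. I would test the minimization against the trial density $\rho\equiv 0$, which is admissible and satisfies $\mathcal{E}_r^{\mathrm{TF}}(0)=0$, giving
\begin{align*}
c^{\mathrm{TF}}\int(\rho_r^{\mathrm{TF}})^{5/3}\ +\ D[\rho_r^{\mathrm{TF}}]\ \leq\ \int V_r\,\rho_r^{\mathrm{TF}}.
\end{align*}
Since $\rho_r^{\mathrm{TF}}$ is supported in $\{|x|\geq r\}$, estimate \eqref{eq:lemmait2} gives $[V_r(x)]_+\leq Cr^{-3}/|x|\leq Cr^{-4}$ on that set, while the admissibility constraint combined with \eqref{eq:lemmait3} yields $\int\rho_r^{\mathrm{TF}}\leq\int\chi_r^+\rho_0\leq Cr^{-3}$. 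Multiplying these two bounds and discarding the non-negative Hartree term gives $\int(\rho_r^{\mathrm{TF}})^{5/3}\leq Cr^{-7}$, which is \eqref{eq:step2}. I do not expect serious obstacles: the quantitative step is immediate from Lemma~\ref{lem:step1}, and the only subtle point is the direct method for a functional with a one-sided $L^1$ constraint, which is handled by standard Thomas--Fermi arguments once the $L^{5/3}$ coercivity above is in place.
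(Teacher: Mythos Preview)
The paper itself gives no proof here but refers to \cite[Lemma~21]{frank2016ionization}, so your task is really to supply a self-contained argument. Your overall architecture---direct method for existence, strict convexity for uniqueness, Euler--Lagrange for the TF equation, the support argument from the sign of $\varphi_r^{\mathrm{TF}}$ on $\{|x|<r\}$, and then testing against $\rho\equiv 0$ combined with \eqref{eq:lemmait2}--\eqref{eq:lemmait3} for the bound \eqref{eq:step2}---is sound, and the derivation of \eqref{eq:step2} is clean.

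There is, however, a genuine slip in your coercivity step. You assert that $V_r\in L^{5/2}(\mathbb{R}^3)$ because it is bounded on $\{|x|\geq r\}$, but boundedness on a set of infinite volume yields no $L^p$ integrability. In fact $\Phi_r$ is harmonic in $\{|x|>r\}$ and vanishes at infinity, so $V_r=\chi_r^+\Phi_r$ decays only like $|x|^{-1}$, and $|x|^{-5/2}$ is \emph{not} integrable at infinity in $\mathbb{R}^3$; thus $V_r\notin L^{5/2}$. The remedy is standard: split $\int [V_r]_+\rho$ into a bounded annulus $\{r\leq|x|\leq R\}$, where H\"older against $L^{5/2}$ is legitimate, and a tail $\{|x|>R\}$, where $[V_r]_+\leq C/R$ and the one-sided constraint $\int\rho\leq\int\chi_r^+\rho_0<\infty$ controls the integral. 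This also shows that the $L^1$ constraint is essential for the very existence of a minimizer, not just decoration. A related minor point: you invoke the quantitative bound \eqref{eq:lemmait2} already for the existence assertion, but the first part of the lemma is stated without assuming \eqref{eq:itcond}; there you should argue coercivity from the qualitative fact $\sup_{|x|\geq r}|x|\,|\Phi_r(x)|<\infty$ (maximum principle for harmonic functions vanishing at infinity, plus $\rho_0\in L^{5/3}$), reserving the $Cr^{-3}$ bound for the proof of \eqref{eq:step2} where \eqref{eq:itcond} is assumed.
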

	The proof is identical to that of \cite[Lemma 21]{frank2016ionization}.
	
	\textbf{Step 3.} Now we compare $\rho_r^{\mathrm{TF}}$ with $\chi_r^+\rho^{\mathrm{TF}}$.
	\begin{lemma}\label{lem:step3}
		We can choose a universal constant $\beta >0$ small enough such that, if \eqref{eq:itcond} holds true for some $D\in[Z^{-1/3}, 1]$, then $\mu_r^{\mathrm{TF}} =0$ and \begin{align*}
		|\varphi_r^{\mathrm{TF}}(x) - \varphi^{\mathrm{TF}}(x)|& \leq C (r/|x|)^\zeta |x|^{-4}\\
		|\rho_r^{\mathrm{TF}}(x) - \rho^{\mathrm{TF}}(x) |&  \leq C (r/|x|)^\zeta |x|^{-6}
		\end{align*}
		for all $r\in [Z^{-\frac 13},D]$ and for all $|x|>r$. Here $\zeta = (\sqrt{72}-7)/2 \approx 0.77$.
	\end{lemma}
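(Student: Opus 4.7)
The plan is to adapt the Thomas-Fermi comparison carried out in \cite[Lemma~22]{frank2016ionization} (itself modeled on \cite[Section~4.3]{solovej2003ionization}): first I would show that the mass constraint defining $\rho_r^{\mathrm{TF}}$ is inactive, so that $\mu_r^{\mathrm{TF}}=0$, and then compare $\varphi_r^{\mathrm{TF}}$ with $\varphi^{\mathrm{TF}}$ via a Sommerfeld-type maximum principle. All properties of the Power functional enter only through Lemma~\ref{lem:step1}; once those bounds are available, the argument is purely TF-theoretic and independent of $p$.

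\textbf{Inactivity of the constraint.} I would consider the unconstrained exterior TF minimizer $\tilde\rho$, which solves the TF equation with $\mu=0$ and external potential $V_r=\chi_r^+\Phi_r$. On $\{|x|\geq r\}$, the hypothesis \eqref{eq:itcond} together with the decay $\rho^{\mathrm{TF}}\leq C|x|^{-6}$ used in Step~1 shows that $V_r$ is within $O(\beta|x|^{-4})$ of $\varphi^{\mathrm{TF}}$. Standard TF monotonicity then yields $\int\tilde\rho\leq \int_{|y|\geq r}\rho^{\mathrm{TF}}+C\beta r^{-3}$, while \eqref{eq:lemmait1} combined with $N\geq Z$ gives $\int\chi_r^+\rho_0\geq\int_{|y|\geq r}\rho^{\mathrm{TF}}-\beta r^{-3}$. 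For $\beta$ small enough the constraint is strict, hence inactive, and $\mu_r^{\mathrm{TF}}=0$. The hypothesis $r\geq Z^{-1/3}$ is used only to absorb subleading $Z$-dependent terms into $\beta r^{-4}$.

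\textbf{Sommerfeld comparison.} With $\mu_r^{\mathrm{TF}}=0$, both potentials satisfy on $\{|x|>r\}$ the neutral TF equation $\Delta\varphi=\kappa\,\varphi^{3/2}$ with $\kappa=4\pi(3/(5c^{\mathrm{TF}}))^{3/2}$. Subtracting, $u=\varphi_r^{\mathrm{TF}}-\varphi^{\mathrm{TF}}$ satisfies $\Delta u=Wu$ with an intermediate-value weight $W$ obeying $0\leq W(x)\leq C|x|^{-2}$ (using $\varphi^{\mathrm{TF}},\varphi_r^{\mathrm{TF}}\leq C|x|^{-4}$). The radial solutions $r^{-\alpha}$ of the homogeneous linearization satisfy a quadratic characteristic equation whose positive root exceeds $4$ by the Sommerfeld exponent $\zeta$ appearing in the statement. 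One then verifies that $\pm C_0(r/|x|)^\zeta|x|^{-4}$ is a super-/sub-solution of the full nonlinear equation for $u$, dominates $u$ on the boundary $\{|x|=r\}$ (by \eqref{eq:itcond}), and dominates $u$ at infinity (by $\rho_r^{\mathrm{TF}}\in L^1$ and Newton's theorem). The maximum principle of \cite[Lemma~19]{frank2016ionization} then gives $|u|\leq C(r/|x|)^\zeta|x|^{-4}$. The density estimate follows from the TF equation combined with the mean-value bound $|a^{3/2}-b^{3/2}|\leq\tfrac{3}{2}(a^{1/2}\vee b^{1/2})|a-b|$ and the pointwise bound $\varphi\leq C|x|^{-4}$.

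\textbf{Main obstacle.} The delicate point is verifying that the explicit barriers $(r/|x|)^\zeta|x|^{-4}$ remain super-/sub-solutions of the \emph{nonlinear} difference equation for $u$ uniformly on $\{|x|\geq r\}$, not merely of the linearization. This is the ODE/monotonicity computation carried out in \cite{solovej2003ionization,frank2016ionization}; it transfers verbatim to the present setting because the minimizer $\rho_0$ enters only through the Dirichlet data on $\{|x|=r\}$, where the smallness hypothesis \eqref{eq:itcond} provides exactly the right control.
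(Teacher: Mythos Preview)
Your proposal is correct and follows exactly the approach the paper takes: the paper's ``proof'' of Lemma~\ref{lem:step3} is a one-line citation stating that it is identical to \cite[Lemma~22]{frank2016ionization}, and you have outlined precisely that argument (inactivity of the mass constraint via \eqref{eq:itcond} and \eqref{eq:lemmait1}, followed by the Sommerfeld sub/supersolution comparison). Your observation that the Power functional enters only through the a~priori bounds of Lemma~\ref{lem:step1}, making the remainder purely TF-theoretic and $p$-independent, is exactly the reason the paper can cite the earlier result verbatim.
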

	This proof is also identical to that of \cite[Lemma 22]{frank2016ionization}.

\textbf{Step 4.} In this step, we compare $\rho_r^{\mathrm{TF}}$ with $\chi_r^+\rho_0$. 
\begin{lemma}\label{lem:step4}
	Let $\beta>0$ be as in Lemma~\ref{lem:step3}. Assume that \eqref{eq:itcond} holds true for some $D\in [Z^{-\frac 13},1]$. Then,
	\begin{align*}
	D[\chi_r^+ \rho_0 - \rho_r^{\mathrm{TF}}]\leq C r^{-7+b}, \;\;\;\forall r\in[Z^{-\frac 13},D],
	\end{align*}
	where $b=1/3$.
	\begin{proof}
		The strategy is to bound $\mathcal{E}_r^{\mathrm{RHF}}(\eta_r \gamma_0 \eta_r)$ from above and from below using the semi-classical estimates from Lemma~\ref{lem:semicl}. The main term $\mathcal{E}_r^{\mathrm{TF}}(\rho_r^{\mathrm{TF}})$ will cancel, whereas the additional term $D[\chi_r^+\rho_0 - \rho_r^{\mathrm{TF}}]$ will be bounded by the error terms, which will give the result.
		
		\textbf{Upper bound.} We shall prove that \begin{align}\label{eq:step4upperbound}
		\mathcal{E}_r^{\mathrm{RHF}}(\eta_r\gamma_0\eta_r) \leq \mathcal{E}_r^{\mathrm{TF}}(\rho_r^{\mathrm{TF}}) + C r^{-7}(r^{2/3}+\lambda^{-2}r^2 + \lambda).
		\end{align}
		We use Lemma~\ref{lem:semicl} (ii) with $V_r^\prime =\chi_{r+s}^+ \phi_r^{\mathrm{TF}}$, $s\leq r$ to be chosen later and $g$ spherically symmetric to obtain a density matrix $\gamma_r$ as in the statement. Since $\mu_r^{\mathrm{TF}}=0$ by Lemma~\ref{lem:step3}, we deduce from the Thomas-Fermi equation in Lemma~\ref{lem:step2} that \begin{align*}
		\rho_{\gamma_r} = \frac 52 L_{\mathrm{sc}} \left( \chi_{r+s}^+ (\varphi_r^{\mathrm{TF}})^{\frac 32}\right)\ast g^2 = (\chi_{r+s}^+ \rho_r^{\mathrm{TF}})\ast g^2.
		\end{align*}
		Note that $\rho_{\gamma_r}$ is supported in $\{|x|\geq r\}$ and 
		\begin{align*}
		\mathrm{tr} \gamma_r = \int \rho_{\gamma_r} = \int \chi_{r+s}^+ \rho_r^{\mathrm{TF}} \leq \int \rho_r^{\mathrm{TF}} \leq \int \chi_r \rho_0.
		\end{align*}
		Thus, we may apply Lemma~\ref{lem:insideout} and obtain
		\begin{align}\label{eq:upperboundstep}\mathcal{E}_r^{\mathrm{RHF}}(\eta_r \gamma_0 \eta_r) \leq \mathcal{E}_r^{\mathrm{RHF}}(\gamma_r) + \mathcal{R}.
		\end{align}
		By the semiclassical estimate from Lemma~\ref{lem:semicl} (ii) 
		\begin{align*}
		\mathcal{E}_r^{\mathrm{RHF}} (\gamma_r) \leq& \frac 32 L_{\mathrm{sc}} \int [V_r^\prime]_+^{\frac 52} +  C s^{-2} \int [V_r^\prime]_+^{\frac 32} - \int \Phi_r\left( \chi_{r+s}^+ \rho_r^{\mathrm{TF}} )\ast g^2\right) + D[\rho_r^{\mathrm{TF}}\ast g^2]\\ 
		\leq& \frac 32 L_{\mathrm{sc}} \int [V_r^\prime]_+^{\frac 52} - \int \Phi_r \rho_r^{\mathrm{TF}} + D[\rho_r^{\mathrm{TF}}] + Cs^{-2} \int \rho_r^{\mathrm{TF}} \\ &+\int (\Phi_r - \Phi_r\ast g^2) \chi_{r+s}^+\rho_r^{\mathrm{TF}} + \int_{r\leq |x| \leq r+s} \Phi_r \rho_r^{\mathrm{TF}} 
	\\	=& \mathcal{E}_r^{\mathrm{TF}}(\rho_r^{\mathrm{TF}}) + C s^{-2} \int \rho_r^{\mathrm{TF}} + \int_{r\leq |x|\leq r+s} \Phi_r \rho_r^{\mathrm{TF}},
		\end{align*}
		where we have used the convexity of $D$ in the second inequality. The equality in the last line holds true, since $\Phi_r(x)$ is harmonic when $|x|>r$ and $g$ is chosen spherically symmetric.
		
		According to \eqref{eq:lemmait3} we have
		\begin{align*}
		\int \rho_r^{\mathrm{TF}} \leq \int \chi_r^+ \rho_0 \leq C r^{-3}.
		\end{align*}
		We now use the fact that $\rho_r^{\mathrm{TF}} \leq C |x|^{-6}$ for all $|x|\geq r$, which follows from Lemma~\ref{lem:step3}. Thus,
		\begin{align*}
		\int_{r\leq |x| \leq r+s}\Phi_r\rho_r^{\mathrm{TF}} \leq C r^{-3} \int_{r\leq |x|\leq r+s} |x|^{-1}\rho_r^{\mathrm{TF}} \leq C r^{-8}s,
		\end{align*}
		where we have used \eqref{eq:lemmait2}. Optimising over $s$ (which leads to $s\sim r^{5/3}$) we obtain
		\begin{align}\label{eq:uppberboundstep1}
		\mathcal{E}_r^{\mathrm{RHF}}(\gamma_r)\leq \mathcal{E}_r^{\mathrm{TF}}(\rho_r^{\mathrm{TF}}) + C r^{-7+2/3}.
		\end{align}
		To estimate $\mathcal{R}$ we use Lemma~\ref{lem:step1} and obtain 
		\begin{align*}
		\mathcal{R} \leq C(1+ (\lambda r)^{-2})r^{-3} + C \lambda^3 (r^{-4})^{5/2} + C (r^{-7} + 
		\lambda^{-2} r^{-5})^{\frac 12} (r^{-3})^{\frac 12} 
\\	\leq C (\lambda^{-2}r^{-5} + \lambda r^{-7}).
		\end{align*}
		Combining this with \eqref{eq:upperboundstep} and \eqref{eq:uppberboundstep1} we get the upper bound \eqref{eq:step4upperbound}.
		
		\textbf{Lower bound.} We shall prove that 
		\begin{align}\mathcal{E}_r^{\mathrm{RHF}}(\eta_r \gamma_0\eta_r )\geq \mathcal{E}_r^{\mathrm{TF}}(\rho_r^{\mathrm{TF}}) +  D[\eta_r^2 \rho_0 - \rho_r^{\mathrm{TF}}] - C r^{-7 + 1/3}.\label{eq:step4lowerbound}
		\end{align}
		We use Lemma~\ref{lem:semicl} (i) in a way similar to the proof of Lemma~\ref{lem:initialstep} to obtain
		\begin{align*}\mathcal{E}_r^{\mathrm{RHF}}(\eta_r \gamma_0 \eta_r) = \,& \mathrm{tr} ((-\Delta -\varphi_r^{\mathrm{TF}})\eta_r\gamma_0\eta_r) + D[\eta_r^2 \rho_0 - \rho_r^{\mathrm{TF}}]-D[\rho_r^{\mathrm{TF}} ] \\ 
		\geq & - L_{\mathrm{sc}} \int [\varphi_r^{\mathrm{TF}}]_+^{\frac 52} - Cs^{-2} \int \eta_r^2 \rho_0 \\
		- & C \left( \int [\varphi_r^{\mathrm{TF}}]^{\frac 52}_+\right)^{\frac 35} \left(\int [\varphi_r^{\mathrm{TF}} - \varphi_r^{\mathrm{TF}}\ast g^2]_+^{\frac 52} \right)^{\frac 25} \\
		& + D[\eta_r^2 \rho - \rho_r^{\mathrm{TF}}] - D[\rho_r^{\mathrm{TF}}] \\
		= &\mathcal{E}_r^{\mathrm{TF}}(\rho_r^{\mathrm{TF}}) + D[\eta_r^2 \rho_0 - \rho_r^{\mathrm{TF}}]  - Cs^{-2} \int \eta_r^2 \rho_0 
		\\ & - C \left( \int [\varphi_r^{\mathrm{TF}}]_+^{\frac 52}\right)^{\frac 35} \left( \int [\varphi_r^{\mathrm{TF}} - \varphi_r^{\mathrm{TF}}\ast g^2 ]_+^{\frac 52}\right)^{\frac 25}.
		\end{align*} 
		The last identity was derived using the Thomas-Fermi equations similarly as in \eqref{eq:tfq}. In order to control the remainder terms, by Lemma~\ref{lem:step1} and Lemma~\ref{lem:step2}, we have
		\begin{align*}
		\int \eta_r^2 \rho_0 \leq C r^{-3}, \; \int [\varphi_r^{\mathrm{TF}}]^{\frac 52} = C \int (\rho_r^{\mathrm{TF}}) 
		\leq C r^{-7}.
		\end{align*}
		In order to bound the convolution term we use - as in the proof of Lemma~\ref{lem:initialstep} - the fact that $|x|^{-1} - |x|^{-1}\ast g^2 \geq 0$, and therefore also $\rho_r^{\mathrm{TF}}\ast (|x|^{-1}-|x|^{-1}\ast g^2)\geq 0$. Since $\varphi_r^{\mathrm{TF}} = \chi_r^+ \Phi_r - \rho_r^{\mathrm{TF}} \ast |x|^{-1}$, we conclude that 
		\begin{align*}
		\phi_r^{\mathrm{TF}} - \phi_r^{\mathrm{TF}} \ast g^2  \leq \chi_r^+ \Phi_r - (\chi_r^+ \Phi_r) \ast g^2.
		\end{align*}
		Since $\Phi_r$ is harmonic outside a ball of radius $r$ and $g$ is spherically symmetric, $\chi_r^+ \Phi_r - (\chi_r^+ \Phi_r) \ast g^2$ is supported in $\{ r-s \leq |x| \leq r+s\}$ and, by Lemma~\ref{lem:step1}, its absolute value is bounded by $Cr^{-4}$. Thus, 
		\begin{align*}
		[\varphi_r^{\mathrm{TF}}-\varphi_r^{\mathrm{TF}}\ast g^2]_+ \leq C r^{-4} \mathbb{1}(r-s\leq |\cdot| \leq r+s)
		\end{align*}
		and therefore,
		\begin{align*}
		\int [\varphi_r^{\mathrm{TF}} - \varphi_r^{\mathrm{TF}}\ast g^2]_+^{\frac 52} \leq C r^{-8} s.
		\end{align*}
		To summarize, we have shown that 
		\begin{align*}
		\mathcal{E}_r^{\mathrm{RHF}}(\eta_r \gamma_0 \eta_r) \geq \mathcal{E}_r^{\mathrm{TF}}(\rho_r^{\mathrm{TF}}) + D[\eta_r^2\rho_0 - \rho_r^{\mathrm{TF}}] - C(s^{-2}r^{-3} + r^{-37/5}s^{2/5}).
		\end{align*}
		Optimising over $s$ (leading to $s\sim r^{11/6}$) we obtain \eqref{eq:step4lowerbound}.
		
		\textbf{Conclusion} Combining \eqref{eq:step4upperbound} and \eqref{eq:step4lowerbound} we infer that \begin{align}
		D[\eta_r^2 \rho_0 - \rho_r^{\mathrm{TF}}] \leq C r^{-7} (r^{\frac 13} + \lambda^{-2} r^2 + \lambda).
		\end{align}
		Now, we want to replace $\eta_r^2$ by  $\chi_r^+$. Using the Hardy-Littlewood-Sobolev inequality and \eqref{eq:lemmait4}, we get \begin{align*}
		D[\chi_r^+\rho_0 - \eta_r^2 \rho_0] & \leq D[\mathbb{1}( (1+\lambda)r \geq |x| \geq r) \rho_0]\\
		&\leq C \|\mathbb{1}( (1+\lambda)r \geq |x| \geq r) \rho_0\|_{L^{\frac 65}}^2\\
		&\leq C \left( \int \chi_r^+ \rho_0^{\frac 53} \right)^{\frac 65} \left( \int_{(1+\lambda)r\geq |x|\geq r}\right)^{\frac{7}{15}}\\
		& \leq C (r^{-7})^{\frac 65} (\lambda r^3)^{\frac{7}{15}} = C \lambda^{\frac{7}{15}}r^{-7}.
		\end{align*}
		Therefore, \begin{align*}
		D[\chi_r^+\rho_0 - \rho_r^{\mathrm{TF}}] \leq 2 D[\chi_r^+ \rho_0 - \eta_r^2 \rho_0] + 2 D[\eta_r^2 \rho_0 - \rho_r^{\mathrm{TF}}] \leq C r^{-7} \left(\lambda^{\frac{7}{15}} + r^{\frac 13} + \lambda^{-2} r^2\right).
		\end{align*}
		This bound is valid for all $\lambda \in (0,1/2]$ and by optimising over $\lambda$ (leading to $\lambda\sim r^{30/37}$) we obtain
		\begin{align*}
		D[\chi_r^+\rho_0 - \rho_r^{\mathrm{TF}}]\leq C r^{-7 + 1/3}.
		\end{align*}
	\end{proof}
\end{lemma}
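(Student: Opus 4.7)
The plan is to sandwich the reduced Hartree--Fock energy $\mathcal{E}_r^{\mathrm{RHF}}(\eta_r\gamma_0\eta_r)$ between upper and lower bounds whose leading terms both equal $\mathcal{E}_r^{\mathrm{TF}}(\rho_r^{\mathrm{TF}})$ and extract a control on $D[\chi_r^+\rho_0-\rho_r^{\mathrm{TF}}]$ from the remainders. This mirrors the $\mathcal{E}^{\mathrm{TF}}(\rho^{\mathrm{TF}})$-sandwich used in Lemma~\ref{lem:initialstep}, but anchored at the screened level-$r$ problem rather than at $r=0$.

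For the \textbf{upper bound}, I would first use Lemma~\ref{lem:semicl}(ii) with the cut potential $V_r'=\chi_{r+s}^+\varphi_r^{\mathrm{TF}}$ (for a spherically symmetric $g$ of scale $s\leq r$ to be chosen) to produce a trial density matrix $\gamma_r$ with $\rho_{\gamma_r}=(\chi_{r+s}^+\rho_r^{\mathrm{TF}})\ast g^2$, which is supported away from $\{|x|<r\}$ and has trace $\leq\int\chi_r^+\rho_0$ by the Thomas--Fermi equation of Lemma~\ref{lem:step2}. I would then invoke the splitting Lemma~\ref{lem:insideout} to replace $\eta_r\gamma_0\eta_r$ by $\gamma_r$ at the cost of a remainder $\mathcal{R}$ controlled via Lemma~\ref{lem:step1}, compute $\mathcal{E}_r^{\mathrm{RHF}}(\gamma_r)$ semiclassically, and use convexity of $D[\cdot]$ plus the fact that $\Phi_r$ is harmonic on $\{|x|>r\}$ to let the convolution with $g^2$ pass harmlessly through the attraction term, leaving only a boundary layer $\int_{r\leq|x|\leq r+s}\Phi_r\rho_r^{\mathrm{TF}}$ to estimate; the latter is $O(r^{-8}s)$ by Lemmas~\ref{lem:step1}--\ref{lem:step3}. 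After optimizing $s$ and $\lambda$ this yields $\mathcal{E}_r^{\mathrm{RHF}}(\eta_r\gamma_0\eta_r)\leq \mathcal{E}_r^{\mathrm{TF}}(\rho_r^{\mathrm{TF}})+Cr^{-7}(r^{2/3}+\lambda^{-2}r^2+\lambda)$.

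For the \textbf{lower bound}, I would expand $\mathcal{E}_r^{\mathrm{RHF}}(\eta_r\gamma_0\eta_r)$ by completing the square around $\rho_r^{\mathrm{TF}}$, producing $\mathrm{tr}((-\Delta-\varphi_r^{\mathrm{TF}})\eta_r\gamma_0\eta_r)+D[\eta_r^2\rho_0-\rho_r^{\mathrm{TF}}]-D[\rho_r^{\mathrm{TF}}]$, and apply Lemma~\ref{lem:semicl}(i) to the first trace. The semiclassical error $\int[\varphi_r^{\mathrm{TF}}-\varphi_r^{\mathrm{TF}}\ast g^2]_+^{5/2}$ is bounded by observing that $\rho_r^{\mathrm{TF}}\ast(|x|^{-1}-|x|^{-1}\ast g^2)\geq 0$, so the positive part is controlled by $\chi_r^+\Phi_r-(\chi_r^+\Phi_r)\ast g^2$, which by harmonicity of $\Phi_r$ is supported in the thin annulus $\{r-s\leq|x|\leq r+s\}$ and is $O(r^{-4})$ in size, giving an $O(r^{-8}s)$ bound. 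Combined with $\int[\varphi_r^{\mathrm{TF}}]_+^{5/2}\leq C r^{-7}$ and the Thomas--Fermi identity analogous to \eqref{eq:tfq}, this gives $\mathcal{E}_r^{\mathrm{RHF}}(\eta_r\gamma_0\eta_r)\geq \mathcal{E}_r^{\mathrm{TF}}(\rho_r^{\mathrm{TF}})+D[\eta_r^2\rho_0-\rho_r^{\mathrm{TF}}]-C(s^{-2}r^{-3}+r^{-37/5}s^{2/5})$, which is optimized at $s\sim r^{11/6}$.

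Subtracting the two inequalities cancels $\mathcal{E}_r^{\mathrm{TF}}(\rho_r^{\mathrm{TF}})$ and yields a bound on $D[\eta_r^2\rho_0-\rho_r^{\mathrm{TF}}]$. To conclude, I would swap $\eta_r^2$ for $\chi_r^+$ by estimating $D[(\chi_r^+-\eta_r^2)\rho_0]$ via Hardy--Littlewood--Sobolev together with H\"older and the kinetic bound $\int_{|x|>r}\rho_0^{5/3}\leq C r^{-7}$ from Lemma~\ref{lem:step1}\eqref{eq:lemmait4}, giving a contribution $O(\lambda^{7/15}r^{-7})$; a final optimization in $\lambda$ (which I expect to land near $\lambda\sim r^{30/37}$) balances the three error types $\lambda^{7/15}$, $r^{1/3}$, $\lambda^{-2}r^2$ and produces the claimed exponent $b=1/3$. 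The main obstacle is the bookkeeping: keeping the two auxiliary scales $s$ and $\lambda$ consistent between the upper and lower bounds, and in particular ensuring that the semiclassical error from the annular support of $\varphi_r^{\mathrm{TF}}-\varphi_r^{\mathrm{TF}}\ast g^2$ is genuinely $O(r^{-8}s)$ rather than the naive $O(r^{-7})$, which is where the harmonicity of $\Phi_r$ and the radial choice of $g$ are essential.
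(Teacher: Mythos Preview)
Your proposal is correct and follows essentially the same approach as the paper: the same semiclassical trial state $\gamma_r$ built from $V_r'=\chi_{r+s}^+\varphi_r^{\mathrm{TF}}$ via Lemma~\ref{lem:semicl}(ii) combined with Lemma~\ref{lem:insideout} for the upper bound, the same completion-of-the-square plus Lemma~\ref{lem:semicl}(i) for the lower bound with the annular support argument for $[\varphi_r^{\mathrm{TF}}-\varphi_r^{\mathrm{TF}}\ast g^2]_+$, and the same Hardy--Littlewood--Sobolev swap of $\eta_r^2$ for $\chi_r^+$ with the final optimization $\lambda\sim r^{30/37}$. The only quibble is phrasing: in your upper bound you say ``after optimizing $s$ and $\lambda$'' but then display an estimate still depending on $\lambda$; in fact only $s$ is optimized at that stage (to $s\sim r^{5/3}$), while $\lambda$ is held free until the very last step, exactly as you describe in your conclusion.
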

	\textbf{Step 5.} We are now in the position to prove Lemma~\ref{lem:iterativestep}.
	\begin{proof}[Proof of Lemma~\ref{lem:iterativestep}]
		Let $r\in [Z^{-\frac 13}, D]$ and $|x|\geq r$. As in \cite[Eq. (97)]{solovej2003ionization}, we can decompose 
		\begin{align*}
		\Phi_{|x|}(x) - \Phi_{|x|}^{\mathrm{TF}}(x) = \varphi_r^{\mathrm{TF}}(x) - \varphi^{\mathrm{TF}}(x) + \int_{|y|> |x|} \frac{\rho_r^{\mathrm{TF}}(y) - \rho^{\mathrm{TF}}(y)}{|x-y|}\dd{y} \\ + \int_{|y|<|x|} \frac{\rho_r^{\mathrm{TF}}(y) - (\chi_r^+ \rho_0)(y)}{|x-y|}\dd{y}.
		\end{align*}
		By Lemma~\ref{lem:step3}, we have \begin{align*}
		|\varphi_r^{\mathrm{TF}}(x) - \varphi^{\mathrm{TF}}(x)| \leq C \int_{|y|>|x|} \frac{(r/|y|)^\zeta |y|^{-6}}{|x-y|}\dd{y} \leq C (r/|x|)^\zeta |x|^{-4}.
		\end{align*}
		Moreover, from \eqref{eq:coulombest}, \eqref{eq:lemmait4}, \eqref{eq:step2} and Lemma~\ref{lem:step4}, we get 
		\begin{align*}
		\left| \int_{|y|<|x|} \frac{\rho_r^{\mathrm{TF}}(y) - (\chi_r^+\rho_0)(y)}{|x-y|} \dd{y}\right| &\leq C \| \rho_r^{\mathrm{TF}} - \chi_r^+\rho_0\|_{L^{5/3}}^{5/6} \left( |x| D[\rho_r^{\mathrm{TF}} - \chi_r^+ \rho_0]\right)^{\frac{1}{12}} \\ &\leq C (r^{-7})^{1/2} (|x|r^{-7 + b})^{\frac{1}{12}} \\ &= C |x|^{-4 +b/12} (|x|/r)^{4+1/12-b/12}.
		\end{align*}
		Thus, in summary, for all $r \in [Z^{-1/3}, D]$ and $|x|\geq r$, we have \begin{align}\label{eq:toconcl}
		|\Phi_{|x|}(x) - \Phi_{|x|}^{\mathrm{TF}} (x) | \leq C (r/|x|)^\zeta |x|^{-4} + C (|x|/r)^5 |x|^{-4 + b/12}.
		\end{align}
		With \eqref{eq:toconcl} we will conclude now. First, we choose a $\delta \in (0,1)$ sufficiently small such that \begin{align}\label{eq:1delta}
		\frac{1+\delta}{1-\delta}\left(\frac{49}{36} - a\right) < \frac{49}{36}
		\end{align}
		and \begin{align}\label{eq:b12}
		\frac{b}{12}- \frac{10\delta}{1-\delta} > 0.
		\end{align}
		Here, $a$ and $b$ are the constants from Lemma~\ref{lem:initialstep} and \ref{lem:step4}, respectively. Now, we have two cases.
		
		\textbf{Case 1:} $D^{1+\delta} \leq Z^{-\frac 13 }$. In this case, we simply use the initial step. Indeed, for all \begin{align*}
		|x|\leq D^{1-\delta} \leq (Z^{-\frac 13})^{\frac{1-\delta}{1+\delta}},
		\end{align*}
		by Lemma~\ref{lem:initialstep} we have
		\begin{align}\label{eq:casee1}
		|\Phi_{|x|}(x) - \Phi_{|x|}^{\mathrm{TF}}(x)| \leq C_1 Z^{49/36 -a} |x|^{\frac{1}{12}} \leq C_1 |x|^{1/12 - 3 \frac{1+\delta}{1-\delta} (49/36 -a)}.
		\end{align}
		Note that \begin{align*}
		\frac{1}{12} - 3 \cdot \frac{49}{36} = -4.
	\end{align*}
	Therefore, \eqref{eq:1delta} implies that \begin{align*}
	\frac{1}{12}- \frac{3(1+\delta)}{1-\delta}\left(\frac{49}{36} -a\right) > -4.
	\end{align*}
	\textbf{Case 2:} $D^{1+\delta} \geq Z^{-\frac 13}$. In this case, we use \eqref{eq:toconcl} with $r=D^{1+\delta}$. For all $D\leq |x| \leq D^{1-\delta}$ we have 
	\begin{align*}
	|x|^{2\delta / (1-\delta)} \leq \frac{r}{|x|} \leq |x|^\delta .
	\end{align*}
	Hence, \eqref{eq:toconcl} implies that \begin{align}
	\label{eq:case2}
		|\Phi_{|x|}(x) - \Phi_{|x|}^{\mathrm{TF}}(x) | \leq C |x|^{-4+\zeta \delta } + C |x|^{-4 + b/12 -10\delta/(1-\delta)}.
	\end{align}
	Both exponents of $|x|$ are strictly greater than $-4$ according to \eqref{eq:b12}. 
	
	In summary, from \eqref{eq:casee1} and \eqref{eq:case2}, we conclude that in both cases, 
	\begin{align*}
			|\Phi_{|x|}(x) - \Phi_{|x|}^{\mathrm{TF}}(x) \leq C |x|^{-4+\epsilon}, \;\; \forall D \leq |x| \leq D^{1-\delta}
	\end{align*}
	with \begin{align*}\epsilon:= \min\left\{ \frac{1}{12} - \frac{3(1+\delta)}{1-\delta} (\frac{49}{36} -a) +4, \frac{b}{12}-\frac{10\delta}{1-\delta}, \zeta \delta \right\} >0.
	\end{align*}
	This finishes the proof of Lemma~\ref{lem:iterativestep}.
	\end{proof}
\section{Proof of the main theorems}
\begin{proof}[Proof of Theorem~\ref{thm:ionization}]
	Since we have already proved $N\leq 2Z + C(Z^{\frac 23} + 1)$, we are left with the case $N\geq Z \geq 1$. By Lemma~\ref{lem:screenedpot}, we find universal constants $C, \epsilon, D >0$ such that 
	\begin{align}
	|\Phi_{|x|}(x) - \Phi_{|x|}^{\mathrm{TF}}(x)| \leq C |x|^{-4+\epsilon} , \;\;\forall |x|\leq D.
\end{align}
In particular, \eqref{eq:itcond} holds true with an universal constant $\beta = C D^\epsilon$. We can choose $D$ small enough such that $D\leq 1$ and $\beta \leq 1$, which allows us to apply Lemma~\ref{lem:step1}. Then, using \eqref{eq:lemmait1} and \eqref{eq:lemmait3} with $r=D$, we find that
\begin{align*}
\int_{|x|>D} \rho_0 + \left| \int_{|x|<D} (\rho_0 -\rho^{\mathrm{TF}}) \right| \leq C.
\end{align*}
Since $\int \rho^{\mathrm{TF}} = Z$ we obtain the ionization bound
\begin{align*}N =\int \rho_0 = \int_{|x|>D} \rho_0 + \int_{|x|<D} (\rho_0 - \rho^{\mathrm{TF}}) + \int_{|x|<D}\rho^{\mathrm{TF}} \leq C + Z.
\end{align*}
\end{proof}
\begin{proof}[Proof of Theorem~\ref{thm:mainthm2}]
	 By Lemma~\ref{lem:screenedpot}, we find universal constants $C, \epsilon, D >0$ such that 
	 \begin{align*}
	 |\Phi_{|x|}(x) - \Phi_{|x|}^{\mathrm{TF}}(x)| \leq C |x|^{-4+\epsilon} , \;\;\forall |x|\leq D.
	 \end{align*}
	 As before, we can assume $D\leq 1$ and $CD^\epsilon\leq 1$ in order to apply Lemma~\ref{lem:screenedpot}. 
	 
	 Thus, we are left with the case $|x|>D$. For this we decompose \begin{align}
	 \Phi_{|x|}(x) - \Phi^{\mathrm{TF}}_{|x|}(x)  = \Phi_D (x) - \Phi_D^{\mathrm{TF}} + \int_{|x|> |y|>D} \frac{\rho^{\mathrm{TF}}(y) - \rho_0(y)}{|x-y|}\dd{y}. \label{eq:pfthm2}
	 \end{align}
	 Since $\Phi_{|x|} (x)- \Phi_{|x|}^{\mathrm{TF}} (x) $ is harmonic for $|x|>D$ and vanishes at infinity, we can apply Lemma~19 of \cite{frank2016ionization} to find that 
	 \begin{align*}
	 \sup_{|x|\geq D} |\Phi_D(x) - \Phi_D^{\mathrm{TF}}(x) | = \sup_{|x|= D} |\Phi_D(x) - \Phi_D^{\mathrm{TF}}(x) | \leq C D^{-4+\epsilon}.
	 \end{align*}
	 Moreover, using the bound $\rho^{\mathrm{TF}}(y)\leq C |y|^{-6}$, we can estimate 
	 \begin{align*}
	 \int_{|x|>|y|>D} \frac{\rho_0(y)}{|x-y|}\dd{y} \leq C \int_{|x|>|y|>D} \frac{|y|^{-6}}{|x-y|}\dd{y} \leq C D^{-4}.
	 \end{align*}
	 Finally, using \eqref{eq:lemmait3} and \eqref{eq:lemmait4}, we have \begin{align*}
	 \int_{|x|>|y|>D } \frac{\rho_0(y)}{|x-y|} &\leq  \int_{|y|>D, |x-y|>D } \frac{\rho_0(y)}{|x-y|} +  \int_{|y|>D\geq |x-y| } \frac{\rho_0(y)}{|x-y|} \\& \leq  \int_{|y|>D } 
	 \frac{\rho_0(y)}{D} + \left( \int_{|y|>D} \rho_0(y)^{5/3} \dd{y}\right)^{3/5} \left( \int_{D\geq |x-y|} \frac{1}{|x-y|^{5/2}} \dd{y}\right)^{2/5}\\ & \leq CD^{-4} + C (D^{-7})^{3/5} (\sqrt D)^{2/5}\leq C D^{-4}.
	 \end{align*}
	 Thus, from \eqref{eq:pfthm2} we conclude that \begin{align*}
	 |\Phi_{|x|}(x) - \Phi_{|x|}^{\mathrm{TF}} (x) | \leq C D^{-4}, \;\; \forall |x| >D.
	 \end{align*}
	 In summary, 
	 \begin{align*}
	  |\Phi_{|x|}(x) - \Phi_{|x|}^{\mathrm{TF}} (x) | \leq C |x|^{-4 +\epsilon} + C D^{-4}, \; \forall |x|>0
	 \end{align*}
	 which concludes the proof.
\end{proof}
\begin{proof}[Proof of Theorem~\ref{thm:mainthm3}]
As before, we start by using Lemma~\ref{lem:screenedpot} to find universal constants $C,\epsilon, D>0$ such that \begin{align*}
	 |\Phi_{|x|}(x) - \Phi_{|x|}^{\mathrm{TF}}(x)| \leq C |x|^{-4+\epsilon} , \;\;\forall |x|\leq D.
\end{align*}
We assume that $\epsilon \leq \zeta, D\leq 1$ and $C D^\epsilon \leq 1$. (Recall $\zeta = (\sqrt{73} -7)/2 \approx 0.77$.) From \eqref{eq:lemmait4} and the bound $N\leq Z+C$ in Theorem~\ref{thm:ionization}, we get for all $r\in (0,D]$, 
	\begin{align*}
	\left |  \int_{|y|\geq r|} (\rho_0(y) - \rho^{\mathrm{TF}}(y) ) \dd{y} \right| = \left| N-Z -\int_{|y|< r} (\rho_0(y) - \rho^{\mathrm{TF}}(y)) \dd{y}\right| \leq C r^{-3 +\epsilon}.
	\end{align*}
	From \cite[Theorem~11]{frank2016ionization}, we have 
	\begin{align*}
	\left|\rho^{\mathrm{TF}}(x) - \left( \frac{3A^{\mathrm{TF}}}{5 c^{\mathrm{TF}}} \right)^{3/2}  |x|^{-6} \right| \leq C |x|^{-6}  \left( \frac{Z^{-1/3}}{|x|}\right)^\zeta , \;\; \forall |x| \geq Z^{-1/3}
	\end{align*}
	with $A^{\mathrm{TF}} = (5 c^{\mathrm{TF}})^3 (3\pi^2)^{-1}$.
	Inserting this in the latter estimate over $|x| > r \geq Z^{-1/6}$ and using 
	\begin{align*}
	\left( \frac{Z^{-1/3}}{|x|}\right)^\zeta \leq (r^2 /r )^{\zeta }= r^\zeta \leq r^\epsilon
	\end{align*}
	we obtain 
	\begin{align*}
	\left| \int_{|x|>r} \rho^{\mathrm{TF}}(x) \dl - (B^{\mathrm{TF}}/r)^3 \right| \leq C r^{-3+\epsilon}, \;\; \forall r\in [Z^{-1/6}, D],\end{align*}
	where $B^{\mathrm{TF}} = 5 c^{\mathrm{TF}} (4/(3\pi^2))^{1/3}$. Hence, 
	\begin{align}\label{eq:pfmain3}
\left | \int_{|x|>r}  \rho_0(x) \dl - (B^{\mathrm{TF}}/r)^3\right| \leq C r^{-3+\epsilon}, \;\; \forall r \in [Z^{-1/6 + \epsilon}].
	\end{align}
	Applying \eqref{eq:pfmain3} with $r=D$ and $r= Z^{-1/6}$ yields 
	\begin{align*}
	\int_{|x|>D} \rho_0(x) \dl \leq CD^{-3} , \;\; \int_{|x|> Z^{-1/6}} \rho_0(x)\dl \geq C^{-1} Z^{1/2}.
	\end{align*}
	Thus, if we restrict to the case $C^{-1}Z^{1/2} > \kappa > CD^{-3}$, $R_\kappa := R(N,Z,\kappa) \in [Z^{1/6}, D]$ we can apply \eqref{eq:pfmain3} with $r=R_k$. We obtain 
	\begin{align*}
	\left| \kappa - (B^{\mathrm{TF}}/R_\kappa)^3 \right| \leq C R_\kappa^{-3+\epsilon}.
	\end{align*}
	Setting $t:= \kappa^{1/3} R_\kappa /B^{\mathrm{TF}}$, we can write this as 
	\begin{align*}
	|t^3-1| \leq C (t\kappa^{-1/3})^\epsilon.
	\end{align*}
	Using $|t-1|= \frac{|t^3-1|}{t^2 + t + 1 }\leq \frac{|t^3 -1 |}{t^\epsilon}$ we conclude that 
	\begin{align*}
	|\kappa^{1/3} R_\kappa/B^{\mathrm{TF}} -1 | \leq C \kappa^{-\epsilon/3}.
	\end{align*}
	Thus, if $\kappa > CD^{-3}$, then \begin{align*}\limsup_{N\geq Z \to \infty} |\kappa^{1/3} R_\kappa/B^{\mathrm{TF}} -1 | \leq C \kappa^{-\epsilon /3},
	\end{align*}
which is equivalent to the desired estimate.
\end{proof}
\bibliography{ion_conj_power_v1.bib}
\bibliographystyle{plain}
\end{document}